\newlist{myitemize}{itemize}{8}
\theoremstyle{definition}
\newtheorem{theorem}{Theorem}[section]
\newtheorem{proposition}[theorem]{Proposition}
\newtheorem{example}[theorem]{Example}
\newtheorem{definition}[theorem]{Definition}
\newtheorem{lemma}[theorem]{Lemma}
\newtheorem{remark}[theorem]{Remark} 
\newtheorem{corollary}[theorem]{Corollary}
\title{Escaping Arrow's Theorem: \\ The Advantage-Standard Model}
 \author{Wesley H. Holliday$^\dagger$ and Mikayla Kelley$^\ddagger$ \\ {\small $\dagger$ University of California, Berkeley and $\ddagger$ University of Chicago} }
 \date{{\normalsize Forthcoming in \textit{Theory and Decision}}.}
\begin{document}

\maketitle

\begin{abstract} There is an extensive literature in social choice theory studying the consequences of weakening the assumptions of Arrow's Impossibility Theorem. Much of this literature suggests that there is no escape from Arrow-style impossibility theorems, while remaining in an ordinal preference setting, unless one drastically violates the Independence of Irrelevant Alternatives (IIA). In this paper, we present a more positive outlook. We propose a model of comparing candidates in elections, which we call the Advantage-Standard (AS) model. The requirement that a collective choice rule (CCR) be representable by the AS model captures a key insight of IIA but is weaker than IIA; yet it is stronger than what is known in the literature as weak IIA (two profiles alike on $x,y$ cannot have opposite strict social preferences on $x$ and $y$). In addition to motivating violations of IIA, the AS model makes intelligible violations of another Arrovian assumption: the negative transitivity of the strict social preference relation $P$. While  previous literature shows that only weakening IIA to weak IIA or only weakening negative transitivity of $P$ to acyclicity still leads to impossibility theorems, we show that jointly weakening IIA to AS representability and weakening negative transitivity of $P$ leads to no such impossibility theorems. Indeed, we show that several appealing CCRs are AS representable, including even transitive~CCRs.\end{abstract}

\tableofcontents

\section{Introduction}

Arrow's Impossibility Theorem (\citealt{Arrow1951,Arrow1963}) states that for social choice problems involving at least three alternatives, any method of transforming an arbitrary collection of rational individual preferences into a rational social preference\footnote{Arrow assumes that a rational social preference, like rational individual preferences, must be transitive and complete.} is subject to the following constraint (see Section \ref{ArrovianSection} for a formal statement): if the method respects unanimous individual strict preferences, as required by the \textit{Pareto principle}, and makes the social preference on two alternatives independent of individual preferences on other alternatives, as required by \textit{Independence of Irrelevant Alternatives} (IIA), then the method is a \textit{dictatorship}. Here dictatorship means that there is a single individual whose strict preference is always copied by the social preference, no matter the preferences of other individuals.

There are broadly two responses to Arrow's  Theorem in social choice theory. First, for many working on welfare economics, the response has been to assume the possibility of interpersonal comparisons of utility. When stated in terms of profiles of individual utility functions rather than individual ordinal preference relations (see \citealt[p.~377]{Sen2017}), Arrow's Theorem  assumes that individual utility is only unique up to positive linear transformation, as in expected utility theory.  If one instead assumes that individual utility functions are interpersonally comparable in such a way that they cannot be independently modified by any positive linear transformations, this opens the way to social welfare functionals such as utilitarianism, Rawlsian leximin, and others, which satisfy the Pareto principle and IIA (see \citealt{d'Aspremont2002} and \citealt{Bossert2004}).

Second, for many working on voting theory, the response has been to keep the ordinal representation of individual preference, which matches real election ballots in which voters rank the candidates,\footnote{Even if interpersonal comparisons of utility were possible, it is not clear that preference intensity ought to play a role in voting---see \citealt[pp.~30-31]{Schwartz1986}.} and to advocate voting rules that drastically violate IIA.\footnote{\label{DomainRestrictions}Social choice theorists have also considered voting rules that satisfy all of Arrow's axioms except for \textit{universal domain}; these rules are undefined on some collections of voter ballots, e.g., as in simple majority rule defined only on collections of voter ballots that produce a transitive majority relation (see \citealt[\S~VII.2]{Arrow1963}). For a survey of results on domain restrictions for voting, see \citealt{Gaertner2001}. While these results illuminate the conditions under which challenges for voting such as majority cycles do not arise, the idea to ``use a voting rule with a restricted domain'' does not provide an escape from the practical problem posed by Arrow. We cannot expect every democratic institution to write into its constitution that if voters' ballots  do not together meet some mathematical condition, then all votes are  rejected and no election result delivered.} Examples of such rules include those of Borda \citeyearpar{Borda1781}, Hare \citeyearpar{Hare1859} (also known as Alternative Vote, Instant Runoff Voting, or Ranked Choice Voting), and Copeland \citeyearpar{Copeland1951}, among many others (see \citealt{Brams2002} and  \citealt{Pacuit2019}). In this paper, we are interested in the voting interpretation of Arrow's Theorem. But we argue that escaping Arrow-style impossibility theorems for preferential voting does not require so drastically violating the intuitions behind IIA. We propose a new axiom of \textit{advantage-standard} (AS) \textit{representability} as a replacement for IIA that arguably captures what is right about IIA but revises what is wrong with it. Our argument for AS representability as opposed to IIA is based on a model of comparing candidates in elections that we call the Advantage-Standard model.

The requirement that a collective choice rule (CCR) be representable by the AS model captures a key insight of IIA but is weaker than IIA; yet it is stronger than what is known in the literature as weak IIA (two profiles alike on $x,y$ cannot have opposite strict social preferences on $x$ and~$y$). In addition to motivating violations of IIA, the AS model makes intelligible violations of another Arrovian assumption: the negative transitivity of the strict social preference relation $P$. While  previous literature shows that only weakening IIA to weak IIA or only weakening negative transitivity of $P$ to acyclicity still leads to impossibility theorems (see Section \ref{ArrovianSection}), we show that jointly weakening IIA to AS representability and weakening negative transitivity of $P$ leads to no such impossibility theorems. Indeed, we show that several appealing CCRs are AS representable, including even transitive~CCRs.

The paper is organized as follows. In Section \ref{ArrovianSection}, we recall the framework of Arrovian social choice theory and some impossibility theorems, starting with Arrow's Theorem. In Section~\ref{ASmodelsection}, we introduce  the AS model, our proposed replacement of IIA---AS representability---and our main technical result: the equivalence between AS representability  and the conjunction of weak IIA and a property we call \textit{orderability}. In Section \ref{CCRsection}, we show that three CCRs based on voting methods proposed in the literature are AS representable: a CCR based on the \textit{covering} relation of \citealt{Gillies1959} (cf.~\citealt{Miller1980} and \citealt{Duggan2013}); a CCR variant of the Ranked Pairs voting method introduced in \citealt{Tideman1987}; and the Split Cycle CCR studied in \citealt{HP2020a,HP2020b}. In Section \ref{explanatory}, we show how the AS model can be used to explain properties of AS representable CCRs. We conclude in Section~\ref{Conclusion}.

With a few exceptions, proofs of results are given in the Appendix.

\section{The Arrovian Framework}\label{ArrovianSection}
Let $X$ and $V$ be  nonempty finite sets of \textit{candidates} and \textit{voters}, respectively. Let $R$ be a binary relation on some set $Y\subseteq X$. We write `$xRy$' for $\langle x,y\rangle\in R$, and we define binary relations $P(R)$, $I(R)$, and $N(R)$ on $Y$ as follows:
\begin{itemize}
    \item  $xP(R)y$ if and only if $xRy$ and \textit{not} $yRx$ (strict preference);
    \item $xI(R)y$ if and only if $xRy$ and $yRx$ (indifference);
    \item $xN(R)y$ if and only if neither $xRy$ nor $yRx$ (noncomparability).
\end{itemize}
We say  $R$ is \textit{reflexive} if for all $x\in Y$, $xRx$; $R$ is \textit{complete} if for all $x,y\in Y$, $xRy$ or $yRx$; $R$ is \textit{transitive} if for all $x,y,z\in Y$, if $xRy$ and $yRz$, then $xRz$; and $R$ is \textit{acyclic} if there is no sequence $x_1,\dots,x_m\in Y$ with $m\geq 2$ such that $x_iP(R)x_{i+1}$ for each $i<m$ and $x_m=x_1$. 

Let $B(Y)$ denote the set of all binary relations on $Y$ and $O(Y)$  the set of all transitive and complete binary relations on $Y$. For $Y\subseteq X$, a \textit{$Y$-preprofile}  is a function $\mathbf{R}:V\to B(Y)$; for each $i\in V$, we write `$\mathbf{R}_i$' for $\mathbf{R}(i)$. A \textit{$Y$-profile} is a $Y$-preprofile such that $\mathbf{R}_i\in O(Y)$ for each $i\in V$. When $Y=X$, we simply speak of \textit{preprofiles} and \textit{profiles}. For a profile $\mathbf{R}$  and $Y\subseteq X$,  let $\mathbf{R}|_Y$ be the $Y$-profile  given by $\mathbf{R}|_Y(i)=\mathbf{R}(i)\cap Y^2$ for each $i\in V$.

A \textit{collective choice rule} (CCR) is a function $f:D\to B(X)$, where $D$ is the set of all profiles. Thus, we build the assumption of \textit{universal domain} into the definition of a CCR.\footnote{How our main results would be affected by allowing CCRs with restricted domains (recall Footnote \ref{DomainRestrictions}) is an interesting question that we leave for future work.} We say that a CCR $f$ is \textit{complete} (resp.~\textit{transitive}, \textit{acyclic}) if $f(\mathbf{R})$ is complete (resp.~transitive, acyclic) for all profiles $\mathbf{R}$. Following Arrow \citeyearpar{Arrow1951}, a CCR that is both transitive and complete is called a \textit{social welfare function} (SWF).

We  recall the following standard properties that a CCR might satisfy:
\begin{itemize}
    \item $f$ satisfies \textit{independence of irrelevant alternatives} (IIA) if for all $x,y\in X$ and profiles $\mathbf{R},\mathbf{R}'$, if $\mathbf{R}|_{\{x,y\}}=\mathbf{R}'|_{\{x,y\}}$, then $x f(\mathbf{R})y$ if and only if $x f(\mathbf{R}')y$;
   
    \item  $f$ satisfies \textit{Pareto} if for all $x,y\in X$ and profiles $\mathbf{R}$, if $xP(\mathbf{R}_i)y$ for all $i\in V$, then $x P(f(\mathbf{R}))y$;
    
    \item  $f$ satisfies \textit{strong Pareto} if for all $x,y\in X$ and profiles $\mathbf{R}$, if $x\mathbf{R}_iy$ for all $i\in V$ and $xP(\mathbf{R}_j)y$ for some $j\in V$, then $x P(f(\mathbf{R}))y$; 
    
    \item  $f$ satisfies \textit{Pareto indifference} if for all $x,y\in X$ and profiles $\mathbf{R}$, if $xI(\mathbf{R}_i)y$ for all $i\in V$, then $x I(f(\mathbf{R}))y$; 

    \item $f$ satisfies \textit{anonymity} if for any profile $\mathbf{R}$ and permutation $\tau$ of $V$, \[f(\mathbf{R})  = f(\mathbf{R}^\tau),\] where the profile $\mathbf{R}^\tau$ is defined by $\mathbf{R}^\tau(i)=\mathbf{R}(\tau(i))$.
    
    \item $f$ satisfies \textit{neutrality} if for any profile $\mathbf{R}$ and permutation $\pi$ of $X$, \[\pi f(\mathbf{R})  = f(\mathbf{R}^\pi),\] where for any binary relation $R$ on $X$, we set $\pi R=\{\langle \pi(x),\pi(y)\rangle\mid \langle x,y\rangle\in R\}$, and the profile $ \mathbf{R}^\pi$ is defined by $\mathbf{R}^\pi(i)=\pi\mathbf{R}(i)$.

\end{itemize}

Next we recall Arrow's original impossibility theorem. A \textit{dictator for $f$} is a voter $i\in V$ such that for all $x,y\in X$ and profiles $\mathbf{R}$, if $xP(\mathbf{R}_i)y$, then $xP(f(\mathbf{R}))y$. 

\begin{theorem}[\citealt{Arrow1951}]\label{ArrowFirst} Assume $|X|\geq 3$. If $f$ is an SWF satisfying IIA and Pareto, then there is a dictator for $f$.
\end{theorem}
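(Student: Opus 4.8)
The plan is to prove Arrow's Theorem via the standard ``decisive coalition'' method, culminating in the Field Expansion and Contraction arguments that isolate a single dictator. The key objects are coalitions: say a set $C\subseteq V$ is \emph{decisive for $x$ over $y$} if whenever every voter in $C$ strictly prefers $x$ to $y$ (regardless of how voters outside $C$ rank things), the social preference has $xP(f(\mathbf{R}))y$; and say $C$ is \emph{decisive} if it is decisive for every ordered pair. The Pareto principle tells us that $V$ itself is decisive, so the collection of decisive coalitions is nonempty. The goal is to show that some singleton $\{i\}$ is decisive, since such an $i$ is exactly a dictator.

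First I would prove the \textbf{Field Expansion Lemma}: if a coalition $C$ is decisive for some single pair $x$ over $y$, then $C$ is decisive (for every pair). To do this, I fix an arbitrary pair $a,b$ and a third candidate $z$ (available since $|X|\geq 3$), and construct a profile where voters in $C$ rank $a$ above $x$ above $y$ above $b$ (suitably arranged so that the relevant restrictions match), while voters outside $C$ are positioned so that IIA lets me control the social comparison on $a,b$. The idea is to chain: everyone in $C$ prefers $a$ to $x$, so by Pareto $aP(f(\mathbf{R}))x$; decisiveness on $x,y$ gives $xP(f(\mathbf{R}))y$; Pareto again gives $yP(f(\mathbf{R}))b$; then transitivity of $f(\mathbf{R})$ (since $f$ is an SWF) forces $aP(f(\mathbf{R}))b$. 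Because the individual preferences on $\{a,b\}$ were arbitrary subject to $C$ preferring $a$ to $b$, IIA upgrades this to decisiveness of $C$ for $a$ over $b$.

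Next comes the \textbf{Group Contraction} step, which is the crux. I take a decisive coalition $C$ with $|C|\geq 2$ (starting from $C=V$) and split it into two nonempty parts $C_1$ and $C_2$. I build a profile on three candidates $x,y,z$ where $C_1$ ranks $x>y>z$, $C_2$ ranks $z>x>y$, and everyone outside $C$ ranks $y>z>x$. Since all of $C$ prefers $x$ to $y$, decisiveness gives $xP(f(\mathbf{R}))y$. Now by completeness of $f(\mathbf{R})$, either $xP(f(\mathbf{R}))z$ or $zR(f(\mathbf{R}))x$ (i.e. $z f(\mathbf{R}) x$). In the first case, only $C_1$ unanimously prefers $x$ to $z$ against everyone else preferring $z$ to $x$, so by IIA this shows $C_1$ is decisive for $x$ over $z$, making $C_1$ decisive by Field Expansion. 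In the second case, combined with $xP(f(\mathbf{R}))y$ and transitivity one derives $z P(f(\mathbf{R})) y$, and only $C_2$ prefers $z$ to $y$, so $C_2$ is decisive. Either way, a \emph{proper} nonempty subset of $C$ is decisive.

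Finally I would close the argument by iteration: starting from the decisive coalition $V$ (decisive by Pareto), repeatedly apply Group Contraction to obtain a strictly decreasing chain of decisive coalitions. Since $V$ is finite, this terminates in a decisive coalition that cannot be contracted, which must be a singleton $\{i\}$; that voter $i$ is the dictator. The \textbf{main obstacle} I anticipate is the bookkeeping in the contraction step: one must choose the three-candidate configurations so that the decisiveness hypotheses apply in exactly the right direction, and then invoke transitivity and completeness of the SWF output together with IIA to transfer each conclusion back to an arbitrary profile. Getting the case split airtight --- ensuring that in \emph{every} case some proper subset inherits decisiveness --- is where the real care is needed; the Field Expansion Lemma and the finiteness-driven termination are comparatively routine.
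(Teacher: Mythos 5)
The paper does not actually prove this theorem: it is Arrow's classical result, cited to Arrow (1951), and the paper uses it as a black box (e.g., Theorem 2.3 is derived \emph{from} it). So there is no in-paper proof to compare against; I can only assess your argument on its own terms. Your route is the standard Sen-style decisive-coalition proof (Field Expansion plus Group Contraction), and the overall architecture --- $V$ is decisive by Pareto, contraction plus finiteness of $V$ yields a decisive singleton, and a decisive singleton is exactly a dictator in the paper's sense --- is sound.

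Two pieces of bookkeeping are not quite right as written, though both are fixable within your own framework. First, in Field Expansion you write ``everyone in $C$ prefers $a$ to $x$, so by Pareto $aP(f(\mathbf{R}))x$''; the Pareto principle requires \emph{all} of $V$ to strictly prefer $a$ to $x$, so the construction must explicitly place $a$ above $x$ and $y$ above $b$ in the rankings of the voters \emph{outside} $C$ as well (this is consistent with leaving their $a$-versus-$b$ comparison free, e.g.\ via $y>b>a>x$ or $y>a>b>x$, which is what the final appeal to IIA needs). Second, and more importantly, the Group Contraction step only establishes that $C_1$ (resp.\ $C_2$) is \emph{almost} decisive for the relevant pair --- i.e., decisive in profiles where every voter outside the coalition holds the \emph{opposite} strict preference --- because that is all IIA lets you extract from the single constructed profile, in which the outsiders all rank $z$ above $x$ (resp.\ $y$ above $z$). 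Your Field Expansion Lemma is stated with the stronger hypothesis that $C$ is fully decisive on a pair, so as written the contraction step cannot invoke it. The standard repair is to state Field Expansion with the weaker hypothesis of almost decisiveness; your own proof of the lemma already goes through in that form, since the chaining profile can be chosen with the outsiders ranking $y$ above $x$, so that only almost decisiveness on $x,y$ is ever used. With that restatement the case split in the contraction step is airtight and the argument closes correctly.
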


Arrow's assumption that $f$ is an SWF can be strictly weakened, while keeping all the other axioms the same. A binary relation $P$ on $X$ is \textit{negatively transitive} if for all $x,y,z\in X$, if \textit{not} $xPy$ and \textit{not} $yPz$, then \textit{not} $xPz$. In place of Arrow's assumption that $f(\mathbf{R})$ is transitive and complete, we need only make the assumption that $P(f(\mathbf{R}))$ is negatively transitive, which is strictly weaker. The following is easy to check (for part \ref{part2}, consider a set $X=\{x,y,z\}$ with the reflexive relation $R$ such that $xI(R)y$ and $yI(R)z$ but $xN(R)z$).

\begin{lemma} $\,$
\begin{enumerate}
\item If $R$ is a transitive and complete relation, then $P(R)$ is negatively transitive.
\item\label{part2} There are reflexive relations $R$ such that $P(R)$ is negatively transitive but $R$ is neither complete nor transitive.
\end{enumerate}
\end{lemma}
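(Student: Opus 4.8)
The plan is to prove each part separately, treating them as independent routine verifications.

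\medskip

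\textbf{Part 1.} Suppose $R$ is transitive and complete, and let $x,y,z \in X$ satisfy \emph{not} $xP(R)y$ and \emph{not} $yP(R)z$. First I would unfold the definition of $P(R)$: since $xP(R)y$ means $xRy$ and \emph{not} $yRx$, the negation \emph{not} $xP(R)y$ means \emph{not} $xRy$ or $yRx$. By completeness, \emph{not} $xRy$ forces $yRx$, so in either case we obtain $yRx$. Applying the same reasoning to \emph{not} $yP(R)z$ yields $zRy$. Then transitivity applied to $zRy$ and $yRx$ gives $zRx$, which means \emph{not} $xP(R)z$ (since $xP(R)z$ would require \emph{not} $zRx$). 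This establishes negative transitivity of $P(R)$.

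\medskip

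\textbf{Part 2.} Here I would simply exhibit the example the authors have already suggested: take $X = \{x,y,z\}$ and let $R$ be the reflexive relation with $xI(R)y$ and $yI(R)z$ but $xN(R)z$. Concretely, $R = \{\langle x,x\rangle, \langle y,y\rangle, \langle z,z\rangle, \langle x,y\rangle, \langle y,x\rangle, \langle y,z\rangle, \langle z,y\rangle\}$. I would verify three things. First, $R$ is reflexive by construction. Second, $R$ is not complete, since neither $xRz$ nor $zRx$ holds (that is, $xN(R)z$). Third, $R$ is not transitive, since $xRy$ and $yRz$ but \emph{not} $xRz$. Finally, to check that $P(R)$ is negatively transitive, observe that $P(R) = \emptyset$: every comparable pair in $R$ is in fact an indifference pair (both directions hold), and reflexive pairs are also indifferences, so there is no pair $\langle a,b\rangle$ with $aRb$ and \emph{not} $bRa$. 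Since $P(R)$ is empty, the antecedent ``$xP(R)y$'' in the definition of negative transitivity is vacuously relevant: negative transitivity requires that whenever \emph{not} $aP(R)b$ and \emph{not} $bP(R)c$ we have \emph{not} $aP(R)c$, and with $P(R)=\emptyset$ the conclusion \emph{not} $aP(R)c$ holds for all triples. Hence $P(R)$ is (vacuously) negatively transitive.

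\medskip

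I do not anticipate any serious obstacle here, as the lemma is flagged in the text as ``easy to check.'' The only point requiring mild care is the bookkeeping around the definitions of $P(R)$, $I(R)$, and $N(R)$ in Part 1, where one must correctly negate the conjunction defining strict preference and then invoke completeness to rule out noncomparability. In Part 2, the key observation that makes the example transparent is that $P(R)$ is empty, which renders negative transitivity immediate and sidesteps any case analysis.
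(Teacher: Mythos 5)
Your proof is correct and follows exactly the route the paper intends: the paper leaves the lemma as "easy to check" and merely hints at the three-element example for part 2, which you instantiate and verify in full (including the key observation that $P(R)=\varnothing$ makes negative transitivity vacuous). No gaps.
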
 

Here is the strengthening of Arrow's original impossibility theorem in terms of negative transitivity.\footnote{Some authors, such as Fishburn \citeyearpar{Fishburn1970}, reformulate Arrow's setup so that the output of an SWF $f$ is not a weak relation $R$ but rather a strict relation $P$; then assuming that $P$ is asymmetric and negatively transitive, one can prove the analogous version of Theorem \ref{ArrowFirst}. This is closely related to Proposition \ref{StrongerArrow}, but we remain in Arrow's original setting where the output of $f$ is a weak (and hence not assumed to be asymmetric) relation $R$. This setting is more expressive, since one can distinguish between indifference ($xRy$ and $yRx$) and noncomparability (neither $xRy$ nor $yRx$), a distinction that is lost in Fishburn's setup.}
\begin{proposition}\label{StrongerArrow} Assume  $|X|\geq 3$. If a CCR $f$ satisfies IIA and Pareto, and for every profile $\mathbf{R}$, $P(f(\mathbf{R}))$ is negatively transitive, then there is a dictator for $f$.
\end{proposition}

\begin{proof} Let $f$ satisfy the hypothesis. Define $f'$ such that for all profiles $\mathbf{R}$, $xf'(\mathbf{R})y$ if and only if \textit{not} $yP(f(\mathbf{R}))x$. Then $f'(\mathbf{R})$ is complete, and as $P(f(\mathbf{R}))$ is negatively transitive, $f'(\mathbf{R})$ is transitive. Thus, $f'$ is an SWF. Since $P(f'(\mathbf{R}))=P(f(\mathbf{R}))$, that $f$ satisfies Pareto  implies that $f'$ does as well. Finally, $f'$ satisfies IIA: for $\mathbf{R},\mathbf{R}'$ and $x,y\in X$, if $\mathbf{R}|_{\{x,y\}}=\mathbf{R}'|_{\{x,y\}}$, then by IIA for $f$, we have $x f(\mathbf{R})y$ if and only if $x f(\mathbf{R}')y$, which implies $x f'(\mathbf{R})y$ if and only if $x f'(\mathbf{R}')y$ by the definition of $f'$. Thus, $f'$ is an SWF satisfying IIA and Pareto, so by Arrow's Theorem, there is a dictator $i$ for $f'$. Then $i$ is also a dictator for~$f$.\end{proof}

Weakening the Pareto assumption of Proposition \ref{StrongerArrow} does not significantly improve the situation, by the Murakami-Wilson Theorem. A CCR $f$ satisfies \textit{strict non-imposition} (SNI) if for all $x,y\in X$ with $x\neq y$, there is a profile $\mathbf{R}$ such that $xP(f(\mathbf{R}))y$.\footnote{See \citealt{HK2020} for a proof that this assumption of Murakami \citeyearpar{Murakami1968} is equivalent to the conjunction of the non-null and non-imposition assumptions of Wilson \citeyearpar{Wilson1972} for transitive CCRs.} An \textit{inverse dictator for $f$} is a voter $i\in V$ such that for all $x,y\in X$ and profiles $\mathbf{R}$, if $xP(\mathbf{R}_i)y$, then $yP(f(\mathbf{R}))x$. 

\begin{theorem}[\citealt{Murakami1968}, \citealt{Wilson1972}] Assume  $|X|\geq 3$. If $f$ is an SWF satisfying IIA and SNI, then there is a dictator for $f$ or an inverse dictator for $f$.
\end{theorem}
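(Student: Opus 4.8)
The plan is to reduce the statement to Arrow's Theorem (Theorem~\ref{ArrowFirst}) by showing that an SWF satisfying IIA and SNI must have a globally consistent ``sign'': it either respects unanimous strict preferences on every pair (Pareto) or reverses them on every pair. First I would restrict attention to profiles in which each $\mathbf{R}_i$ is a strict linear order; since a (inverse) dictator is defined purely in terms of strict individual preferences $xP(\mathbf{R}_i)y$, it suffices to locate the (inverse) dictator on such profiles and then extend. On these profiles IIA lets me regard $f$, restricted to an ordered pair $(x,y)$, as a function sending the coalition $C=\{i\in V : xP(\mathbf{R}_i)y\}$ to one of the three outcomes $xP(f(\mathbf{R}))y$, $yP(f(\mathbf{R}))x$, or $xI(f(\mathbf{R}))y$ (the three possibilities, since $f(\mathbf{R})$ is complete).

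The crux is a \emph{neutrality} lemma: this coalitional function does not depend on the chosen pair $(x,y)$. To prove it I would exploit $|X|\geq 3$ together with transitivity of $f(\mathbf{R})$, building ``Latin square'' profiles on three alternatives $x,y,z$ in which the coalition preferring $x$ to $y$, the coalition preferring $y$ to $z$, and so on are prescribed; transitivity of the social relation then pins down the outcome on one pair given the outcomes on the others, which, via IIA and relabeling of alternatives, transports the decisiveness of a coalition from one pair to every pair. Here SNI supplies the non-degenerate base case, guaranteeing that some coalition is genuinely decisive rather than every profile yielding social indifference (the null rule). The standard consequence of such a neutral, transitivity-respecting aggregation is that the family of decisive coalitions is either an ultrafilter on $V$ or the ``dual'' family obtained by reversing all outcomes, with SNI excluding the null rule.

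From neutrality I extract the sign dichotomy by evaluating the common coalitional function at $C=V$ (unanimity): the outcome is either ``follow'' for every pair, so that $f$ satisfies Pareto, or ``reverse'' for every pair, so that unanimous strict preference for $x$ over $y$ always yields $yP(f(\mathbf{R}))x$. In the first case, Arrow's Theorem immediately delivers a dictator. In the second case I would pass to the dual SWF $f^-$ defined by $xf^-(\mathbf{R})y$ iff $yf(\mathbf{R})x$; one checks that $f^-$ is again an SWF satisfying IIA, and that the reverse condition for $f$ is exactly Pareto for $f^-$, so Arrow gives a dictator $i$ for $f^-$, whence $xP(\mathbf{R}_i)y$ implies $xP(f^-(\mathbf{R}))y$, i.e.\ $yP(f(\mathbf{R}))x$, making $i$ an inverse dictator for $f$. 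A final monotonicity argument, again from IIA and transitivity, extends the conclusion from strict-linear profiles to the full domain: breaking the indifferences of the non-pivotal voters cannot overturn the decisive voter's strict verdict.

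I expect the neutrality lemma to be the main obstacle. Unlike the textbook neutrality argument, which leans on Pareto to construct the linking profiles, here only SNI is available, so the delicate point is bootstrapping a single non-imposed ordered pair (guaranteed by SNI) into full pair-independence and a uniform sign, all while keeping the three-valued outcomes, including social indifference, under control. The secondary fiddly step is the monotonicity extension that handles voter indifference, which is what upgrades a dictator located on strict-linear profiles into a genuine (inverse) dictator over all profiles.
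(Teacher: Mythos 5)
The paper does not prove this theorem; it is imported from the literature (Murakami 1968, Wilson 1972), so there is no in-paper argument to compare against. Judged on its own terms, your outline is the standard route to the Murakami--Wilson theorem: establish that an SWF satisfying IIA and SNI is either Paretian or anti-Paretian, then invoke Arrow's Theorem (Theorem~\ref{ArrowFirst}) directly in the first case and apply it to the order-reversed dual $f^-$ in the second, exactly as the paper itself does in the proof of Theorem~\ref{StrongerArrow}. The dualization step and the reduction are correct as written. You are also right that essentially all of the content lives in the ``neutrality without Pareto'' lemma: the textbook field-expansion argument uses Pareto to control the auxiliary pairs in the three-alternative linking profiles, and replacing that with SNI (which only guarantees the \emph{existence} of some realizable configuration of coalitions yielding a strict social preference on the auxiliary pair, subject to individual transitivity constraints on which triples of coalitions are simultaneously realizable) is precisely where Wilson's and Murakami's work is concentrated. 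Two points deserve sharpening. First, your ``sign dichotomy'' at $C=V$ is really a trichotomy: unanimity could a priori yield social \emph{indifference} on some pair, and this case must be explicitly excluded by combining SNI with the contagion machinery (if unanimity is not decisive in either direction on one pair, the ultrafilter structure forces near-nullity on that pair, contradicting SNI); merely ``evaluating the coalitional function at $V$'' does not by itself rule it out. Second, your final ``monotonicity extension'' from linear profiles to the full domain is unnecessary: Pareto and anti-Pareto are conditions on $\mathbf{R}|_{\{x,y\}}$ alone (all voters strict on $x,y$), so once the unanimity verdict is established on linear profiles, IIA transports it verbatim to every profile in which all voters strictly rank $x$ over $y$; Arrow's Theorem then applies on the full domain and delivers the dictator (or, after dualizing, the inverse dictator) with no further extension argument.
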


By the same reasoning as in the proof of Proposition \ref{StrongerArrow}, we obtain the following.

\begin{proposition} Assume $|X|\geq 3$. If $f$ is a CCR satisfying IIA and SNI such that for every profile $\mathbf{R}$, $P(f(\mathbf{R}))$ is negatively transitive, then there is a dictator for $f$ or an inverse dicatator for $f$.
\end{proposition}
\noindent Even dropping SNI is not much help, as there is still a partition of the candidates with a dictator or inverse dictator over each cell of the partition and relations between the cells imposed independently of voter preferences (see \citealt[\S~3]{Wilson1972}).

Thus, we must weaken negative transitivity or IIA or both. Just weakening negative transitivity to acyclicity still does not deliver us out of the landscape of impossibility theorems due to IIA, as shown by vetoer theorems in \citealt{Blau1977}, \citealt{Blair1979}, and \citealt{Kelsey1984,Kelsey1984b,Kelsey1985}.\footnote{For example, Blau and Deb \citeyearpar{Blau1977} show that under IIA, acyclicity, neutrality, and monotonicity, any partition of $V$ into at most  $|X|$-many coalitions contains a coalition with veto power, meaning that if every voter in the coalition ranks $x$ above $y$, then society cannot rank $y$ above $x$. Hence when $|X|\geq |V|$, there is a single voter with veto power. The later cited papers  replace neutrality and monotonicity with other weak assumptions and derive related vetoer results.\label{acylicity}} Although variants of the unanimity CCR, with $xf(\mathbf{R})y$ if and only if all voters have $x\mathbf{R}_iy$ (see \citealt{Weymark1984}), satisfy transitivity and IIA and may be reasonable for some small committees in which unanimity is valued and possible, it is not practical for every voter to have veto power in settings in which some conflicts of preferences are bound to arise.\footnote{Similar points apply to the acyclic supermajority CCR with $xf(\mathbf{R})y$ if and only if  $|\{i\in V\mid yP(\mathbf{R}_i)x\}| \leq\frac{|X|-1}{|X|}\times |\{i\in V\mid xP(\mathbf{R}_i)y\mbox{ or }yP(\mathbf{R}_i)x\}|$. For more than a few alternatives, the supermajority threshold  $\frac{|X|-1}{|X|}$ is impractically high for many voting contexts, and any lowering of the threshold that allows for more strict social preferences leads to violations of acyclicity (see \citealt[Thm.~4]{Ferejohn1974}).} In the end, to make room for  voting methods  that are practical for a wide range of numbers of voters and candidates, we must weaken IIA. Ideally, we would do so while retaining some of the insight of IIA that made it appealing in the first place. We pursue this approach in the next section. 

\section{The Advantage-Standard Model}\label{ASmodelsection}
In this section, we introduce the Advantage-Standard model. Roughly, a CCR can be represented within this model if the strict social preference according to the CCR can be viewed as arising from a comparison of implicit \textit{advantage} and \textit{standard} functions. In a  profile $\mathbf{R}$, we  assess the advantage that one candidate $x$ has over another candidate $y$ just in terms of how voters rank $x$ vs.~$y$, i.e., just in terms of $\mathbf{R}|_{\{x,y\}}$. No other candidates are relevant to assessing the advantage of $x$ over $y$. This, we think, is part of the intuition and insight behind IIA. However, whether the intrinsic advantage that $x$ has over $y$ in the profile is sufficient to judge that $xP(f(\mathbf{R}))y$  depends on a standard for strict preference that is contextually determined, where the context is the rest of the profile $\mathbf{R}$ besides $\mathbf{R}|_{\{x,y\}}$ (the preprofile $\mathbf{R}^{-x,y}$ in Definition \ref{advstddef} below). The mistake of IIA is to assume that the standard for strict preference is the same for all contexts. For example, in the left profile in Figure \ref{IIAfig}, it is perfectly reasonable to hold that the advantage of $a$ over $b$ exceeds the standard for a strict social preference of $a$ over $b$. Yet in the right profile in Figure \ref{IIAfig}, where the advantage of $a$ over $b$ is the same, the standard cannot be met: society must have full indifference or noncomparability between $a$, $b$, and $c$, assuming anonymity, neutrality, and acyclicity, given the symmetries in the context. Thus, the standard must be context dependent. 

Context dependent standards are familiar in other domains. For example, whether a person counts as ``tall'' depends on who else is being assessed for tallness within the context of judgment; whether a performance on an exam counts as ``passing'' may depend on other performances in the class; and so on. In the setting of this paper, whether a candidate's advantage over another suffices for strict social preference depends on which other advantages are being assessed for strict social preference within the context of an~election. 

\begin{figure}
\begin{center}
\begin{tabular}{ccc}
$i$ & $j$ & $k$   \\\hline
$\boldsymbol{a}$ & $\boldsymbol{b}$ &  $\boldsymbol{a}$ \\
$\boldsymbol{b}$ &  $\boldsymbol{a}$ & $\boldsymbol{b}$ \\
$c$ &  $c$ &  $c$ \\
\end{tabular}\qquad\qquad \begin{tabular}{ccc}
$i$ & $j$ & $k$   \\\hline
$\boldsymbol{a}$ & $\boldsymbol{b}$ &  $c$ \\
$\boldsymbol{b}$ &  $c$ & $\boldsymbol{a}$ \\
$c$ &  $\boldsymbol{a}$ &  $\boldsymbol{b}$ \\
\end{tabular}
\end{center}\caption{two profiles $\mathbf{R}$ and $\mathbf{R'}$ such that $\mathbf{R}|_{\{a,b\}}=\mathbf{R}'|_{\{a,b\}}$. The first column indicates that voter $i$ has $aP(\mathbf{R}_i)b$, $bP(\mathbf{R}_i)c$, etc. }\label{IIAfig}
\end{figure}

We will measure the advantage of one candidate over another and the standard required for strict preference using a totally ordered set of degrees, e.g., the integers or rational numbers with their usual ordering. In order to subsume different possible definitions of advantage and standard, we allow advantages and standards to be elements of any \textit{totally ordered group} $(G,\circ, \leq)$ (see  Appendix \ref{GroupAppendix} for a definition). Natural choices would be $(\mathbb{Z},+,\leq)$, the additive group of the integers with the usual order, or $(\mathbb{Q}_{>0},\times,\leq)$, the multiplicative group of positive rational numbers with the usual order. A key principle is that the advantage of $x$ over $y$ is the \textit{inverse} in the group of the advantage of $y$ over $x$, with the inverse denoted by $(\cdot)^{-1}$. E.g., working with $(\mathbb{Z},+,\leq)$, the advantage of $x$ over $y$ is the negative of the advantage of $y$ over $x$;\footnote{Compare this with the definition of a \textit{comparison function} in \citealt{Dutta1999}.}  working with $(\mathbb{Q}_{>0},\times,\leq)$, the advantage of $x$ over $y$ is the reciprocal of the advantage of $y$ over $x$. A second key principle is that the standard for $x$ to defeat $y$ must be at least as large as the identity element of the group, denoted by $e$. E.g., working with $(\mathbb{Z},+,\leq)$, the standard must be at least 0; working with $(\mathbb{Q}_{>0},\times,\leq)$, the standard must be at least 1.  This ensures that at most one of the advantage of $x$ over $y$ and the advantage of $y$ over $x$ exceeds the standard for strict preference.

With this setup, we are ready for the definition of AS representability.

\begin{definition}\label{advstddef}
For $x,y\in X$ and a profile $\mathbf{R}$, define $\mathbf{R}^{-x,y}:V\to B(X)$ by $\mathbf{R}^{-x,y}(i)=\mathbf{R}(i)\setminus \{\langle x,y\rangle,\langle y,x\rangle\}$. Let
\[D_A:=\{\langle x,y,\mathbf{R}|_{\{x,y\}}\rangle \mid x,y\in X,\mathbf{R}\mbox{ a profile}\}\]
and
\[D_S:=\{\langle x,y,\mathbf{R}^{-x,y}\rangle \mid x,y\in X, \mathbf{R}\mbox{ a profile}\}.\]
A CCR $f$ is \textit{advantage-standard} (AS) \textit{representable} if there exists a totally ordered group $(G,\circ, \leq)$ with identity element $e$ and functions $\mathsf{Advantage}:D_A\to G$ and $\mathsf{Standard}:D_S\to G$ such that for all $x,y\in X$ and profiles $\mathbf{R}$:
\begin{equation}\mathsf{Advantage}(x,y,\mathbf{R}|_{\{x,y\}})=\mathsf{Advantage}(y,x, \mathbf{R}|_{\{x,y\}})^{-1};\label{minus}\end{equation}
\begin{equation}\mathsf{Standard}(x,y,\mathbf{R}^{-x,y})\geq e;\label{minimal}\end{equation}
and, where $x>y$ if and only if $x\geq y$ and $x\neq y$,
\begin{equation} xP(f(\mathbf{R}))y\iff \mathsf{Advantage}(x,y,\mathbf{R}|_{\{x,y\}})> \mathsf{Standard}(x,y,\mathbf{R}^{-x,y}).\label{Piff}\end{equation}
\end{definition}
\noindent Note that we will refer back to (\ref{minus}), (\ref{minimal}), and (\ref{Piff}) without citing Definition \ref{advstddef}.\footnote{(\ref{minus}) and (\ref{minimal}) are the minimal conditions required for AS representability. One could consider additional conditions on the advantage and standard functions. A natural condition on \textsf{Advantage} is: if $\mathbf{R}'$ is obtained from $\mathbf{R}$ by only moving $x$ up in some voter's ballot, then the advantage of $x$ over any $y$ should not decrease from $\mathbf{R}$ to $\mathbf{R}'$. On \textsf{Standard}, one might impose a continuity condition requiring that small differences between $\mathbf{R}$ and $\mathbf{R}'$ be matched by small differences between $f(\mathbf{R})$ and $f(\mathbf{R}')$ (see Definition \ref{ContinuousDef} below).}

One natural measure of the advantage of $x$ over $y$ is the margin of victory:
\[\mathrm{Margin}_{\mathbf{R}}(x,y)=|\{i\in V\mid xP(\mathbf{R}_i)y \}| - |\{i\in V\mid yP(\mathbf{R}_i)x \}|.\]
Another natural measure, which can produce a different ordering when voters' ballots contain ties, is the following:
\[\mathrm{Ratio}_{\mathbf{R}}(x,y)=\begin{cases} \frac{|\{i\in V\mid xP(\mathbf{R}_i)y \}| }{ |\{i\in V\mid yP(\mathbf{R}_i)x \}|} &\mbox{if } |\{i\in V\mid xP(\mathbf{R}_i)y \}|\neq 0\mbox{ and }|\{i\in V\mid yP(\mathbf{R}_i)x \}|\neq 0  \\ 
|V| &\mbox{if }|\{i\in V\mid xP(\mathbf{R}_i)y \}|\neq 0\mbox{ and }|\{i\in V\mid yP(\mathbf{R}_i)x \}|=0\\
\frac{1}{|V|} &\mbox{if }|\{i\in V\mid xP(\mathbf{R}_i)y \}|= 0\mbox{ and }|\{i\in V\mid yP(\mathbf{R}_i)x \}|\neq 0\\
1 &\mbox{if }|\{i\in V\mid xP(\mathbf{R}_i)y \}|= 0\mbox{ and }|\{i\in V\mid yP(\mathbf{R}_i)x \}|=0.\end{cases}\]
The second and third cases are related to the strong Pareto principle; they formalize the idea that the advantage of $x$ over $y$ is maximal when all voters weakly prefer $x$ to $y$ and some voter strictly prefers $x$ to $y$. Of course, in such cases one may also wish to take into account how many voters strictly prefer $x$ to $y$, replacing $|V|$ (resp.~$1/|V|$) in the definition by $|V|+|\{i\in V\mid xP(\mathbf{R}_i)y \}|$ (resp.~$1/(|V|+|\{i\in V\mid yP(\mathbf{R}_i)x \}|)$). The exact formulation of $\mathrm{Ratio}_\mathbf{R}$ will not matter for us here. 

To see the difference between the margin and ratio approaches, suppose that (a) $70$ voters have $xP(\mathbf{R}_i)y$ and $30$ have $yP(\mathbf{R}_i)x$ or (b) $8$ voters have $xP(\mathbf{R}_i)y$, $2$ have $yP(\mathbf{R}_i)x$, and $90$ have  $xI(\mathbf{R}_i)y$. According to $\mathrm{Margin}_\mathbf{R}$, the advantage of $x$ over $y$ is greater in (a), whereas according to  $\mathrm{Ratio}_\mathbf{R}$, the advantage of $x$ over $y$ is greater in (b). For an  example illustrating the second case in the definition of $\mathrm{Ratio}_\mathbf{R}$, suppose that (i) $99$ voters have $xP(\mathbf{R}_i)y$ and $1$ has $yP(\mathbf{R}_i)x$ or (ii) $1$ voter has $xP(\mathbf{R}_i)y$ and $99$ have $xI(\mathbf{R}_i)y$. According to $\mathrm{Margin}_\mathbf{R}$, the advantage of $x$ over $y$ is greater in  (i), whereas according to $\mathrm{Ratio}_\mathbf{R}$, the advantage of $x$ over $y$ is greater in  (ii).

\begin{example}
Consider again the totally ordered group ($\mathbb{Z}, +,\leq)$. Define 
\[\mathsf{Advantage}(x,y,\mathbf{R}|_{\{x,y\}}):=\mathrm{Margin}_{\mathbf{R}|_{\{x,y\}}}(x,y)\]
and
\[\mathsf{Standard}(x,y,\mathbf{R}^{-x,y}):=0.\]
In this case, the intrinsic advantage of $x$ over $y$ is the majority margin, and the standard is constant across contexts. Note that $\mathsf{Advantage}$ and $\mathsf{Standard}$ satisfy (\ref{minus}) and (\ref{minimal}). Then a CCR which outputs the majority relation---so $xP(f(\mathbf{R}))y$ if and only if strictly more voters rank $x$ above $y$ than $y$ above $x$---is AS representable relative to $\mathsf{Advantage}$ and $\mathsf{Standard}$. Of course, this CCR is not acyclic when $|X|\geq 3$, but we will give acyclic examples in Section~\ref{CCRsection}.
\end{example}

The first point to make about AS representability is that it is a weakening of IIA.

\begin{proposition}\label{IIAtoAS} If a CCR $f$ satisfies IIA, then it is AS representable.
\end{proposition}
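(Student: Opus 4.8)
The plan is to exploit the fact that IIA forces the strict social preference on a pair to be determined entirely by the restricted profile on that pair, so that all the work can be pushed into the $\mathsf{Advantage}$ function while the $\mathsf{Standard}$ function is held constant. I would work in the totally ordered group $(\mathbb{Z},+,\leq)$ with identity $e=0$, and set $\mathsf{Standard}(x,y,\mathbf{R}^{-x,y})=0$ for every triple, which makes (\ref{minimal}) immediate. This choice also dovetails with the motivating idea stated earlier that IIA is precisely the special case in which the standard for strict preference does not vary with context.

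The core construction is the $\mathsf{Advantage}$ function. Given a triple $\langle x,y,\mathbf{R}|_{\{x,y\}}\rangle\in D_A$, I would set $\mathsf{Advantage}(x,y,\mathbf{R}|_{\{x,y\}})$ equal to $1$ if $xP(f(\mathbf{R}))y$, equal to $-1$ if $yP(f(\mathbf{R}))x$, and equal to $0$ otherwise. The step that needs care---and the main obstacle---is checking that this is well-defined, i.e.\ that it depends only on the triple and not on the particular profile $\mathbf{R}$ realizing the restriction. This is exactly where IIA is used: if $\mathbf{R}|_{\{x,y\}}=\mathbf{R}'|_{\{x,y\}}$, then IIA gives $xf(\mathbf{R})y\iff xf(\mathbf{R}')y$, and applying IIA again with the roles of $x$ and $y$ swapped (legitimate since $\mathbf{R}|_{\{y,x\}}=\mathbf{R}|_{\{x,y\}}$) gives $yf(\mathbf{R})x\iff yf(\mathbf{R}')x$; together these yield $xP(f(\mathbf{R}))y\iff xP(f(\mathbf{R}'))y$, so the case distinction is independent of the chosen representative.

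It then remains to verify (\ref{minus}) and (\ref{Piff}). For (\ref{Piff}), observe that with the standard fixed at $0$ the condition $\mathsf{Advantage}(x,y,\mathbf{R}|_{\{x,y\}})>0$ holds precisely when the advantage equals $1$, which by construction happens exactly when $xP(f(\mathbf{R}))y$. For (\ref{minus}), the key background fact is that $P(f(\mathbf{R}))$ is asymmetric: at most one of $xP(f(\mathbf{R}))y$ and $yP(f(\mathbf{R}))x$ can hold, since the former requires \emph{not} $yf(\mathbf{R})x$ while the latter requires $yf(\mathbf{R})x$. Using this, I would check the three cases: if $xP(f(\mathbf{R}))y$ then the advantage of $x$ over $y$ is $1$ and that of $y$ over $x$ is $-1$; if $yP(f(\mathbf{R}))x$ the two values are $-1$ and $1$; and if neither holds both values are $0$. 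In each case the two values are additive inverses, which is exactly (\ref{minus}) in $(\mathbb{Z},+,\leq)$, completing the verification.
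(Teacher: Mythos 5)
Your construction is essentially identical to the paper's own proof: the same constant standard $0$ in $(\mathbb{Z},+,\leq)$ and the same $\{1,-1,0\}$-valued advantage function determined by the strict social preference on the pair, with IIA guaranteeing that the definition depends only on $\mathbf{R}|_{\{x,y\}}$. The paper packages well-definedness by quantifying over ``some profile $\mathbf{S}$ with $\mathbf{S}|_{\{x,y\}}=\mathbf{Q}$'' and then invoking IIA when verifying (\ref{Piff}), whereas you check it up front, but the argument is the same and correct.
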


\begin{proof} Suppose $f$ satisfies IIA. For any $(x,y,\mathbf{Q})\in D_A$ and $(x,y,\mathbf{T})\in D_S$, we define
\begin{eqnarray*}
\mathsf{Advantage}(x,y,\mathbf{Q}) & = & \begin{cases} 1 &\mbox{if }xP(f(\mathbf{S}))y\mbox{ for some profile }\mathbf{S}\mbox{ with }\mathbf{S}|_{\{x,y\}}=\mathbf{Q} \\
-1 &\mbox{if }yP(f(\mathbf{S}))x\mbox{ for some profile }\mathbf{S}\mbox{ with }\mathbf{S}|_{\{x,y\}}=\mathbf{Q} \\
0 &\mbox{otherwise}
\end{cases} \\
\mathsf{Standard}(x,y,\mathbf{T}) &=& 0.
\end{eqnarray*}
That (\ref{minus}) and (\ref{minimal}) hold is immediate. To verify (\ref{Piff}), consider any $x,y\in X$ and profile $\mathbf{R}$. If $xP(f(\mathbf{R}))y$, then by the definitions above, taking $\mathbf{S}=\mathbf{R}$,  $\mathsf{Advantage}(x,y,\mathbf{R}|_{\{x,y\}})=1$ and hence $\mathsf{Advantage}(x,y,\mathbf{R}|_{\{x,y\}})> \mathsf{Standard}(x,y,\mathbf{R}^{-x,y})$.  If $\mathsf{Advantage}(x,y,\mathbf{R}|_{\{x,y\}})> \mathsf{Standard}(x,y,\mathbf{R}^{-x,y})$, then $\mathsf{Advantage}(x,y,\mathbf{R}|_{\{x,y\}})=1$, so there is some profile $\mathbf{S}$ such that $xP(f(\mathbf{S}))y$ and $\mathbf{S}|_{\{x,y\}}=\mathbf{R}|_{\{x,y\}}$. Then by IIA, we have $xP(f(\mathbf{R}))y$.\end{proof}

We will see in Section \ref{CCRsection} that AS representability does not imply IIA. However, it does imply the following weakening of IIA due to Baigent \citeyearpar{Baigent1987}.

\begin{definition}\label{WeakIIA} A CCR $f$ satisfies \textit{weak IIA} if for all $x,y\in X$ and profiles $\mathbf{R},\mathbf{R}'$ such that $\mathbf{R}|_{\{x,y\}}=\mathbf{R}'|_{\{x,y\}}$:
\[xP(f(\mathbf{R}))y\implies \mbox{ \textit{not} } yP(f(\mathbf{R}'))x.\]
\end{definition}

\begin{remark} Another definition of weak IIA, which is equivalent to Definition \ref{WeakIIA} for complete relations (as Baigent assumed $f(\mathbf{R})$ to be) but not in general, has $xP(f(\mathbf{R}))y\implies  xf(\mathbf{R}')y$. However, if we do not initially assume completeness of social preference, then we must be careful to separate constraints on the strict social preference relation from those on the weak social preference relation. We will define two strengthenings of weak IIA that impose constraints on the weak social preference relation in Section \ref{CoveringSection}.
\end{remark}

\begin{proposition}\label{PIweakIIAlemma}
If $f$ is AS representable, then $f$ satisfies weak IIA.
\end{proposition}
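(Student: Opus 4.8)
The plan is to unwind the definition of AS rationalizability and exploit the order-reversing behavior of inversion in a totally ordered group. Suppose $f$ is AS rationalizable, witnessed by a totally ordered group $(G,\circ,\leq)$ with identity $e$ together with functions $\mathsf{Advantage}$ and $\mathsf{Standard}$. Fix candidates $x,y\in X$ and profiles $\mathbf{R},\mathbf{R}'$ with $\mathbf{R}|_{\{x,y\}}=\mathbf{R}'|_{\{x,y\}}$, and assume $xP(f(\mathbf{R}))y$; the goal is to derive that \textit{not} $yP(f(\mathbf{R}'))x$.

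First I would convert the hypothesis into a statement about advantages. By (\ref{Piff}), $xP(f(\mathbf{R}))y$ gives $\mathsf{Advantage}(x,y,\mathbf{R}|_{\{x,y\}})>\mathsf{Standard}(x,y,\mathbf{R}^{-x,y})$, and by (\ref{minimal}) the right-hand side is $\geq e$, so the advantage of $x$ over $y$ strictly exceeds $e$. The key algebraic step is then to apply (\ref{minus}): since $\mathsf{Advantage}(x,y,\mathbf{R}|_{\{x,y\}})=\mathsf{Advantage}(y,x,\mathbf{R}|_{\{x,y\}})^{-1}$ and inversion reverses the order in a totally ordered group (that is, $g>e$ iff $g^{-1}<e$), we obtain $\mathsf{Advantage}(y,x,\mathbf{R}|_{\{x,y\}})<e$.

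Next I would transport this to $\mathbf{R}'$. Because $\mathsf{Advantage}$ depends only on the restriction to $\{x,y\}$ and $\mathbf{R}'|_{\{x,y\}}=\mathbf{R}|_{\{x,y\}}$ (noting that $\{x,y\}=\{y,x\}$, so this restriction is literally the same object), we have $\mathsf{Advantage}(y,x,\mathbf{R}'|_{\{x,y\}})=\mathsf{Advantage}(y,x,\mathbf{R}|_{\{x,y\}})<e$. On the other hand, by (\ref{minimal}) the standard $\mathsf{Standard}(y,x,(\mathbf{R}')^{-y,x})\geq e$. Hence the advantage of $y$ over $x$ in $\mathbf{R}'$ lies strictly below $e$ while the relevant standard is at least $e$, so the advantage cannot exceed the standard; by (\ref{Piff}) this yields \textit{not} $yP(f(\mathbf{R}'))x$, as desired.

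The argument has no real obstacle: the only substantive ingredient is the elementary fact that inversion is order-reversing in a totally ordered group, which I would either cite from the group-theory material in Appendix \ref{GroupAppendix} or establish in one line by left-multiplying $g>e$ through by $g^{-1}$ and using translation invariance. The one place to stay careful is the bookkeeping of argument order---keeping the pair $(x,y)$ distinct from $(y,x)$ across the two invocations of (\ref{Piff})---together with the observation that the two advantage values agree precisely because each sees only the common restriction $\mathbf{R}|_{\{x,y\}}$.
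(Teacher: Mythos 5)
Your proposal is correct and follows essentially the same route as the paper's proof: use (\ref{Piff}) and (\ref{minimal}) to conclude that the advantage of $x$ over $y$ exceeds $e$, apply (\ref{minus}) together with the order-reversing property of inversion (Lemma \ref{GroupLem}) to get that the advantage of $y$ over $x$ is below $e$, and then compare against the standard via (\ref{Piff}) to rule out $yP(f(\mathbf{R}'))x$. The only cosmetic difference is that you invert before transporting to $\mathbf{R}'$ while the paper transports first; the content is identical.
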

\begin{proof}
Assume $f$ is AS representable with functions $\mathsf{Advantage}$ and $\mathsf{Standard}$, which take values in a totally ordered group $(G,\circ,\leq)$ with identity element $e$. Let $x,y\in X$ and $\mathbf{R},\mathbf{R}'$ be two profiles with $\mathbf{R}|_{\{x,y\}}=\mathbf{R}'|_{\{x,y\}}$. Assume $xP(f(\mathbf{R}))y$. By (\ref{Piff}), this implies 
\[ \mathsf{Advantage}(x,y,\mathbf{R}|_{\{x,y\}})>\mathsf{Standard}(x,y,\mathbf{R}^{-x,y}).\]
With (\ref{minimal}), this implies that $\mathsf{Advantage}(x,y,\mathbf{R}|_{\{x,y\}})>e$ and so \[\mathsf{Advantage}(x,y,\mathbf{R}'|_{\{x,y\}})=\mathsf{Advantage}(x,y,\mathbf{R}|_{\{x,y\}})>e.\]
Since $G$ is a totally ordered group, for any $a\in G$, we have that $e< a$ implies  $a^{-1}<e$ (see Lemma \ref{GroupLem}). Thus, (\ref{minus}), (\ref{minimal}), and the previous line together imply that
\[\mathsf{Advantage}(y,x,\mathbf{R}'|_{\{x,y\}})<e\leq \mathsf{Standard}(y,x,\mathbf{R}'^{-x,y}).\]
It follows  by (\ref{Piff}) that \textit{not} $yP(f(\mathbf{R}'))x$.\end{proof}

However, as we will see, the converse of Proposition \ref{PIweakIIAlemma} does not hold. The missing piece is that  AS representability requires that restricted profiles on a pair $x,y$ can be \textit{ordered} (allowing for ties) by strength of the intrinsic advantage of $x$ over $y$ \textit{in a context-independent way}.  Weak IIA, on the other hand, is consistent with there being no such context-independent ordering: it is consistent with weak IIA that there be two restricted profiles $\mathbf{R}|_{\{x,y\}}$ and $\mathbf{R}'|_{\{x,y\}}$ such that in one context, $\mathbf{R}|_{\{x,y\}}$ suffices for a strict social preference for $x$ over $y$ while $\mathbf{R}'|_{\{x,y\}}$ does not, and in another context, $\mathbf{R}'|_{\{x,y\}}$ suffices for a strict social preference for $x$ over $y$ while $\mathbf{R}|_{\{x,y\}}$ does not. By adding to weak IIA the following condition of \textit{orderability}, we rule out this kind of non-uniformity and ensure that restricted profiles can be ordered by strength of intrinsic advantage in a context-independent way, which will lead to a characterization of AS representability.

\begin{definition}\label{orderable}
Let $f$ be a CCR. Let $\mathcal{P}^+(x,y)$ be the set of all $\{x,y\}$-profiles $\mathbf{Q}$ for which there is some profile $\mathbf{R}$ with $\mathbf{R}|_{\{x,y\}}=\mathbf{Q}$ and $xP(f(\mathbf{R}))y$. We say that $f$ is \textit{orderable} if the following holds for every $x,y\in X$:
\begin{itemize}
   \item There is a transitive and complete relation $\leqslant$ on $\mathcal{P}^+(x,y)$ such that for any profiles $\mathbf{R}^1$ and $\mathbf{R}^2$, if $\mathbf{R}^1|_{\{x,y\}}\leqslant \mathbf{R}^2|_{\{x,y\}}$ and $\mathbf{R}^{1-x,y}=\mathbf{R}^{2-x,y}$, then  $xP(f(\mathbf{R}^1))y$ implies $xP(f(\mathbf{R}^2))y$.
\end{itemize}
\end{definition}

Orderability is perhaps best understood by seeing how it can fail, as in the following example. Readers eager for the punchline on orderability may skip to Proposition \ref{converse}.

\begin{example}\label{DodgsonExample} We consider a CCR based on the well-known Dodgson voting rule (\citealt{Fishburn1977,Fishburn1982}, \citealt{Brandt2009}, \citealt{Caragiannis2016}). For simplicity, we only define the Dodgson CCR for profiles in which each voter submits a linear order of the candidates, i.e., with no indifference. Since it is not orderable with respect to this limited domain, no extension of the CCR to arbitrary profiles is  orderable either. A linear profile $\mathbf{R}'$ is obtained from another $\mathbf{R}$ by an \textit{adjacent inversion} if one voter flips two candidates immediately next to each other in her ranking.\footnote{Formally, there is some $i\in V$ and $x,y\in X$ such that $x P(\mathbf{R}_i) y$, and for all $z\in X\setminus\{x,y\}$, either $z P(\mathbf{R}_i) x$ or $yP(\mathbf{R}_i) z$; and $\mathbf{R}'$ differs from $\mathbf{R}$ only in that $yP(\mathbf{R}_i')x$.} The \textit{Dodgson score} of a candidate $x$ in $\mathbf{R}$ is the minimal number of adjacent inversions needed to obtain a profile $\mathbf{R}'$ in which $x$ is the \textit{Condorcet winner}, i.e., in which $\mathrm{Margin}_\mathbf{R}(x,y)>0$ for all $y\in X\setminus \{x\}$. The Dodgson CCR is defined by \[\mbox{$xf(\mathbf{R})y$ if the Dodgson score of $x$ is less than or equal to that of $y$.}\]

 The Dodgson CCR is not orderable. To see this, we consider a profile $\mathbf{R}$  identified by Fishburn \citeyearpar[p.~132]{Fishburn1982}, where the number above a ranking indicates how many voters have that ranking:
\begin{center}
$\mathbf{R}$\quad \begin{tabular}{ccccccc}
$3$ & $2$ & $9$ & $5$ & $9$ & $13$ &$2$   \\\hline
$y$ & $y$ &$y$ &  $x$ & $x$& $z$ & $\boldsymbol{z}$ \\
$x$ & $\boldsymbol{x}$ & $w$ & $z$ & $y$& $x$ & $\boldsymbol{x}$ \\
$z$ & $\boldsymbol{z}$ & $z$ &  $y$ & $w$& $w$ & $w$ \\
$w$ & $w$ & $x$ &  $w$ & $z$& $y$ & $y$ \\
\end{tabular}\qquad\qquad $\mathbf{R}'$\quad\begin{tabular}{ccccccc}
$3$ & $2$ & $9$ & $5$ & $9$ & $13$ &$2$   \\\hline
$y$ & $y$ &$y$ &  $x$ & $x$& $z$ & $\boldsymbol{x}$ \\
$x$ & $\boldsymbol{z}$ & $w$ & $z$ & $y$& $x$ & $\boldsymbol{z}$ \\
$z$ & $\boldsymbol{x}$ & $z$ &  $y$ & $w$& $w$ & $w$ \\
$w$ & $w$ & $x$ &  $w$ & $z$& $y$ & $y$ \\
\end{tabular} \vspace{.1in}

$\mathbf{S}$\quad \begin{tabular}{ccccccc}
$3$ & $2$ & $9$ & $5$ & $9$ & $13$ &$2$   \\\hline
$y$ & $\boldsymbol{x}$ &$y$ &  $x$ & $x$& $z$ & $y$ \\
$x$ & $\boldsymbol{z}$ & $w$ & $z$ & $y$& $x$ & $\boldsymbol{z}$ \\
$z$ & $w$ & $z$ &  $y$ & $w$& $w$ & $\boldsymbol{x}$ \\
$w$ & $y$ & $x$ &  $w$ & $z$& $y$ & $w$ \\
\end{tabular}\qquad\qquad $\mathbf{S}'$\quad\begin{tabular}{ccccccc}
$3$ & $2$ & $9$ & $5$ & $9$ & $13$ &$2$   \\\hline
$y$ & $\boldsymbol{z}$ &$y$ &  $x$ & $x$& $z$ & $y$ \\
$x$ & $\boldsymbol{x}$ & $w$ & $z$ & $y$& $x$ & $\boldsymbol{x}$ \\
$z$ & $w$ & $z$ &  $y$ & $w$& $w$ & $\boldsymbol{z}$ \\
$w$ & $y$ & $x$ &  $w$ & $z$& $y$ & $w$ \\
\end{tabular}
\end{center}
 
 In  profile $\mathbf{R}$, the majority margins are:
\begin{itemize}
    \item  $\mathrm{Margin}_\mathbf{R}(x,y)=29-14=15$ and $\mathrm{Margin}_\mathbf{R}(x,w)=34-9=25$;
    \item $\mathrm{Margin}_\mathbf{R}(y,z)=23-20=3$ and $\mathrm{Margin}_\mathbf{R}(y,w)=28-15=13$;
    \item $\mathrm{Margin}_\mathbf{R}(z,x)=24-19=5$ and $\mathrm{Margin}_\mathbf{R}(z,w)=25-18=7$.
\end{itemize}
Thus, in $\mathbf{R}$ there is a cycle ($x$ beats $y$ beats $z$ beats $x$) and hence no Condorcet winner. Now from  $\mathbf{R}$, 3 $zx$ to $xz$ inversions (say, by voters in the third column) will make $x$ majority preferred to $z$ and hence the Condorcet winner. By contrast, at least $4$ adjacent inversions are required to make $z$ majority preferred to $y$ and hence the Condorcet winner, because every ranking that has $y$ above $z$ has a third alternative between them. Thus, $x$'s Dodgson score of 3 is lower than $z$'s score of 4, so $xP(f(\mathbf{R}))z$. 

Now the profile $\mathbf{R}'$ above differs from $\mathbf{R}$ only on $x,z$ (see the spots in bold), i.e., $\mathbf{R}^{-x,z}=\mathbf{R}'^{-x,z}$. Consider what happens when we change the intrinsic advantage between $x$ and $z$ from $\mathbf{R}|_{\{x,z\}}$ to $\mathbf{R}'|_{\{x,z\}}$. In $\mathbf{R}'$, just 2 $yz$ to $zy$ inversions in the second column will make $z$ the Condorcet winner, whereas at least $3$ adjacent inversions are required to make $x$ the Condorcet winner. Thus, $zP(f(\mathbf{R}'))x$. The moral is that \textit{moving from $\mathbf{R}|_{\{x,z\}}$ to $\mathbf{R}'|_{\{x,z\}}$ has helped $z$ against~$x$ in the context of $\mathbf{R}^{-x,z}=\mathbf{R}'^{-x,z}$}, since $xP(f(\mathbf{R}))z$ but $zP(f(\mathbf{R}'))x$.

Yet moving from $\mathbf{R}|_{\{x,z\}}$ to $\mathbf{R}'|_{\{x,z\}}$ can \textit{hurt} $z$ against~$x$ in another context. In the profile $\mathbf{S}$ above, the majority margins are exactly as in $\mathbf{R}$. In $\mathbf{S}$, 2 $yz$ to $zy$ inversions in the last column make $z$ the Condorcet winner, whereas at least 3 adjacent inversions are required to make $x$ the Condorcet winner.  Thus, $zP(f(\mathbf{S}))x$. Now $\mathbf{S}'$ differs from $\mathbf{S}$ only on $x,z$, i.e., $\mathbf{S}^{-x,z}=\mathbf{S}'^{-x,z}$. Moreover, the change in intrinsic advantage from $\mathbf{S}|_{\{x,z\}}$ to $\mathbf{S}'|_{\{x,z\}}$ is exactly the same as above, since $\mathbf{S}|_{\{x,z\}}=\mathbf{R}|_{\{x,z\}}$ and $\mathbf{S}'|_{\{x,z\}}=\mathbf{R}'|_{\{x,z\}}$. Yet now \textit{the same change in intrinsic advantage between $x$ and $z$ hurts $z$ in the context of $\mathbf{S}^{-x,z}=\mathbf{S}'^{-x,z}$}. In $\mathbf{S}'$, while 3 $zx$ to $xz$ inversions will make $x$ the Condorcet winner, at least 4 adjacent inversions are required to make $z$ the Condorcet winner, because every ranking that has $y$ above $z$ has a third alternative between them. Thus, $xP(f(\mathbf{S}'))z$.

Because the same change in intrinsic advantage can help $z$ against $x$ in one context and hurt $z$ against $x$ in another context, the Dodgson CCR is not orderable.\end{example}

A modification of the Dodgson CCR in Example \ref{DodgsonExample} gives a natural example of a CCR satisfying weak IIA but not orderability.

\begin{example} Define the Majority Dodgson CCR $g$ by: $xg(\mathbf{R})y$ if either (i) the Dodgson scores of $x$ and $y$ are equal or (ii)  the Dodgson score of $x$ is less than or equal to that of $y$ and $\mathrm{Margin}_\mathbf{R}(x,y)> 0$. Thus, $xP(g(\mathbf{R}))y$ only if (ii) holds. Since $xP(g(\mathbf{R}))y$ implies $\mathrm{Margin}_\mathbf{R}(x,y)> 0$, it is immediate that $g$ satisfies weak IIA. However, we can see that $g$ is not orderable in the same way we saw that Dodgson is not orderable in Example \ref{DodgsonExample}:   moving from $\mathbf{R}|_{\{x,z\}}$ to $\mathbf{R}'|_{\{x,z\}}$ helps $z$ against~$x$ in the context of $\mathbf{R}^{-x,z}=\mathbf{R}'^{-x,z}$, since $xN(g(\mathbf{R}))z$ (in contrast to $xP(f(\mathbf{R}))z$ for the Dodgson CCR $f$) but $zP(g(\mathbf{R}'))x$; yet the same change in intrinsic advantage hurts $z$ in the context of $\mathbf{S}^{-x,z}=\mathbf{S}'^{-x,z}$, since $zP(g(\mathbf{S}))x$ and  $xN(g(\mathbf{S}'))z$ (in contrast to $xP(f(\mathbf{S}'))z$ for the Dodgson CCR $f$).\end{example}

The example showing that Dodgson is not orderable required 5 candidates and many voters. However, we can show under weaker cardinality assumptions that there are CCRs---even transitive CCRs---satisfing weak IIA but not orderability.

\begin{restatable}{proposition}{converse}\label{converse} Assume $|V|\geq 5$ and $|X|\geq 3$. Then there are transitive CCRs satisfying anonymity, neutrality, Pareto, and weak IIA that are not orderable.
\end{restatable}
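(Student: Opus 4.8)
The plan is to build the CCR as the reflexive closure of a transitive strict preference relation, which makes $f$ transitive automatically while giving us the freedom of an \emph{incomplete} social relation (recall that ``transitive CCR'' does not require completeness). Concretely, for each profile $\mathbf{R}$ I would specify a strict partial order $P(\mathbf{R})$ on $X$ and set $a f(\mathbf{R}) b$ iff $a=b$ or $a P(\mathbf{R}) b$; since $P(\mathbf{R})$ is irreflexive, transitive, and antisymmetric, one checks that $P(f(\mathbf{R}))=P(\mathbf{R})$ and that $f(\mathbf{R})$ is transitive. Weak IIA I would secure by the device already used for the Majority Dodgson CCR: require $a P(\mathbf{R}) b$ to imply $Margin_{\mathbf{R}}(a,b)>0$, so that two profiles agreeing on $\{a,b\}$ cannot have opposite strict preferences there. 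Pareto is secured by letting $P(\mathbf{R})$ always contain the unanimity relation (with $a$ above $b$ iff every voter strictly prefers $a$ to $b$), which is itself a strict partial order meeting both conditions. Anonymity and neutrality are obtained by defining $P(\mathbf{R})$ through a rule that is symmetric in the voters and equivariant under relabeling candidates; this is what lets a single construction serve every $|X|\geq 3$, with the failure of orderability witnessed on a fixed triple $\{x,y,z\}$ and the remaining candidates treated uniformly. Larger electorates are handled by padding with fully indifferent voters, who change neither margins nor the unanimity relation.

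The non-orderability is then engineered by a completeness trap. I would take $P(\mathbf{R})$ to consist of the unanimity edges together with those pairs $(a,b)$ for which a \emph{context-selected} test on the $a$-versus-$b$ vote split (the numbers of voters ranking $a$ above $b$, below $b$, and indifferent) succeeds, where the selection depends only on $\mathbf{R}^{-a,b}$ and every admissible test succeeds only when $Margin_{\mathbf{R}}(a,b)>0$. I would then exhibit two $\{x,y\}$-profiles $\mathbf{Q},\mathbf{Q}'$, both in $\mathcal{P}^+(x,y)$, and two contexts $C_1,C_2$ (shared values of $\mathbf{R}^{-x,y}$) realizing the flip pattern: with $C_1$, the profile $\mathbf{Q}$ yields $xP(f(\mathbf{R}))y$ but $\mathbf{Q}'$ does not; with $C_2$, the profile $\mathbf{Q}'$ yields $xP(f(\mathbf{R}))y$ but $\mathbf{Q}$ does not. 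Given any transitive and complete $\leqslant$ on $\mathcal{P}^+(x,y)$, context $C_1$ rules out $\mathbf{Q}\leqslant\mathbf{Q}'$ (else orderability would propagate $xP(f(\mathbf{R}))y$ to $\mathbf{Q}'$ in $C_1$), while context $C_2$ rules out $\mathbf{Q}'\leqslant\mathbf{Q}$; as these jointly contradict completeness, no such $\leqslant$ exists, so $f$ fails orderability. To make the flip compatible with weak IIA I would exploit the margin-versus-ratio tension discussed earlier in the paper: with $|V|=5$, take $\mathbf{Q}$ to have a large margin but some opposition (four voters ranking $x$ above $y$, one below) and $\mathbf{Q}'$ to have a smaller margin but no opposition (one voter ranking $x$ above $y$, the rest indifferent). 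Both have positive margin, yet the two tests ``margin at least $\tau$'' and ``no opposing voters'' disagree about which split is stronger, and $C_1$, $C_2$ are arranged to select these two tests respectively. Choosing contexts in which $x$ and $y$ are adjacent in every voter's ranking guarantees that each of $\mathbf{Q},\mathbf{Q}'$ really is a legitimate completion, so $C_1$ and $C_2$ genuinely support both.

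The main obstacle is transitivity, and more precisely the requirement that the non-unanimity edges never combine---with one another or with unanimity edges---to create a cycle or a missing transitive consequence, \emph{for every} profile, not merely for the four witnesses. Because neutrality forces the triggering condition to fire on the entire relabeling orbit of each witnessing profile, I must check that no profile in these orbits simultaneously demands $xP(f(\mathbf{R}))y$ and $yP(f(\mathbf{R}))x$---avoided by letting the tests depend on margin-asymmetric splits, so that swapping $x$ and $y$ changes the triggering data---and that fired edges remain ``isolated'' in the unanimity order. The cleanest way to enforce the latter is to let the non-unanimity tests succeed only in contexts where the third candidate is unanimity-incomparable to both $x$ and $y$ and where nothing lies unanimity-above $x$ or unanimity-below $y$, so that adding the edge $xP(f(\mathbf{R}))y$ introduces no new chains. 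The delicate part of the write-up is thus not the flip itself but the global verification that the strengthened relation is a strict partial order for all profiles consistent with the neutral, anonymous firing rule; once that is in place, transitivity, Pareto, and weak IIA are immediate from the design, and non-orderability follows from the completeness trap above.
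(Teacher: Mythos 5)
Your overall architecture is the same as the paper's: take the unanimity (Pareto) relation as a base, add extra strict preferences that fire only under a context-dependent trigger implying a positive majority margin (which secures weak IIA), and derive non-orderability from a completeness trap in which two $\{x,y\}$-restrictions $\mathbf{Q},\mathbf{Q}'$, both in $\mathcal{P}^+(x,y)$, reverse their relative strength across two shared contexts. The paper realizes this with a fully explicit three-case rule: an extra linear order is imposed when exactly one voter dissents on a pair placed at the \emph{top} of every ballot, or when a dissenting minority coalition of size at least two dissents on a pair placed at the \emph{bottom}; otherwise the unanimity rule applies. The four witness profiles then fall under these cases in exactly the flip pattern you describe, and transitivity is immediate because the rule outputs either a linear order or the unanimity relation on every profile.

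The gap is that you never actually construct the CCR. The ``context-selected test'' is left as a placeholder: you do not say which anonymous, neutral property of $\mathbf{R}^{-x,y}$ selects the test ``margin at least $\tau$'' versus ``no opposition,'' nor do you exhibit the two contexts $C_1,C_2$ concretely (the paper's choice is the position of the $\{x,y\}$ block at the top versus the bottom of the ballots, combined with the size of the dissenting coalition). More importantly, you yourself flag the global transitivity verification as ``the delicate part'' and then do not carry it out. Your proposed isolation condition (third candidates unanimity-incomparable to $x$ and $y$, nothing unanimity-above $x$ or unanimity-below $y$) only prevents a fired edge from chaining with \emph{unanimity} edges; it does not rule out two fired edges $x\to y$ and $y\to z$ appearing in the same profile---both can have positive margins, and nothing in your description prevents both context conditions from holding at once---which would force $x\to z$ and break transitivity unless you prove that at most one extra edge fires per profile or that fired edges cannot chain. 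Since the entire content of the proposition is the existence of such a CCR, a proof must pin the rule down and perform these checks; as written, the proposal is a credible plan whose hardest step is deferred.
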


Conversely, orderability does not imply weak IIA, as shown by  the following.

\begin{example}\label{CopelandEx} We consider the Copeland CCR (see \citealt{Rubinstein1980}) based on the Copeland voting method (\citealt{Copeland1951}). For a profile $\mathbf{R}$, the \textit{Copeland score} of a candidate $x$ in $\mathbf{R}$ is the number of candidates to whom $x$ is majority preferred minus the number who are majority preferred to $x$:
\[|\{y\in X\mid \mathrm{Margin}_\mathbf{R}(x,y)>0\}| - |\{y\in X\mid \mathrm{Margin}_\mathbf{R}(y,x)>0\}|.\]
The Copeland CCR $f$ is defined by  \[\mbox{$xf(\mathbf{R})y$ if the Copeland score of $x$ is at least that of $y$.}\] It is easy to construct pairs of profiles (assuming $|V|\geq 2$ and $|X|\geq 3$) such that:  all voters rank $x$ and $y$ in the same way; in one profile, $x$ has a higher Copeland score than $y$; and in the other profile, $y$ has a higher Copeland score than $x$. Thus, the Copeland CCR violates weak IIA. Finally, order $\mathcal{P}^+(x,y)$ by $\mathbf{Q}\leqslant \mathbf{Q}'$ if $\mathrm{Margin}_\mathbf{Q}(x,y)\leq \mathrm{Margin}_{\mathbf{Q}'}(x,y)$. The Copeland CCR is clearly orderable with this ordering.\end{example}

An easy further analysis of the properties of the Copeland CCR in Example \ref{CopelandEx} yields the following.

\begin{proposition}\label{OrderableNotWeakIIA} Assume $|V|\geq 2$ and $|X|\geq 3$. Then there are SWFs satisfying anonymity, neutrality, Pareto, and orderability that do not satisfy weak IIA.
\end{proposition}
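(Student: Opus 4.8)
The plan is to take the Copeland CCR $f$ of Example~\ref{CopelandEx} as the witness and simply verify that it has all four listed properties while still failing weak IIA. Example~\ref{CopelandEx} already records the two facts that would be most annoying to re-derive: that $f$ is orderable via the ordering $\mathbf{Q}\leqslant\mathbf{Q}'$ iff $Margin_\mathbf{Q}(x,y)\le Margin_{\mathbf{Q}'}(x,y)$, and that $f$ violates weak IIA. For orderability, the one-line reason is that fixing the context $\mathbf{R}^{-x,y}$ fixes every margin except $Margin_\mathbf{R}(x,y)$, and the Copeland score difference between $x$ and $y$ equals a context-fixed constant plus $2\,\mathrm{sgn}(Margin_\mathbf{R}(x,y))$, which is weakly increasing in $Margin_\mathbf{R}(x,y)$; so weakly raising the margin cannot turn $xP(f(\mathbf{R}))y$ into its failure. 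It therefore remains to check that $f$ is an SWF satisfying anonymity, neutrality, and Pareto.

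That $f$ is an SWF is immediate: $xf(\mathbf{R})y$ holds iff the Copeland score of $x$ is at least that of $y$, and ``at least'' on the integer-valued scores is automatically complete and transitive, with $xP(f(\mathbf{R}))y$ iff $x$'s score strictly exceeds $y$'s. Anonymity and neutrality reduce to invariances of margins. The Copeland score is a function of the margins $Margin_\mathbf{R}(\cdot,\cdot)$, which are unchanged by permuting voters, giving anonymity. For neutrality, a permutation $\pi$ of $X$ satisfies $Margin_{\mathbf{R}^\pi}(\pi(a),\pi(b))=Margin_\mathbf{R}(a,b)$, so the Copeland score of $\pi(a)$ in $\mathbf{R}^\pi$ equals that of $a$ in $\mathbf{R}$; hence $\pi(a)f(\mathbf{R}^\pi)\pi(b)$ iff $af(\mathbf{R})b$, which is exactly $\pi f(\mathbf{R})=f(\mathbf{R}^\pi)$.

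The one step with genuine content is Pareto, and this is the main obstacle. Suppose every voter strictly prefers $x$ to $y$. The key lemma is that $Margin_\mathbf{R}(x,z)\ge Margin_\mathbf{R}(y,z)$ for every third candidate $z$: since $x$ lies strictly above $y$ in each (weak-order) ballot, a short per-voter case check on the position of $z$ shows each voter contributes at least as much to $Margin_\mathbf{R}(x,z)$ as to $Margin_\mathbf{R}(y,z)$. Consequently every third candidate beaten by $y$ is also beaten by $x$, and every third candidate beating $x$ also beats $y$; head-to-head, $x$ beats $y$ since $Margin_\mathbf{R}(x,y)=|V|>0$. Comparing the two Copeland scores term by term, $x$ weakly dominates $y$ on all third-candidate comparisons and strictly dominates on the mutual comparison, so $x$'s score exceeds $y$'s by at least $2$; thus $xP(f(\mathbf{R}))y$, as Pareto requires.

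Finally, to ensure the weak-IIA failure holds for \emph{all} $|V|\ge 2$ and $|X|\ge 3$, I would give an explicit pair of profiles on three candidates and two voters and then pad. The subtlety to respect is that the two profiles must agree \emph{voter-by-voter} on the $\{x,y\}$ comparison, not merely on margins; one arranges the two voters to split on $x$ vs.\ $y$ (so $Margin(x,y)=0$ in both profiles) and engineers the score flip purely through the placement of a third candidate $z$ relative to $x$ and $y$. To reach larger $V$ and $X$, add voters indifferent among all candidates (contributing nothing to any margin) and add extra candidates at the bottom of every ballot in the same fixed order in both profiles (beaten by everyone, hence shifting the scores of $x,y,z$ equally); neither padding disturbs the score comparison nor the agreement $\mathbf{R}|_{\{x,y\}}=\mathbf{R}'|_{\{x,y\}}$, so $xP(f(\mathbf{R}))y$ and $yP(f(\mathbf{R}'))x$ persist and weak IIA fails.
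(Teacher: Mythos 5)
Your proposal is correct and takes the same route as the paper, which establishes Proposition \ref{OrderableNotWeakIIA} by "an easy further analysis" of the Copeland CCR of Example \ref{CopelandEx}: Copeland is the witness, orderability comes from ordering $\mathcal{P}^+(x,y)$ by $Margin(x,y)$, and weak IIA fails via a pair of profiles agreeing voter-by-voter on $x,y$ with flipped score comparisons. Your write-up simply fills in the verifications (SWF, anonymity, neutrality, the per-voter margin-domination lemma for Pareto, and the explicit two-voter, three-candidate example with padding) that the paper leaves to the reader, and all of these check out.
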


Having shown that neither weak IIA nor orderability implies the other, we now give our characterization of AS representability in terms of both, the main technical result of this paper.

\begin{restatable}{theorem}{characterization}\label{characterization}
A CCR $f$ is AS representable if and only if $f$  satisfies weak~IIA and orderability.\end{restatable}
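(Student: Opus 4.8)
The plan is to prove the two directions separately. The forward direction---that AS rationalizability implies weak IIA and orderability---is the easy one, since weak IIA is exactly Proposition \ref{PIweakIIAlemma}, so it remains only to derive orderability. Given rationalizing functions $\mathsf{Advantage},\mathsf{Standard}$ valued in a totally ordered group $(G,\circ,\le)$, for each pair $x,y$ I would define $\leqslant$ on $\mathcal{P}^+(x,y)$ by $\mathbf{Q}\leqslant\mathbf{Q}'$ iff $\mathsf{Advantage}(x,y,\mathbf{Q})\le\mathsf{Advantage}(x,y,\mathbf{Q}')$; this is transitive and complete as the pullback of the total order on $G$. If $\mathbf{R}^1|_{\{x,y\}}\leqslant\mathbf{R}^2|_{\{x,y\}}$ with a common context $\mathbf{R}^{1-x,y}=\mathbf{R}^{2-x,y}$ and $xP(f(\mathbf{R}^1))y$, then (\ref{Piff}) together with monotonicity of the advantage gives $\mathsf{Advantage}(x,y,\mathbf{R}^2|_{\{x,y\}})\ge\mathsf{Advantage}(x,y,\mathbf{R}^1|_{\{x,y\}})>\mathsf{Standard}(x,y,\mathbf{R}^{1-x,y})=\mathsf{Standard}(x,y,\mathbf{R}^{2-x,y})$, so (\ref{Piff}) yields $xP(f(\mathbf{R}^2))y$, establishing orderability.

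The converse is the substantive direction. Here I would construct a rationalization one unordered pair $\{x,y\}$ at a time: the inputs to $\mathsf{Advantage}$ and $\mathsf{Standard}$ carry the pair as a label, and every comparison appearing in (\ref{minus})--(\ref{Piff}) relates quantities tagged with the same pair, so there is no interaction across pairs and it suffices to land all values in the single totally ordered group $(\mathbb{Z},+,\le)$. Fix $x,y$ and let $Q$ be the finite set of all $\{x,y\}$-profiles. The crucial observation is that weak IIA forces $\mathcal{P}^+(x,y)$ and $\mathcal{P}^+(y,x)$ to be disjoint: if some $\mathbf{Q}$ belonged to both, there would be profiles $\mathbf{R},\mathbf{R}'$ restricting to $\mathbf{Q}$ on $\{x,y\}$ with $xP(f(\mathbf{R}))y$ and $yP(f(\mathbf{R}'))x$, contradicting weak IIA. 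Thus $Q$ partitions into $\mathcal{P}^+(x,y)$, $\mathcal{P}^+(y,x)$, and a remainder $N$ of restricted profiles that never produce a strict social preference either way.

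I would then use orderability to define the advantage $A(\mathbf{Q}):=\mathsf{Advantage}(x,y,\mathbf{Q})$ as follows, setting $\mathsf{Advantage}(y,x,\mathbf{Q}):=-A(\mathbf{Q})$ so that (\ref{minus}) holds automatically. Embed the total preorder $\leqslant$ from Definition \ref{orderable} on $\mathcal{P}^+(x,y)$, modulo ties, order-preservingly into the positive even integers, and let $A$ take these values; symmetrically embed the corresponding order on $\mathcal{P}^+(y,x)$ into the positive even integers and let $A$ take the \emph{negatives} of those values; and set $A(\mathbf{Q}):=0$ for $\mathbf{Q}\in N$. The point of using even integers is to leave odd-integer gaps for standards. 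For a context $\mathbf{T}=\mathbf{R}^{-x,y}$, let $W$ be the set of restricted profiles $\mathbf{Q}$ compatible with $\mathbf{T}$ (combining with it into a genuine profile $\mathbf{T}\oplus\mathbf{Q}$) that satisfy $xP(f(\mathbf{T}\oplus\mathbf{Q}))y$. Orderability says precisely that $W$ is an up-set for $\leqslant$, hence among compatible $\mathbf{Q}$ has the form $\{\mathbf{Q}:A(\mathbf{Q})\ge m\}$ for its minimal value $m$ (a positive even integer) when $W\ne\varnothing$; I would then set $\mathsf{Standard}(x,y,\mathbf{T}):=m-1$, and set it above every $A$-value when $W=\varnothing$. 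Every such standard is $\ge 1>0$, so (\ref{minimal}) holds, and for compatible $\mathbf{Q}$ we have $A(\mathbf{Q})>m-1$ iff $A(\mathbf{Q})\ge m$ iff $\mathbf{Q}\in W$, which is (\ref{Piff}); the $(y,x)$ direction is handled identically using $-A$ and the order on $\mathcal{P}^+(y,x)$.

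The main obstacle is the converse direction, and within it the two points that make the construction cohere. First, recognizing that weak IIA is exactly what guarantees disjointness of $\mathcal{P}^+(x,y)$ and $\mathcal{P}^+(y,x)$, so that a single signed advantage scale---with the $(y,x)$ values obtained by group inversion---can represent both directions at once. Second, arranging the standards to be genuine thresholds lying in the gaps between consecutive advantage values while respecting the requirement that every standard be $\ge e$; this is what dictates sending $N$ to the identity and spacing the advantage values across the even integers so that neutral and losing restricted profiles fall strictly below every standard. Verifying that orderability delivers an up-set with a clean integer threshold, and checking these strict inequalities, is where the care is needed; the absence of any cross-pair comparison then lets the single-pair constructions be assembled into one rationalization over $(\mathbb{Z},+,\le)$ with no further bookkeeping.
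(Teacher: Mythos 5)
Your proposal is correct and follows essentially the same route as the paper: the forward direction combines Proposition \ref{PIweakIIAlemma} with pulling back the group order along $\mathsf{Advantage}$ to get orderability, and the converse builds a pairwise integer-valued rationalization in which weak IIA gives disjointness of $\mathcal{P}^+(x,y)$ and $\mathcal{P}^+(y,x)$, the equivalence classes of $\leqslant$ supply the advantage scale (signed via (\ref{minus}), zero on the neutral remainder), and the standard for a context is a threshold just below the least advantage class that wins in that context, with orderability guaranteeing the winning set is an up-set. Your use of even integers with odd gaps is a cosmetic variant of the paper's choice of consecutive class indices $i$ with standards $j-1$; otherwise the arguments coincide.
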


Together with Propositions \ref{converse} and \ref{OrderableNotWeakIIA}, Theorem \ref{characterization} implies that neither weak IIA nor orderability alone implies AS representability. The logical relations between IIA, AS representability, weak IIA, and orderability are summarized in Figure \ref{LogicalFig}.\footnote{We leave it as an open problem to find an axiom that can be added to weak IIA and orderability (and is independent of these axioms) to obtain a characterization of IIA (or at least a version of IIA for the strict social relation $P(f(\mathbf{R}))$, which is the relation at issue in weak IIA and orderability).}

\newcommand\xdownarrow[1][2ex]{%
   \mathrel{\rotatebox{90}{$\Leftarrow{\rule{#1}{0pt}}$}}
}

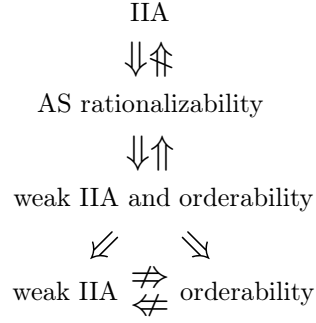
\begin{figure}
\begin{center}
\begin{tikzpicture}[scale=1.25]

\node at (0,0) (IIA) {IIA}; 
\node at (-.15,-.5)  {\scalebox{1.5}{$\Downarrow$}};
\node at (.15,-.5)  {\scalebox{1.5}{\rotatebox{90}{$\nRightarrow$}}};

\node at (0,-1)  {AS representability}; 
\node at (0,-1.5)  {\scalebox{1.5}{$\Downarrow\Uparrow$}};
\node at (0,-2)  {\,\,\,\, weak IIA and orderability}; 
\node at (-.5,-2.5)  {\scalebox{1.5}{\rotatebox{-45}{$\Downarrow$}}}; 

\node at (.5,-2.5) 
{\scalebox{1.5}{\rotatebox{45}{$\Downarrow$}}}; 
\node at (0,-3)  {\,\,\,\, weak IIA \qquad orderability}; 
\node at (0,-2.85)  {\scalebox{1.5}{$\nRightarrow$}}; 
\node at (0,-3.15)  {\scalebox{1.5}{$\nLeftarrow$}};

\end{tikzpicture}
\end{center}
    \caption{logical relations between axioms}\label{LogicalFig}
\end{figure}

The proof that orderability and weak IIA together imply AS representability in fact establishes a stronger conclusion: $f$ is AS representable with integer-valued advantage and standard functions.\footnote{The proof of Theorem \ref{characterization} also provides a canonical construction of the integer-valued advantage and standard functions for a CCR satisfying orderability and weak IIA.} We thus have the following corollary to Theorem~\ref{characterization}.
\begin{corollary}
If a CCR $f$ is AS representable, then $f$ is AS representable with advantage and standard functions taking values in $\mathbb{Z}$ (viewed as a totally ordered group with the usual addition operation and order).
\end{corollary}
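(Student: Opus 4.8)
The plan is to derive the corollary directly from Theorem \ref{characterization} by inspecting the construction that proves its harder direction. Since $f$ is AS rationalizable, Theorem \ref{characterization} gives that $f$ satisfies weak IIA and orderability, so it suffices to show that the advantage and standard functions witnessing AS rationalizability can be taken to be integer-valued whenever $f$ has these two properties. The key structural fact I would exploit is finiteness: because $X$ and $V$ are finite, for each pair $x,y$ there are only finitely many $\{x,y\}$-profiles (indeed $3^{|V|}$ of them), so every order built below lives on a finite set and can be faithfully represented by integers.

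First I would record a consequence of weak IIA: for each $\{x,y\}$-profile $\mathbf{Q}$, the sets $\mathcal{P}^+(x,y)$ and $\mathcal{P}^+(y,x)$ are disjoint, since if some context makes $\mathbf{Q}$ suffice for $xP(f(\mathbf{R}))y$, then weak IIA forbids any context with the same restriction from yielding $yP(f(\mathbf{R}'))x$. Thus for each $\mathbf{Q}$ exactly one of three cases holds: $\mathbf{Q}\in\mathcal{P}^+(x,y)$, $\mathbf{Q}\in\mathcal{P}^+(y,x)$, or neither. Using the total preorder $\leqslant$ on $\mathcal{P}^+(x,y)$ supplied by orderability (Definition \ref{orderable}), I would choose an integer-valued function representing $\leqslant$ by strictly positive integers, assigning equal values exactly to $\leqslant$-equivalent profiles, and set $\mathsf{Advantage}(x,y,\mathbf{Q})$ equal to this positive rank on $\mathcal{P}^+(x,y)$, to the negative of the corresponding $y$-versus-$x$ rank on $\mathcal{P}^+(y,x)$, and to $0$ otherwise. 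Defining $\mathsf{Advantage}(y,x,\mathbf{Q})$ as the negation then makes (\ref{minus}) hold by construction.

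Next I would define the standard for a context $\mathbf{T}=\mathbf{R}^{-x,y}$ to be the maximum of $0$ and the largest value $\mathsf{Advantage}(x,y,\mathbf{Q})$ over all $\{x,y\}$-profiles $\mathbf{Q}$ compatible with $\mathbf{T}$ for which \emph{not} $xP(f(\mathbf{R}))y$ in the reconstituted profile. Taking the maximum with $0$ secures (\ref{minimal}). For (\ref{Piff}), the crucial input is orderability: with the context fixed, the set of restrictions yielding $xP(f(\mathbf{R}))y$ is upward closed under $\leqslant$, and by Proposition \ref{PIweakIIAlemma} every such restriction has strictly positive advantage. A short case analysis (splitting a non-strict compatible $\mathbf{Q}'$ according to which of the three cases above it falls into, and using the faithfulness of the rank to the $\leqslant$-equivalence classes) then shows that every advantage value producing a strict preference strictly exceeds every advantage value that does not, so the threshold just defined separates the two and delivers the required equivalence.

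The main obstacle is not integrality per se but reconciling two global constraints during the construction: the antisymmetry condition (\ref{minus}), which forces the advantage to be odd under swapping $x$ and $y$, and the threshold condition (\ref{Piff}), which must hold simultaneously across all contexts with a single context-independent advantage. The disjointness of $\mathcal{P}^+(x,y)$ and $\mathcal{P}^+(y,x)$ coming from weak IIA is what lets the positive ranks on one side and their negations on the other coexist without conflict, while orderability is what guarantees that one fixed ordering works in every context. Given these two ingredients, integrality is then immediate from finiteness, and the corollary follows.
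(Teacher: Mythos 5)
Your proposal is correct and takes essentially the same route as the paper: the paper obtains the corollary by observing that its proof of the hard direction of Theorem \ref{characterization} (Lemma \ref{PNweakIIAtoAdvStd}) already builds integer-valued advantage and standard functions from weak IIA and orderability, which is exactly your plan of composing the two directions of the characterization and exploiting finiteness of the set of $\{x,y\}$-profiles. Your only deviations are cosmetic---you define the standard as the maximum of $0$ and the advantages of compatible restrictions \emph{not} yielding strict preference, whereas the paper uses the least rank yielding strict preference minus one (equivalent thresholds given orderability), and your citation of Proposition \ref{PIweakIIAlemma} for positivity of the advantage on $\mathcal{P}^+(x,y)$ should really just point to your own rank construction.
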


\section{Advantage-Standard Representable CCRs}\label{CCRsection}

Does weakening IIA to AS representability make room for appealing CCRs that escape Arrow's Impossibility Theorem and similar results? We think that it does. In this section, we give three~examples. We chose these examples, among other appealing AS representable CCRs, because they have generated or are generating much interest in voting theory.

We note immediately that we should not expect any of the following CCRs to be negatively transitive. First, on the AS model, there is no guarantee that a CCR will be negatively transitive, for there is no \textit{a priori} reason to assume that together
\begin{eqnarray*}
\mathsf{Advantage}(x,y,\mathbf{R}|_{\{x,y\}})\not> \mathsf{Standard}(x,y,\mathbf{R}^{-x,y}) \\
\mathsf{Advantage}(y,z,\mathbf{R}|_{\{y,z\}})\not> \mathsf{Standard}(y,z,\mathbf{R}^{-y,z})
\end{eqnarray*}
entail 
\[\mathsf{Advantage}(x,z,\mathbf{R}|_{\{x,z\}})\not > \mathsf{Standard}(x,z,\mathbf{R}^{-x,z}).\]
Generally speaking, what social rationality assumptions hold within the AS model depends on the features of the underlying advantage and standard functions. Not only is there no guarantee that a CCR is negatively transitive on the AS model, but also the following theorem of Baigent \citeyearpar{Baigent1987} shows that we cannot hope to escape variations of Arrow's Theorem within the AS model while keeping Arrow's assumption of negative transitivity. A \textit{weak dictator} (resp.~\textit{vetoer}) for a CCR $f$ is an $i\in V$ such that for any $x,y\in X$ and profile $\mathbf{R}$, if $xP(\mathbf{R}_i)y$, then $xf(\mathbf{R})y$ (resp.~\textit{not} $yP(f(\mathbf{R}))x$). 

\begin{theorem}[\citealt{Baigent1987}] Assume $|X|\geq 4$. If $f$ is an SWF satisfying weak IIA and Pareto, then there is a weak dictator for $f$.\footnote{Campbell and Kelly \citeyearpar{Campbell2000} observe that Baigent's original assumption of $|X|\geq 3$ must be strengthened to $|X|\geq 4$.}
\end{theorem}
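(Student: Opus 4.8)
The plan is to run an Arrow-style decisive-coalition argument, but carried out at the level of the \emph{weak} social preference relation, since weak IIA only constrains strict preferences pairwise. Because $f$ is an SWF, each $f(\mathbf{R})$ is complete, so I may use weak IIA in its equivalent positive form for complete relations noted in the Remark after Definition \ref{WeakIIA}: if $\mathbf{R}|_{\{x,y\}}=\mathbf{R}'|_{\{x,y\}}$ and $xP(f(\mathbf{R}))y$, then $xf(\mathbf{R}')y$. The consequence I extract is a per-pair trichotomy. For distinct $x,y$ and an $\{x,y\}$-profile $\mathbf{Q}$, say $\mathbf{Q}$ \emph{favors} $x$ if some profile $\mathbf{R}$ with $\mathbf{R}|_{\{x,y\}}=\mathbf{Q}$ has $xP(f(\mathbf{R}))y$; the positive form of weak IIA then forces \emph{every} extension of $\mathbf{Q}$ to satisfy $xf(\mathbf{R})y$, so ``favors $x$'' and ``favors $y$'' are mutually exclusive and the remaining case (``tied'') forces $xI(f(\mathbf{R}))y$ in all extensions. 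Pareto pins down the extremes: the restriction in which all voters rank $x\succ y$ favors $x$. This packages weak IIA as a well-defined social tendency depending only on the restriction.

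Next I define decisiveness and prove a contagion lemma. Call $C\subseteq V$ \emph{decisive for} $(x,y)$ if the restriction in which $C$ ranks $x\succ y$ and $V\setminus C$ ranks $y\succ x$ favors $x$. The field-expansion step shows that a coalition decisive for one ordered pair is decisive for all of them. The mechanism is to build a profile on three or four candidates in which a ``favored'' link (yielding only a \emph{weak} social preference $x\succeq y$) is chained with a strict Pareto link (say $yPz$ from unanimity), and then to invoke transitivity of $f(\mathbf{R})$: from $x\succeq y$ and $yPz$ one gets the strict preference $xPz$, and reading off the restriction on the target pair exhibits it as favored. Having established contagion, the standard group-contraction (vote-splitting) argument applies unchanged at the level of the favor-relation: $V$ itself is decisive by Pareto, and repeatedly splitting a decisive coalition leaves one part decisive, so some singleton $\{i\}$ is decisive for every ordered pair.

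I expect the main obstacle to be exactly this field-expansion step, and it is where the hypothesis $|X|\geq 4$ is essential. Under full IIA one may freely set all ``irrelevant'' pairs, but here ``favors $x$'' delivers only $x\succeq y$ in any given profile, never strictness on demand; hence every transitivity chain used to certify a \emph{new} favored pair must contain at least one genuinely strict Pareto link, and the other voters' rankings on the auxiliary candidates must be arranged to make that Pareto link available while keeping the relevant restrictions fixed. With only three candidates, certain propagation directions (e.g.\ passing decisiveness for $(a,b)$ to $(c,b)$) force the complement's rankings into a configuration that blocks any strict Pareto link, leaving an all-weak chain from which strictness cannot be recovered; a fourth candidate provides the slack to route a strict link through. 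This is precisely the gap corrected by \citealt{Campbell2000}, and getting the routing right is the technical heart of the proof.

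Finally I upgrade the singleton $\{i\}$ from decisiveness against \emph{unanimous} opposition to full weak dictatorship. The point is to show that for any restriction on $\{x,y\}$ in which $i$ ranks $x\succ y$ (with the other voters arbitrary), no extension can favor $y$. I argue this by a monotonicity step: if some such extension had $yP(f(\mathbf{R}))x$, I compare it, via a spare alternative and transitivity, with the restriction in which $V\setminus\{i\}$ all rank $y\succ x$, which is favored for $x$ and hence admits no extension with $yPx$, deriving a contradiction. It follows that whenever $xP(\mathbf{R}_i)y$ we have \emph{not} $yP(f(\mathbf{R}))x$, i.e.\ $xf(\mathbf{R})y$ by completeness, so $i$ is a weak dictator.
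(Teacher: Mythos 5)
The paper does not actually prove this statement; it is cited from Baigent, with the cardinality correction of Campbell and Kelly, and the closest in-house analogue is the proof of Theorem \ref{OligarchyTheorem}, which imports Campbell and Kelly's lemmas wholesale rather than reproving them. So your proposal has to stand on its own, and it contains one genuine gap beyond the (acknowledged) fact that the field-expansion construction is described strategically but never executed.

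The gap is the claim that ``the standard group-contraction (vote-splitting) argument applies unchanged at the level of the favor-relation.'' It does not. Your decisiveness is existential-strict: the restriction must admit \emph{some} extension with a strict social preference. In Arrow's splitting argument one takes a decisive $C=C_1\cup C_2$, builds the profile with $C_1\colon x\succ y\succ z$, $C_2\colon y\succ z\succ x$, $V\setminus C\colon z\succ x\succ y$, and reasons: either $xP(f(\mathbf{R}))z$, which makes $C_1$ (almost) decisive for $(x,z)$, or else $zf(\mathbf{R})x$, whence transitivity composed with the \emph{strict} preference $yP(f(\mathbf{R}))z$ supplied by $C$'s decisiveness yields the strict preference $yP(f(\mathbf{R}))x$, making $C_2$ decisive for $(y,x)$. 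In your setting the second branch collapses: favor-decisiveness of $C$ on $(y,z)$ guarantees only the weak preference $yf(\mathbf{R})z$ in this particular profile (the strict witness lives in some other extension you cannot control off $\{y,z\}$), so from $zf(\mathbf{R})x$ you obtain only $yf(\mathbf{R})x$, which certifies nothing in the favor-relation. The dichotomy fails, so you cannot drive decisiveness down to a singleton this way. Indeed the intermediate target is itself suspect: a weak dictator $i$ need only produce $xf(\mathbf{R})y$ whenever $xP(\mathbf{R}_i)y$, so $\{i\}$ need not be decisive in your strict sense, and it is unclear that such a singleton exists under the hypotheses. The known repair---and the one the paper leans on for Theorem \ref{OligarchyTheorem}---is to work with \emph{weakly} decisive coalitions (universally quantified, with the weak conclusion $xf(\mathbf{R})y$), prove that the family of such coalitions is closed under intersection (this is where $|X|\geq 4$ enters, via a four-alternative construction that replaces splitting), take the minimal weakly decisive coalition, and show each of its members is a vetoer, hence by completeness a weak dictator. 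Your final ``upgrade'' step is essentially that vetoer argument and is fine in outline; it is the middle of the proof that needs restructuring.
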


\noindent Weakening the SWF condition to only assume negative transitivity, we have the following.

\begin{proposition} Assume $|X|\geq 4$. If $f$ is a CCR satisfying weak IIA and Pareto such that for every profile $\mathbf{R}$, $P(f(\mathbf{R}))$ is negatively transitive, then there is a vetoer for~$f$.
\end{proposition}

\begin{proof} The proof is the same as for Proposition \ref{StrongerArrow} only now observing that if $f'$ has a weak dictator, then $f$ has a vetoer.\end{proof}

Thus, we conclude that not only must IIA be weakened to AS representability, but also negative transitivity must be weakened. As we shall see, acyclicity and even transitivity are compatible with AS representability and Pareto.

\subsection{Covering}\label{CoveringSection}

In this section, we show that a CCR based on the \textit{covering relation} of \citealt{Gillies1959} is AS representable. Covering is based on the majority relation $\succ_\mathbf{R}$ for a preprofile $\mathbf{R}$ defined by 
 \[x\succ_{\mathbf{R}}y \Longleftrightarrow \mathrm{Margin}_\mathbf{R}(x,y)>0.\] 

\begin{definition} Given $x,y\in X$ and a profile $\mathbf{R}$, define:
\begin{enumerate}
    \item $xP_{cov}(\mathbf{R})y$ if $x\succ_\mathbf{R}y$ and for all $v\in X$,  $v\succ_\mathbf{R}x$ implies $v\succ_\mathbf{R}y$;
    \item $xI_{cov}(\mathbf{R})y$ if for all $v\in X$, $v\succ_\mathbf{R} x\Leftrightarrow v\succ_\mathbf{R} y$, and $x\succ_\mathbf{R} v\Leftrightarrow y\succ_\mathbf{R} v$.\footnote{Normally covering only concerns strict preference, not indifference, but we find this definition of indifference in the spirit of the covering definition for strict preference.}
\end{enumerate}
    The Gillies Covering CCR is defined by $xf_{cov}(\mathbf{R})y$ if  $xP_{cov}(\mathbf{R})y$ or $xI_{cov}(\mathbf{R})y$.
\end{definition}

\begin{remark} Miller \citeyearpar{Miller1980} independently introduced a slightly different covering relation: $x$ covers $y$ in Miller's sense if  $x\succ_\mathbf{R}y$ and for all $v\in X$,  $y\succ_\mathbf{R}v$ implies $x\succ_\mathbf{R}v$. Miller's covering  is equivalent to Gillies' if there is an odd number of voters with no ties in their rankings, but the two definitions are not equivalent in general. For other variations on the definition of covering, see \citealt{Duggan2013}, and for a notion of \textit{weighted} covering, see \citealt[Def.~3.2]{Dutta1999}. Miller's notion and that of weighted covering also lead to AS representable CCRs, but to fix ideas we focus on the Gillies notion.
\end{remark}

If the majority relation $\succ_\mathbf{R}$ contains no cycles (no sequence $x_1,\dots,x_n$ with $x_i\succ_\mathbf{R}x_{i+1}$ and $x_1=x_n$), then $P_{cov}$ is simply $\succ_\mathbf{R}$. But even if $\succ_\mathbf{R}$ contains cycles, $P_{cov}$ may be nonempty. To see this graph-theoretically, given a preprofile $\mathbf{R}$, we form a directed graph $M(\mathbf{R})$, the \textit{majority graph of $\mathbf{R}$}, with an edge from $a$ to $b$ whenever $a\succ_\mathbf{R}b$. Figure \ref{CoverExample} shows a majority graph of a profile---with two majority cycles---and the associated covering relation, which is nonempty. (Later we will encounter CCRs that can accept even more majority preferences in the presence of cycles. See, e.g., Figures \ref{RPfig} and \ref{SCexample}.)

\begin{figure}
\begin{center}\begin{tikzpicture}

\node[circle,draw, minimum width=0.25in] at (0,0) (a) {$a$}; 
\node[circle,draw,minimum width=0.25in] at (3,0) (c) {$c$}; 
\node[circle,draw,minimum width=0.25in] at (1.5,1.5) (b) {$b$}; 
\node[circle,draw,minimum width=0.25in] at (1.5,-1.5) (d) {$d$}; 

\path[->,draw,thick] (a) to (b);
\path[->,draw,thick] (b) to (c);
\path[->,draw,thick] (c) to  (d);
\path[->,draw,thick] (d) to  (a);
\path[->,draw,thick] (d) to  (b);
\path[->,draw,thick] (a) to  (c);

  \end{tikzpicture}\qquad\quad\begin{tikzpicture}

\node[circle,draw, minimum width=0.25in] at (0,0) (a) {$a$}; 
\node[circle,draw,minimum width=0.25in] at (3,0) (c) {$c$}; 
\node[circle,draw,minimum width=0.25in] at (1.5,1.5) (b) {$b$}; 
\node[circle,draw,minimum width=0.25in] at (1.5,-1.5) (d) {$d$}; 

\path[->,draw,thick] (a) to node[fill=white] {$P$} (b);

  \end{tikzpicture}\end{center}
    \caption{a majority graph (left) with cycles $a\to b\to c\to d\to a$ and $a\to c\to d\to a$, together with its associated covering relation (right): $a$ covers $b$ because not only is $a$ majority preferred to $b$, but the only candidate majority preferred to $a$, namely $d$, is also majority preferred to $b$.}
    \label{CoverExample}
\end{figure}
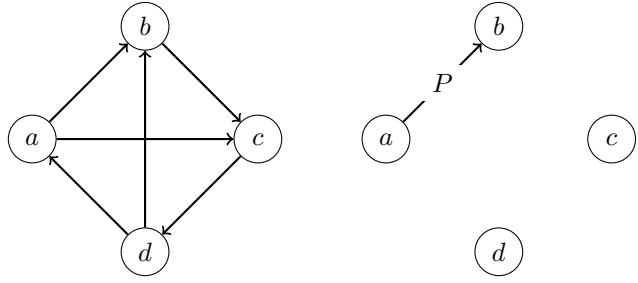

The Gillies Covering CCR clearly satisfies anonymity, neutrality, strong Pareto, Pareto indifference, and has no vetoers. Moreover, it is a transitive CCR.

\begin{restatable}{proposition}{GilliesTrans} The Gillies Covering CCR is transitive.
\end{restatable}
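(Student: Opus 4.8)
The plan is to recast the Gillies covering and indifference relations in terms of neighborhoods in the majority graph $M(\mathbf{R})$ and then chain set inclusions. Fix a profile $\mathbf{R}$ and write $\succ$ for $\succ_{\mathbf{R}}$; for a candidate $w$ let $U(w)=\{v\in X\mid v\succ w\}$ be its set of dominators and $D(w)=\{v\in X\mid w\succ v\}$ the set it dominates. The definitions then unwind to: $xP_{cov}(\mathbf{R})y$ iff $x\succ y$ and $U(x)\subseteq U(y)$; and $xI_{cov}(\mathbf{R})y$ iff $U(x)=U(y)$ and $D(x)=D(y)$. I would first record three easy facts. (i) For all $a,b$ we have $a\succ b$ iff $a\in U(b)$ iff $b\in D(a)$, and $\succ$ is irreflexive since $Margin_{\mathbf{R}}(a,a)=0$. (ii) $I_{cov}(\mathbf{R})$ is an equivalence relation, being the intersection of the kernels of the maps $w\mapsto U(w)$ and $w\mapsto D(w)$. (iii) $P_{cov}(\mathbf{R})$ and $I_{cov}(\mathbf{R})$ are disjoint, since $xI_{cov}(\mathbf{R})y$ forces $D(x)=D(y)$ while $xP_{cov}(\mathbf{R})y$ would give $x\succ y$, i.e.\ $y\in D(x)=D(y)$, i.e.\ $y\succ y$, contradicting (i).

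To prove transitivity of $f_{cov}(\mathbf{R})=P_{cov}(\mathbf{R})\cup I_{cov}(\mathbf{R})$, I would assume $xf_{cov}(\mathbf{R})y$ and $yf_{cov}(\mathbf{R})z$ and split into four cases according to whether each link is an instance of $P_{cov}$ or of $I_{cov}$. When both links are $I_{cov}$, transitivity of $I_{cov}(\mathbf{R})$ from fact (ii) immediately gives $xI_{cov}(\mathbf{R})z$, hence $xf_{cov}(\mathbf{R})z$. In each of the three remaining cases the goal is to establish $xP_{cov}(\mathbf{R})z$, which requires the containment $U(x)\subseteq U(z)$ together with the strict majority $x\succ z$. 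The containment is routine in all three cases: the hypotheses always supply $U(x)\subseteq U(y)$ or $U(x)=U(y)$ and, independently, $U(y)\subseteq U(z)$ or $U(y)=U(z)$, and these compose to $U(x)\subseteq U(z)$.

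The only real content is recovering $x\succ z$, and this is the step I expect to be the crux, precisely because $\succ$ is in general not transitive, so one cannot simply chain $x\succ y\succ z$; the Gillies in-neighbor condition is what does the work. If both links are $P_{cov}$, then $x\succ y$ gives $x\in U(y)$ by (i), and $U(y)\subseteq U(z)$ (from $yP_{cov}(\mathbf{R})z$) gives $x\in U(z)$, so $x\succ z$. If the first link is $P_{cov}$ and the second is $I_{cov}$, then again $x\in U(y)=U(z)$, so $x\succ z$. If the first link is $I_{cov}$ and the second is $P_{cov}$, then $y\succ z$ gives $z\in D(y)=D(x)$, so $x\succ z$. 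In each case $x\succ z$ and $U(x)\subseteq U(z)$, whence $xP_{cov}(\mathbf{R})z$ and a fortiori $xf_{cov}(\mathbf{R})z$, completing the case analysis. Coincidences among $x,y,z$ cause no trouble, since $f_{cov}(\mathbf{R})$ is reflexive via $I_{cov}$.
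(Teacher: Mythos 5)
Your proposal is correct and follows essentially the same route as the paper's proof: the same four-way case split (both links $I_{cov}$, both $P_{cov}$, and the two mixed cases), with the neighborhood sets $U(\cdot)$ and $D(\cdot)$ merely repackaging the paper's pointwise arguments (e.g., the paper's IP-transitivity step ``$v\succ x$ implies $v\succ y$ implies $v\succ z$'' is exactly your $U(x)=U(y)\subseteq U(z)$, and its use of $y\succ_\mathbf{R} z$ with $xI_{cov}(\mathbf{R})y$ to get $x\succ_\mathbf{R} z$ is your $z\in D(y)=D(x)$). You correctly identify that recovering $x\succ_\mathbf{R} z$ from the covering condition, rather than from any (false) transitivity of $\succ_\mathbf{R}$, is the crux, and your handling of it matches the paper's.
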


In addition, we now observe that the Gillies Covering CCR is AS representable. 

\begin{restatable}{proposition}{GilliesAS}\label{GilliesAS} The Gillies Covering CCR is AS representable with
\[\mathsf{Advantage}(x,y,\mathbf{R}|_{\{x,y\}}):=
\mathrm{Margin}_{\mathbf{R}|_{\{x,y\}}}(x,y)\]
and 
\[\mathsf{Standard}(x,y,\mathbf{R}^{-x,y}):=\begin{cases}
0\mbox{ if for all $v\in X\setminus\{x,y\}$, $v\succ_{\mathbf{R}^{-x,y}}x$ implies $v\succ_{\mathbf{R}^{-x,y}}y$}\\
|V| \mbox{ otherwise}.
\end{cases}\]
\end{restatable}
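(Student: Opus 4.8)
The plan is to verify the three defining conditions of AS rationalizability in Definition \ref{advstddef} directly for the proposed $\mathsf{Advantage}$ and $\mathsf{Standard}$ functions, taking the totally ordered group to be $(\mathbb{Z},+,\leq)$ with identity $e=0$. Condition (\ref{minus}) is immediate: since $\mathsf{Advantage}(x,y,\mathbf{R}|_{\{x,y\}})=Margin_{\mathbf{R}|_{\{x,y\}}}(x,y)$ and the margin is antisymmetric ($Margin_{\mathbf{R}}(x,y)=-Margin_{\mathbf{R}}(y,x)$), we get $\mathsf{Advantage}(x,y,\mathbf{R}|_{\{x,y\}})=-\mathsf{Advantage}(y,x,\mathbf{R}|_{\{x,y\}})$, which is exactly the group-inverse relation in $\mathbb{Z}$. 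Condition (\ref{minimal}) is also immediate, since $\mathsf{Standard}$ takes only the values $0$ and $|V|$, both of which are $\geq 0$.

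The real content is condition (\ref{Piff}): I must show that for all $x,y$ and all profiles $\mathbf{R}$,
\[xP(f_{cov}(\mathbf{R}))y \iff Margin_{\mathbf{R}|_{\{x,y\}}}(x,y) > \mathsf{Standard}(x,y,\mathbf{R}^{-x,y}).\]
First I would record the definitional unfolding: by definition, $xP(f_{cov}(\mathbf{R}))y$ means $xf_{cov}(\mathbf{R})y$ and not $yf_{cov}(\mathbf{R})x$, and since $I_{cov}$ is symmetric, this reduces to $xP_{cov}(\mathbf{R})y$, i.e. $x\succ_{\mathbf{R}}y$ together with the covering condition ``for all $v\in X$, $v\succ_{\mathbf{R}}x$ implies $v\succ_{\mathbf{R}}y$.'' On the right-hand side, note that $Margin_{\mathbf{R}|_{\{x,y\}}}(x,y)=Margin_{\mathbf{R}}(x,y)$, so $x\succ_{\mathbf{R}}y$ is equivalent to the advantage being strictly positive. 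The key case split is then on the value of $\mathsf{Standard}$. When the standard is $0$, the right-hand inequality $Margin_{\mathbf{R}}(x,y)>0$ is exactly $x\succ_{\mathbf{R}}y$; and the standard equals $0$ precisely when the covering condition holds over $X\setminus\{x,y\}$, so I would argue the two sides agree. When the standard is $|V|$, the right-hand side $Margin_{\mathbf{R}}(x,y)>|V|$ can never hold (the margin is bounded above by $|V|$), so the right-hand side is false; I must then show the left-hand side is also false.

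The main obstacle, and the step deserving the most care, is reconciling the covering quantifier ``for all $v\in X$'' appearing in $P_{cov}$ with the quantifier ``for all $v\in X\setminus\{x,y\}$'' appearing in $\mathsf{Standard}$, and reconciling $\succ_{\mathbf{R}}$ with $\succ_{\mathbf{R}^{-x,y}}$. The point is that $\mathbf{R}^{-x,y}$ deletes only the $x$-versus-$y$ comparisons, so for any pair not equal to $\{x,y\}$, and in particular for any comparison of the form $v$ versus $x$ or $v$ versus $y$ with $v\notin\{x,y\}$, we have $v\succ_{\mathbf{R}^{-x,y}}x \iff v\succ_{\mathbf{R}}x$ and similarly for $y$; thus the standard's defining condition is genuinely a condition about $\succ_{\mathbf{R}}$. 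It then remains to handle the instances $v=x$ and $v=y$ of the covering quantifier: $v=y$ gives ``$y\succ_{\mathbf{R}}x$ implies $y\succ_{\mathbf{R}}y$,'' which, combined with the separately-required $x\succ_{\mathbf{R}}y$, is harmless since $x\succ_{\mathbf{R}}y$ rules out $y\succ_{\mathbf{R}}x$; and $v=x$ gives a trivially true instance. Putting these together, I would conclude that under the assumption $x\succ_{\mathbf{R}}y$, the full covering condition over $X$ is equivalent to its restriction to $X\setminus\{x,y\}$, i.e. to $\mathsf{Standard}=0$. This yields both directions: if $xP_{cov}(\mathbf{R})y$ then $x\succ_{\mathbf{R}}y$ forces the standard to be $0$ and $Margin_{\mathbf{R}}(x,y)>0$ gives the inequality; conversely, the inequality forces the standard to be $0$ (handling the $|V|$ case as above), which gives the covering condition, and $Margin_{\mathbf{R}}(x,y)>0$ gives $x\succ_{\mathbf{R}}y$, so $xP_{cov}(\mathbf{R})y$.
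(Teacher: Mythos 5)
Your proposal is correct and follows essentially the same route as the paper's proof: verify (\ref{minus}) and (\ref{minimal}) directly, then establish (\ref{Piff}) by the case split on whether the standard is $0$ or $|V|$, using the fact that the margin is bounded by $|V|$. You are merely more explicit than the paper about two points it treats as immediate---that $xP(f_{cov}(\mathbf{R}))y$ reduces to $xP_{cov}(\mathbf{R})y$, and that the covering quantifier over all of $X$ with respect to $\succ_{\mathbf{R}}$ coincides, given $x\succ_{\mathbf{R}}y$, with the quantifier over $X\setminus\{x,y\}$ with respect to $\succ_{\mathbf{R}^{-x,y}}$.
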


Thus, weakening IIA to AS representability makes room for CCRs that satisfy the rest of Arrow's axioms---except for completeness. For many profiles, $f_{cov}(\mathbf{R})$ is not a complete relation: there may be $x,y\in X$ such that neither $xf_{cov}(\mathbf{R})y$ nor $yf_{cov}(\mathbf{R})x$, as in Figure~\ref{CoverExample}. We do not regard this as a reason for rejecting the Gillies Covering CCR. Given an incomplete relation $R$, one can still perfectly well induce a choice function on $X$ by selecting \textit{maximal elements}: for any nonempty $Y\subseteq X$,
\[M(Y,R)=\{x\in Y\mid \mbox{there is no }y\in Y: yP(R)x\}.\]
For a study of maximal element choice, see \citealt{Bossert2005}. As the failure of completeness arguably makes sense even for individuals (see, e.g.,   \citealt{Aumann1962}, \citealt{Putnam1986,Putnam2002}, \citealt{Chang1997}, and \citealt{Eliaz2006}), this provides still more reason to doubt the requirement of completeness at the social level. For further doubts about completeness of social preference, see \citealt{Sen2018}.

Arrow's \citeyearpar[p.~118]{Arrow1963} argument for completeness is based on the failure to consider maximal element choice, considering only greatest element choice: for nonempty $Y\subseteq X$,
\[G(Y,R)=\{x\in Y\mid \mbox{for all  }y\in Y,\, x f(R)y\}.\]
Arrow claims that completeness ``when understood, can hardly be denied; it simply requires that some social choice be made from any environment. Absention from a decision cannot exist; some social state will prevail'' (p.~118). Indeed, the condition that $G(Y,R)\neq\varnothing$ for all nonempty $Y\subseteq X$ does imply that $R$ is complete. But the condition that $M(Y,R)\neq\varnothing$ for all nonempty $Y\subseteq X$ does not. Another crucial point is that even with a complete relation $R$ and greatest element choice, $G(Y,R)$ may contain multiple ``tied'' alternatives; thus, completeness does not confer a special guarantee that $G(Y,R)$ will be a singleton set.

One might also try to argue for completeness on the grounds that given any incomplete $R$, it is innocuous to switch to a complete $R'$ defined by $xR'y$ if \textit{not} $yP(R)x$. But this is not innocuous, for two reasons. First, while the transformation from $R$ to $R'$ preserves transitivity of the strict relation, given that $P(R)=P(R')$, it does not preserve the transitivity of the weak relation: for $R$ transitive, $R'$ may not be transitive. Second, although moving from $R$ to $R'$ does not matter for the induced maximal element choice functions, as $P(R)=P(R')$ implies $M(\cdot,R)=M(\cdot,R')$,  one may care about differences between $R$ and $R'$ even when $M(\cdot,R)=M(\cdot,R')$, because such differences between $R$ and $R'$ may affect later stages of the choice process, of which the reduction from $Y$ to $M(Y,R)$ may be only the first stage (see \citealt[pp.~14-15]{Schwartz1986}). In particular, for some $Z\subseteq Y$ with  $|Z|\geq 2$, we may have $M(Y,R)=Z$ and $M(Y,R')=Z$ for different reasons: it may be that according to $R$, the elements of $Z$ are \textit{noncomparable} for society, so none can be eliminated at this stage; by contrast, according to $R'$, society is \textit{indifferent} between the elements of $Z$, so again none can be eliminated but also there is a positive judgment that the options are equally preferable. This difference between noncomparability and indifference may well have consequence for later stages of the social choice process, e.g., whether the process leads to further deliberation or to a random tiebreaking mechanism. 

When the social relation is possibly incomplete,  we can distinguish the following two strengthenings of weak IIA, one of which Gillies Covering satisfies.

\begin{definition}\label{PNPIdef} Let $f$ be a CCR.
\begin{enumerate}
    \item $f$ satisfies\textit{ PN-weak IIA} if for all $x,y\in X$ and  profiles $\mathbf{R},\mathbf{R}'$ with $\mathbf{R}|_{\{x,y\}}=\mathbf{R}'|_{\{x,y\}}$:
\[xP(f(\mathbf{R}))y\implies xP(f(\mathbf{R}'))y \mbox{ or }xN(f(\mathbf{R}'))y.\]
\item $f$ satisfies\textit{ PI-weak IIA} if for all $x,y\in X$ and profiles $\mathbf{R},\mathbf{R}'$ with  $\mathbf{R}|_{\{x,y\}}=\mathbf{R}'|_{\{x,y\}}$:
\[xP(f(\mathbf{R}))y\implies xP(f(\mathbf{R}'))y \mbox{ or }xI(f(\mathbf{R}'))y.\]
\end{enumerate}
\end{definition}

\begin{restatable}{proposition}{GilliesPN}\label{GilliesPN} The Gillies Covering CCR satisfies PN-weak IIA.
\end{restatable}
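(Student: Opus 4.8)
The plan is to reduce the entire claim to a single fact about the majority relation: that the hypothesis $xP(f_{cov}(\mathbf{R}))y$ forces $x\succ_{\mathbf{R}}y$, and that $x\succ_{\mathbf{R}}y$ is preserved when we pass to any $\mathbf{R}'$ agreeing with $\mathbf{R}$ on the pair $\{x,y\}$. Fix $x,y\in X$ and profiles $\mathbf{R},\mathbf{R}'$ with $\mathbf{R}|_{\{x,y\}}=\mathbf{R}'|_{\{x,y\}}$, and assume $xP(f_{cov}(\mathbf{R}))y$, i.e.\ $xf_{cov}(\mathbf{R})y$ and \emph{not} $yf_{cov}(\mathbf{R})x$. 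Since $I_{cov}$ is symmetric in its two arguments, $xI_{cov}(\mathbf{R})y$ would yield $yI_{cov}(\mathbf{R})x$ and hence $yf_{cov}(\mathbf{R})x$, contradicting the hypothesis; so the disjunct $xP_{cov}(\mathbf{R})y$ must hold, which by definition gives $x\succ_{\mathbf{R}}y$, i.e.\ $Margin_{\mathbf{R}}(x,y)>0$.

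First I would note that $Margin_{\mathbf{R}}(x,y)$ counts only voters ranking $x$ above $y$ versus those ranking $y$ above $x$, so it is determined entirely by $\mathbf{R}|_{\{x,y\}}$. As $\mathbf{R}|_{\{x,y\}}=\mathbf{R}'|_{\{x,y\}}$, we get $Margin_{\mathbf{R}'}(x,y)=Margin_{\mathbf{R}}(x,y)>0$, hence $x\succ_{\mathbf{R}'}y$. The remaining task is to show that $x\succ_{\mathbf{R}'}y$ alone already yields \emph{not} $yf_{cov}(\mathbf{R}')x$, i.e.\ rules out both $yP_{cov}(\mathbf{R}')x$ and $yI_{cov}(\mathbf{R}')x$. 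For the first, $yP_{cov}(\mathbf{R}')x$ requires $y\succ_{\mathbf{R}'}x$, but $x\succ_{\mathbf{R}'}y$ means $Margin_{\mathbf{R}'}(y,x)=-Margin_{\mathbf{R}'}(x,y)<0$, so \emph{not} $y\succ_{\mathbf{R}'}x$. For the second, $yI_{cov}(\mathbf{R}')x$ (equivalently $xI_{cov}(\mathbf{R}')y$) demands in particular that $x\succ_{\mathbf{R}'}v\Leftrightarrow y\succ_{\mathbf{R}'}v$ for all $v$; taking $v=y$ gives $x\succ_{\mathbf{R}'}y\Leftrightarrow y\succ_{\mathbf{R}'}y$, which is impossible since the left side holds while the right side fails by irreflexivity of $\succ_{\mathbf{R}'}$.

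Having established \emph{not} $yf_{cov}(\mathbf{R}')x$, the conclusion follows by a case split on whether $xf_{cov}(\mathbf{R}')y$: if it holds, then together with \emph{not} $yf_{cov}(\mathbf{R}')x$ we obtain $xP(f_{cov}(\mathbf{R}'))y$; if it fails, then neither relation holds and we obtain $xN(f_{cov}(\mathbf{R}'))y$, which is exactly the disjunction required by PN-weak IIA. The hard part, such as it is, is conceptual rather than computational: one must keep the four mutually exclusive $x$-$y$ outcomes ($P$, reverse-$P$, $I$, $N$) straight and exploit the symmetry of $I_{cov}$ at the right moment. It is worth remarking that the argument never uses the covering clause of $P_{cov}$ (that every $v$ with $v\succ_{\mathbf{R}}x$ also has $v\succ_{\mathbf{R}}y$); only the sign of the single margin $Margin_{\mathbf{R}}(x,y)$ is needed, which is exactly why PN-weak IIA is cheap for this CCR.
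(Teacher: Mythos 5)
Your proof is correct and follows essentially the same route as the paper's: extract $x\succ_{\mathbf{R}}y$ from the hypothesis, observe that the majority margin on $\{x,y\}$ depends only on $\mathbf{R}|_{\{x,y\}}$ so $x\succ_{\mathbf{R}'}y$, and use this single fact to rule out both $yf_{cov}(\mathbf{R}')x$ via $P_{cov}$ and via $I_{cov}$ (the latter by irreflexivity of $\succ_{\mathbf{R}'}$), leaving only the $P$ or $N$ outcomes. You are somewhat more explicit than the paper about the preliminary step that $xP(f_{cov}(\mathbf{R}))y$ forces the $P_{cov}$ disjunct rather than the $I_{cov}$ disjunct, but this is a matter of detail, not of approach.
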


It is no accident that the Gillies Covering CCR satisfies PN-weak IIA rather than PI-weak IIA, as the following impossibility theorem shows. A coalition $C\subseteq V$ is \textit{weakly decisive} if for every profile $\mathbf{R}$ in which $xP(\mathbf{R}_i)y$ for all $i\in C$, we have $xf(\mathbf{R})y$. The following result may be compared to Weymark's \citeyearpar{Weymark1984} Oligarchy Theorem assuming transitivity and IIA.

\begin{restatable}{proposition}{OligarchyTheorem}\label{OligarchyTheorem} Let $|X|\geq 4$ and $V$ be finite. If $f$ is a transitive CCR satisfying PI-weak IIA and Pareto, then there is a nonempty $C\subseteq V$ that is weakly decisive and such that every $i\in C$ is a vetoer.
\end{restatable}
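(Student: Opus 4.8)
The plan is to run an Arrovian contagion argument of the Gibbard--Weymark oligarchy type, adapted so that only a weak consequence of PI-weak IIA (Definition~\ref{PNPIdef}) is ever used. The first step is to extract from PI-weak IIA the following transfer principle: if $\mathbf{R}|_{\{x,y\}}=\mathbf{R}'|_{\{x,y\}}$ and $xP(f(\mathbf{R}))y$, then $xf(\mathbf{R}')y$. Indeed, PI-weak IIA permits the strict preference to persist or to collapse to indifference, and both outcomes give $xf(\mathbf{R}')y$; what is ruled out is a reversal $yP(f(\mathbf{R}'))x$ or a drop to noncomparability $xN(f(\mathbf{R}'))y$. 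I also record that, since $f(\mathbf{R})$ is transitive, $P(f(\mathbf{R}))$ is transitive, i.e.\ $f$ is quasi-transitive. Calling a coalition weakly decisive as in the statement, Pareto immediately makes $V$ weakly decisive (it in fact forces strict preference), while $\varnothing$ is not weakly decisive, since a profile on which all voters rank $x$ above $y$ yields $xP(f(\mathbf{R}))y$ by Pareto and hence \emph{not} $yf(\mathbf{R})x$. As $V$ is finite, there is a minimal nonempty weakly decisive coalition $C$, and the theorem reduces to showing that every $i\in C$ is a vetoer.

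The engine of the argument is a field-expansion lemma in which \emph{strictness is regenerated}, so that the weak transfer principle suffices where full IIA is normally invoked. Suppose there is a witness profile with all of $C$ ranking $a$ above $b$, every other voter ranking $b$ above $a$, and $aP(f(\mathbf{R}))b$ strictly. Choosing a third alternative $c$, I build a profile in which $C$ ranks $a\succ b\succ c$ and every other voter ranks $b\succ c\succ a$: the restriction to $\{a,b\}$ is unchanged, so the transfer principle gives $af(\mathbf{R})b$; every voter ranks $b$ above $c$, so Pareto gives $bP(f(\mathbf{R}))c$; and transitivity then forces $aP(f(\mathbf{R}))c$ strictly, since $cf(\mathbf{R})a$ together with $af(\mathbf{R})b$ would yield $cf(\mathbf{R})b$, contradicting $bP(f(\mathbf{R}))c$. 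As only $C$ ranks $a$ above $c$ here, this is a fresh strict witness for the pair $(a,c)$, and iterating this move spreads strict witnesses to all pairs; PI-weak IIA then upgrades each such witness, on every profile sharing the relevant pairwise restriction, to the weak judgment $af(\mathbf{R}')b$ demanded by weak decisiveness. The essential point is that although the transfer principle preserves only \emph{weak} preference, Pareto and transitivity manufacture a new \emph{strict} preference at each step, so the weakening of IIA costs nothing here; the bookkeeping needed to cover every pairwise restriction of the voters outside $C$ (not merely those who oppose $C$) is handled by the same chain constructions with auxiliary alternatives.

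The main obstacle is the contraction step: showing that the minimal $C$ has no non-vetoers. Suppose some $i\in C$ is not a vetoer, witnessed by a profile with $xP(\mathbf{R}_i)y$ yet $yP(f(\mathbf{R}))x$. I would use this strict reversal, together with two auxiliary alternatives (this is where $|X|\geq 4$ enters, exactly as in Baigent's theorem and the Campbell--Kelly correction), Pareto, and quasi-transitivity, to construct profiles producing a strict social preference that witnesses the weak decisiveness of $C\setminus\{i\}$, contradicting the minimality of $C$ and hence forcing every member of $C$ to be a vetoer. The delicacy, relative to the full-IIA proof, is that I cannot freely rewire voters' preferences off the contested pair to isolate $i$'s influence: every inference about the contested pair must be routed through the weak transfer principle, and each strict social preference I need must be regenerated from Pareto and transitivity rather than simply read off by IIA. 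Tracking incompleteness throughout --- ensuring that the key judgments are genuine strict preferences, and that the benign indifference-escape allowed by PI-weak IIA never degrades weak decisiveness (whereas a degradation to noncomparability, which PI-weak IIA forbids, would) --- is the part that requires the most care. This is also precisely why PI-weak IIA rather than merely PN-weak IIA or weak IIA is the right hypothesis: only the \emph{indifference} escape keeps $xf(\mathbf{R}')y$ intact, and it is this that the whole argument exploits.
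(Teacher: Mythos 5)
Your architecture (a transfer principle extracted from PI-weak IIA, quasi-transitivity, a minimal weakly decisive coalition, field expansion with strictness regenerated by Pareto plus transitivity, then contraction) is the right one, and the pieces you actually work out are sound: the reading of PI-weak IIA as ``$xP(f(\mathbf{R}))y$ transfers to $xf(\mathbf{R}')y$'' is correct, and the step from a strict witness on $(a,b)$ to one on $(a,c)$ is valid. But there is a genuine gap at exactly the point where the theorem's content lives: the contraction step is never executed. You write that you ``would use this strict reversal, together with two auxiliary alternatives, \ldots{} to construct profiles producing a strict social preference that witnesses the weak decisiveness of $C\setminus\{i\}$,'' but no construction is given, and it is not routine. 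The non-vetoer witness $yP(f(\mathbf{R}))x$ is anchored to an arbitrary restriction $\mathbf{R}|_{\{x,y\}}$ --- you do not control how the members of $C\setminus\{i\}$ or of $V\setminus C$ rank $x$ versus $y$ there --- and PI-weak IIA only lets you move to profiles sharing that exact restriction, so the usual Gibbard-style rewiring that isolates $i$'s influence is unavailable. There is also a structural problem: if $C$ is merely \emph{a} minimal weakly decisive coalition, then showing that some coalition excluding $i$ (such as $V\setminus\{i\}$) is weakly decisive contradicts nothing. You need either closure of the weakly decisive coalitions under intersection (so $C$ can be taken to be the intersection of all of them, whence no weakly decisive coalition can omit a member of $C$) or a direct proof that $C\setminus\{i\}$ itself is weakly decisive; you supply neither. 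Your full-decisiveness bookkeeping (covering every configuration of the voters outside $C$ on the contested pair) is likewise only gestured at.

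For comparison, the paper outsources both the field expansion (``almost weakly decisive on one pair implies weakly decisive'') and intersection-closure to Campbell and Kelly (2000), noting their proofs do not use completeness of $f$; it takes $C$ to be the intersection of all weakly decisive coalitions; and it handles the vetoer step with a single auxiliary alternative $a$. From $xP(\mathbf{R}_i)y$ and the supposed $yP(f(\mathbf{R}))x$, build $\mathbf{R}'$ agreeing with $\mathbf{R}$ on $\{x,y\}$, with $xP(\mathbf{R}'_i)aP(\mathbf{R}'_i)y$ and $a$ ranked above both $x$ and $y$ by everyone else. Pareto gives $aP(f(\mathbf{R}'))y$. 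If $aP(f(\mathbf{R}'))x$, then PI-weak IIA makes $V\setminus\{i\}$ almost weakly decisive on $(a,x)$, hence weakly decisive, forcing $C\subseteq V\setminus\{i\}$ and contradicting $i\in C$. If not $aP(f(\mathbf{R}'))x$, transitivity forbids $yf(\mathbf{R}')x$, contradicting what PI-weak IIA yields from $yP(f(\mathbf{R}))x$. Note that the paper never shows $C\setminus\{i\}$ is weakly decisive, and the hypothesis $|X|\geq 4$ is consumed by the Campbell--Kelly lemmas rather than by the vetoer construction itself. To make your argument self-contained you would need to prove analogues of those two lemmas and then carry out a contraction of this kind; as it stands, that half of the proof is missing.
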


\begin{figure}

\begin{center}
\begin{minipage}{1.25in}\begin{tabular}{ccc}
$5$ & $2$ & $3$   \\\hline
$a$ & $b$ &  $c$ \\
$b$ &  $c$ & $a$ \\
$c$ &  $a$ &  $b$ \\
\end{tabular}\end{minipage}\begin{minipage}{4in}\begin{tikzpicture}

\node[circle,draw, minimum width=0.25in] at (0,0) (a) {$a$}; 
\node[circle,draw,minimum width=0.25in] at (3,0) (c) {$c$}; 
\node[circle,draw,minimum width=0.25in] at (1.5,1.5) (b) {$b$}; 

\path[->,draw,thick] (b) to  (c);
\path[->,draw,thick] (a) to (b);

\end{tikzpicture}\qquad\quad\begin{tikzpicture}

\node[circle,draw, minimum width=0.25in] at (0,0) (a) {$a$}; 
\node[circle,draw,minimum width=0.25in] at (3,0) (c) {$c$}; 
\node[circle,draw,minimum width=0.25in] at (1.5,1.5) (b) {$b$}; 

\path[->,draw,thick] (a) to node[fill=white] {$P$} (b);

\end{tikzpicture}
\end{minipage}
\end{center}
    \caption{an example in which $P(f_{cov}(\mathbf{R}))$ is not negatively transitive. The first graph is the majority graph for the profile $\mathbf{R}$ on the left, showing that $a\succ_\mathbf{R}b$ and $b\succ_\mathbf{R}c$, but neither $a\succ_\mathbf{R} c$ nor $c\succ_\mathbf{R} a$. The second graph shows the induced covering relation: since $a\succ_\mathbf{R}b$ but $a\not\succ_\mathbf{R}c$, it is not the case that $bP(f_{cov}(\mathbf{R}))c$. Finally, since \textit{not} $aP(f_{cov}(\mathbf{R}))c$, \textit{not} $cP(f_{cov}(\mathbf{R}))b$, and yet $aP(f_{cov}(\mathbf{R}))b$, $P(f_{cov}(\mathbf{R}))$ is not negatively transitive.}
    \label{NegativeTrans}
\end{figure}
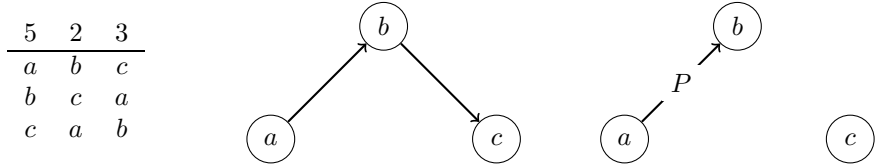

Finally, we return to the point that $P(f_{cov}(\mathbf{R}))$ can fail to be negatively transitive, as in Figure \ref{NegativeTrans}, as we expected from Baigent's theorem. As a result, the maximal element choice function induced by $f_{cov}$ does not satisfy as strong ``choice consistency'' conditions as that induced by a relation $R$ for which $P(R)$ is negatively transitive. In particular, $M(\cdot, f_{cov}(\mathbf{R}))$ can violate Sen's \citeyearpar{Sen1969a} $\beta$ condition:\footnote{By contrast, $M(\cdot, R)$ satisfies $\beta$ whenever $P(R)$ is negatively transitive.} \[\mbox{if $\varnothing\neq Y\subseteq Z\subseteq X$ and $C(Y)\cap C(Z)\neq \varnothing$, then $C(Y)\subseteq  C(Z)$.}\]
For example, for the profile in Figure \ref{NegativeTrans}, we have the following violation of $\beta$:
\begin{eqnarray*}
M(\{b,c\},f_{cov}(\mathbf{R}))=\{b,c\};\\ M(\{a,b,c\}, f_{cov}(\mathbf{R}))=\{a,c\}.
\end{eqnarray*}
Yet in his argument for ``social rationality'' conditions, Arrow \citeyearpar{Arrow1963} does not give an argument that a choice function for society ought to satisfy $\beta$. Instead, he gives an argument (p.~120) that a choice function for society ought to satisfy \textit{path independence}, which can be formalized as the condition that for all $Y_1,Y_2\subseteq X$,
\[C(Y_1\cup Y_2)=C(C(Y_1)\cup C(Y_2)).\]
As observed by Plott \citeyearpar{Plott1973}, if the strict relation $P(R)$ is transitive---in which case $R$ is said to be \textit{quasi-transitive}---then $M(\cdot,R)$ satisfies path independence.\footnote{See Theorem 1 of \citealt{Blair1976} for a conjunction of conditions equivalent to path independence. Plott \citeyearpar{Plott1973} characterizes choice functions coming from quasi-transitive relations using path independence and a second condition, known as the Generalized Condorcet Axiom: for all $x\in Y\subseteq X$, if $x\in C(\{x,y\})$ for all $y\in Y$, then $x\in C(Y)$. We note that the following are equivalent conditions on a choice function $C$: $C=G(\cdot, R)$ for some quasi-transitive and complete relation $R$; $C=M(\cdot,R)$ for some quasi-transitive and complete relation $R$; $C=M(\cdot,R)$ for some transitive and reflexive relation $R$; $C=M(\cdot,R)$ for some transitive and regular (see \citealt[Theorem 2]{Eliaz2006}) relation $R$. Additional equivalences may be found in \citealt[Theorem 2]{Schwartz1976}.} Negative transitivity is not required. As $P(f_{cov}(\mathbf{R}))$ is transitive, we conclude that there are AS representable CCRs that induce maximal element choice functions satisfying Arrow's desideratum of path independence. Thus, weakening IIA to AS representability allows us to give Arrow the social choice consistency condition he desired, without dictators or vetoers.

In the next two subsections, we consider CCRs that become available if we do not insist on path independence of the choice function induced by social preference. As Plott \citeyearpar[p.~1090]{Plott1973} remarks, ``[T]here appears to be no overriding reason to impose even I.P. [independence of path] at the very outset of the analysis, even though special considerations make I.P. appear to be a potentially useful tool.'' One specific objection to path independence is that it ignores the holistic nature of fairness. For example, consider an election with the following features, depicted in Figure~\ref{PathIndCounter}: 3 more voters prefer $a$ over $b$ than prefer $b$ over $a$; 3 more voters prefer $b$ over $c$ than prefer $c$ over $b$; and $1$ more voter prefers $c$ over $a$ than prefers $a$ over $c$. If all three candidates $a,b,c$ are still feasible---none has died, dropped out after election day, etc.---then it would be \textit{unfair to $b$ and $b$'s supporters} to choose $c$ but not $b$ as a winner from $\{a,b,c\}$, as $b$'s global position is at least as strong as $c$'s in Figure \ref{PathIndCounter}. Thus, it should be that (i) if $c\in C(\{a,b,c\})$, then  $b\in C(\{a,b,c\})$. However, if $b$ were to die or drop out, so the feasible set of candidates shrinks from $\{a,b,c\}$ to $\{a,c\}$, then there is no unfairness to $b$ or $b$'s supporters in choosing $c$ as a winner from $\{a,c\}$. Indeed, we should have (ii) $c\in C(\{a,c\})$. There is also no unfairness to anyone if we choose $a$ as the unique winner in the case that $c$ (rather than $b$) were to die or drop out, so we should be able to have (iii) $C(\{a,b\})=\{a\}$. But together (i), (ii), and (iii) contradict path independence. For path independence requires \[C(\{a,b,c\})=C(C(\{a,b\})\cup C(\{c\})).\] By (iii), we can rewrite the right-hand side as $C(\{a\}\cup C(\{c\}))=C(\{a,c\})$, which by (ii) contains $c$, but it does not contain $b$ since $C(\{a,c\})\subseteq \{a,c\}$. Thus, by path independence, $C(\{a,b,c\})$ contains $c$ but not $b$, violating the fairness condition in (i).

The moral of the example above is that we should not expect that making the choice with \textit{all three candidates $a,b,c$ still in the running}, which requires fairness to each of the three candidates and their supporters, is the same as first making the choice between $a,b$, without fairness considerations for $c$ (who we imagine has dropped out of contention), and then making a choice between the winner(s) of that contest and $c$ (who now we imagine is back in contention after all), which may now involve ignoring fairness considerations to $b$ (if $a$ was chosen over $b$ from $a,b$). Path independence has us ignoring fairness considerations by the way we artificially split up the decision. For this reason, we do not wish to restrict attention only to CCRs whose induced choice functions satisfy path independence. 

\begin{figure}
\begin{center} 
\begin{tikzpicture}

\node[circle,draw, minimum width=0.25in] at (0,0) (a) {$a$}; 
\node[circle,draw,minimum width=0.25in] at (3,0) (c) {$c$}; 
\node[circle,draw,minimum width=0.25in] at (1.5,1.5) (b) {$b$}; 

\path[->,draw,thick] (c) to node[fill=white] {$1$} (a);
\path[->,draw,thick] (b) to node[fill=white] {$3$} (c);
\path[->,draw,thick] (a) to node[fill=white] {$3$} (b);

\end{tikzpicture}
\end{center}
\caption{a margin graph. The arrow from $a$ to $b$ with weight 3 indicates that 3 more voters prefer $a$ to $b$ than prefer $b$ to $a$, etc.}\label{PathIndCounter}
\end{figure}
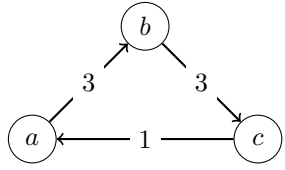
 
\subsection{Ranked Pairs}\label{RankedPairsSection}

In this section, we show that a CCR based on Tideman's \citeyearpar{Tideman1987} Ranked Pairs voting method is AS~representable. Ranked Pairs belongs to the family of methods that are sensitive to the strength of majority preference of one candidate over another. The key idea is that in a profile where $a$ is majority preferred to $b$, and $c$ is majority preferred to $d$, it may be that the majority preference for $a$ over $b$ is \textit{stronger} than the majority preference for $c$ over $d$: $(a,b)S_\mathbf{R}(c,d)$. There are a number of possible ways to define $S_{\mathbf{R}}$---for instance, defining $S_{\mathbf{R}}$ in terms of $\mathrm{Margin}_{\mathbf{R}}$ or $\mathrm{Ratio}_{\mathbf{R}}$. For ease of exposition, we fix our measure of strength of majority preference to be $\mathrm{Margin}_{\mathbf{R}}$ (though see Footnote \ref{RPRatioPareto} and Remark \ref{RPratio}). 

Any preprofile $\mathbf{R}$ induces a weighted directed graph $\mathcal{M}(\mathbf{R})$, the \textit{margin graph of $\mathbf{R}$}, whose set of vertices is $X$ with an edge from $x$ to $y$ when $\mathrm{Margin}_{\mathbf{R}}(x,y)> \mathrm{Margin}_{\mathbf{R}}(y,x)$,\footnote{The condition that $\mathrm{Margin}_{\mathbf{R}}(x,y)> \mathrm{Margin}_{\mathbf{R}}(y,x)$ is equivalent to $\mathrm{Margin}_{\mathbf{R}}(x,y)> 0$, but if one wishes to switch $Margin$ to $Ratio$ (see Remark \ref{RPratio}), we need $\mathrm{Ratio}_{\mathbf{R}}(x,y)> \mathrm{Ratio}_{\mathbf{R}}(y,x)$ rather than $\mathrm{Ratio}_{\mathbf{R}}(x,y)>0$.} weighted by $\mathrm{Margin}_{\mathbf{R}}(x,y)$.  Ranked Pairs has the property that for any $\mathbf{R}$ and $\mathbf{R}'$, if $\mathcal{M}(\mathbf{R})=\mathcal{M}(\mathbf{R}')$, then the output of Ranked Pairs is the same for $\mathbf{R}$ and $\mathbf{R}'$.  We may even think of Ranked Pairs as taking as input any weighted directed graph with positive weights, which we call a \textit{margin graph} $\mathcal{M}$. For an edge $\langle x,y\rangle$ in $\mathcal{M}$, let $\mathrm{Margin}_\mathcal{M}(x,y)$ be its weight.

Roughly speaking, Ranked Pairs locks in majority preference relations in order of strength, ignoring any majority preferences that would create a cycle with majority preferences already locked in. As Tideman \citeyearpar[p.~199]{Tideman1987} informally describes it:
\begin{quote} Start with the pairings decided by the largest and second largest majorities, and require that the orderings they specify be preserved in the final ranking of all candidates. Seek next to preserve the pair-ordering decided by the third largest majority, and so on. When a pair-ordering is encountered that cannot [without creating a cycle] be preserved while also preserving all pair-orderings with greater majorities, disregard it and go on to pair-orderings decided by smaller majorities. 
\end{quote}
For example, for the margin graph in Figure \ref{RPfig}, we first lock in the margin 5 victory of $a$ over $b$, then lock in the margin 3 victory of $b$ over $c$, and then ignore the margin $1$ victory of $c$ over $a$, as it is inconsistent with the relations already locked in. Note, by contrast, that $a,b,c$ are noncomparable according to the Gillies Covering~CCR.

\begin{figure}[h]
\begin{center}
\begin{minipage}{1.25in}\begin{tabular}{ccc}
$4$ & $2$ & $3$   \\\hline
$a$ & $b$ &  $c$ \\
$b$ &  $c$ & $a$ \\
$c$ &  $a$ &  $b$ \\
\end{tabular}\end{minipage}\begin{minipage}{4in}\begin{tikzpicture}

\node[circle,draw, minimum width=0.25in] at (0,0) (a) {$a$}; 
\node[circle,draw,minimum width=0.25in] at (3,0) (c) {$c$}; 
\node[circle,draw,minimum width=0.25in] at (1.5,1.5) (b) {$b$}; 

\path[->,draw,thick] (b) to node[fill=white] {$3$} (c);
\path[->,draw,thick] (c) to node[fill=white] {$1$} (a);
\path[->,draw,thick] (a) to node[fill=white] {$5$} (b);

\end{tikzpicture}\qquad\quad\begin{tikzpicture}

\node[circle,draw, minimum width=0.25in] at (0,0) (a) {$a$}; 
\node[circle,draw,minimum width=0.25in] at (3,0) (c) {$c$}; 
\node[circle,draw,minimum width=0.25in] at (1.5,1.5) (b) {$b$}; 

\path[->,draw,thick] (b) to node[fill=white] {$P$} (c);
\path[->,draw,thick] (a) to node[fill=white] {$P$} (b);

\end{tikzpicture}
\end{minipage}\end{center}
\caption{an illustration of Ranked Pairs. The first graph is the margin graph for the profile $\mathbf{R}$ on the left. The second graph shows the majority preferences locked in by Ranked Pairs.}
    \label{RPfig}
\end{figure}
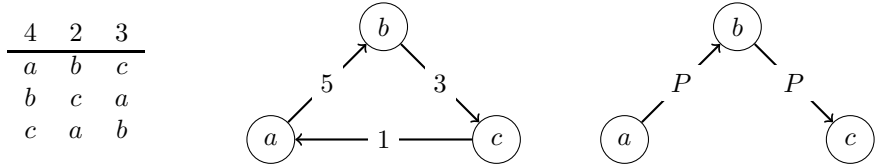

Formally, let  $\mathcal{M}$ be a margin graph and $T\in \mathcal{L}(X^2\setminus \Delta_X)$ a linear order on the set of pairs of distinct candidates (here $\Delta_X=\{\langle x,x\rangle\mid x\in X\}$), which we will use to break ties when we have distinct pairs $\langle x,y\rangle$ and $\langle x',y'\rangle$ such that $\mathrm{Margin}_\mathcal{M}(x,y)=\mathrm{Margin}_\mathcal{M}(x',y')$. Define  $\langle x,y\rangle >_{\mathcal{M},T}\langle x',y'\rangle$ if either  $\mathrm{Margin}_\mathcal{M}(x,y) > \mathrm{Margin}_\mathcal{M}(x',y')$ or $\mathrm{Margin}_\mathcal{M}(x,y) = \mathrm{Margin}_\mathcal{M}(x',y')$ and $\langle x,y\rangle \mathrel{T} \langle x',y'\rangle$.  Then $\langle x,y\rangle >_{\mathcal{M},T}\langle x',y'\rangle$ means that we consider the pair $\langle x,y\rangle$ before the pair $\langle x',y'\rangle$, as in Tideman's description. Let \[E_\mathcal{M}=\{\langle x,y\rangle\mid \mathrm{Margin}_\mathcal{M}(x,y)>\mathrm{Margin}_\mathcal{M}(y,x) \}\]
be the edge relation in the margin graph $\mathcal{M}$. We define the \textit{Ranked Pairs relation} $\mathbb{RP}(\mathcal{M},T)$, a subset of $E_\mathcal{M}$, inductively as follows ($C_n$ is the set of edges that have already been considered after stage $n$):
\begin{itemize}
\item $\mathbb{RP}(\mathcal{M},T)_0\,=\varnothing$ and $C_0=\varnothing$.
\item If $n<|E_\mathcal{M}|$, let $\langle a,b\rangle$ be the maximum element of $E_\mathcal{M} \setminus C_n$ according to $>_{\mathcal{M},T}$. Then define 
\[\mathbb{RP}(\mathcal{M},T)_{n+1} =\begin{cases} \mathbb{RP}(\mathcal{M},T)_{n}\cup\{\langle a,b\rangle\}&\mbox{if this relation is acyclic} \\ \mathbb{RP}(\mathcal{M},T)_{n} & \mbox{otherwise}.  \end{cases}\]
In either case, let $C_{n+1}=C_n\cup \{\langle a,b\rangle\}$.
\item $\mathbb{RP}(\mathcal{M},T)=\mathbb{RP}(\mathcal{M},T)_{|E_\mathcal{M}|}$.
\end{itemize}

Since $\mathbb{RP}(\mathcal{M},T)$ is an acyclic relation on $X$ by construction, one can take its transitive closure $\mathbb{RP}(\mathcal{M},T)^*$, as Tideman does when discussing the ranking determined by Ranked Pairs, and $\mathbb{RP}(\mathcal{M},T)^*$ is also acyclic.\footnote{Note that the maximal elements of $\mathbb{RP}(\mathcal{M},T)$ and $\mathbb{RP}(\mathcal{M},T)^*$ are the same, i.e.,
\[\{x\in X\mid \mbox{there is no }y: \langle y,x\rangle\in \mathbb{RP}(\mathcal{M},T)\}=\{x\in X\mid \mbox{there is no }y: \langle y,x\rangle \in \mathbb{RP}(\mathcal{M},T)^* \}, \]
so they both determine the same set of winners for the election. This is a general fact: given an acyclic relation $P$ and its transitive closure $P^*$, the set of maximal elements of $P$ is equal to the set of maximal elements of $P^*$.  But differences appear if we consider the maximal element choice functions from Section \ref{CoveringSection}, induced by the two relations, applied to proper subsets of the set of all candidates. For example, compare $M(\cdot, \mathbb{RP}(M,T))$ and $M(\cdot, \mathbb{RP}(M,T)^*)$ on $\{a,c\}$ in Figure \ref{RPfig}: $M(\{a,c\}, \mathbb{RP}(\mathcal{M},T)^*)=\{a\}$, despite that $c$ is majority preferred to $a$, whereas $M(\{a,c\}, \mathbb{RP}(\mathcal{M},T))=\{a,c\}$, suggesting a further tie-breaking procedure.} However, we will not use $\mathbb{RP}(\mathcal{M},T)^*$ in what follows because doing so would result in a CCR that violates weak IIA and hence AS representability. For example, in Figure \ref{RPfig}, the transitive closure adds a strict social preference of $a$ over $c$, \textit{even though $c$ is majority preferred to $a$}. To turn this into a violation of weak IIA, consider the modified profile where the two voters in the middle column move $c$ to the top of their ranking; then $c$ is strictly socially preferred to $a$ according to both $\mathbb{RP}(\mathcal{M},T)$ and $\mathbb{RP}(\mathcal{M},T)^*$. Thus, using $\mathbb{RP}(\mathcal{M},T)^*$ would yield reversed strict social preferences on $a$ vs.~$c$ for two profiles in which all voters rank $a$ vs.~$c$ in the same way, violating weak IIA. This explains our choice of $\mathbb{RP}(\mathcal{M},T)$ for an AS representable version of Ranked Pairs.

In order to have a neutral version of Ranked Pairs, we eliminate the dependence on $T$ by defining the relation $\mathbb{RP}(\mathcal{M})$ by 
\[\langle x,y\rangle\in \mathbb{RP}(\mathcal{M})\mbox{ if for all }T\in\mathcal{L}(X^2\setminus \Delta_X),\langle x,y\rangle\in \mathbb{RP}(\mathcal{M},T).\]
Thus, we keep a strict preference for $x$ over $y$ if and only if that preference gets locked in by the Ranked Pairs procedure no matter the tie-breaking ordering $T$.

The relation $\mathbb{RP}(\mathcal{M})$ is a \textit{strict} social preference relation. There are then various options for defining the weak social preference relation outputted by a CCR, including anonymous and neutral options.  For example, we could define $f$ such that $xf(\mathbf{R})y$ if $\langle y,x\rangle\not\in \mathbb{RP}(\mathcal{M}(\mathbf{R}))$. This is a complete CCR, but there are also incomplete options. For example, define $xf(\mathbf{R})y$ if either $\langle x,y\rangle\in \mathbb{RP}(\mathcal{M}(\mathbf{R}))$ or all voters are indifferent between $x$ and $y$, thereby satisfying Pareto indifference. Given the multiplicity of options, we will define a family of Ranked Pairs CCRs instead of a unique Ranked Pairs CCR.

\begin{definition} A Ranked Pairs CCR is a CCR $f$ such that for any profile $\mathbf{R}$, $xP(f(\mathbf{R}))y$ if and only if $\langle x,y\rangle \in \mathbb{RP}(\mathcal{M}(\mathbf{R}))$.
\end{definition}

 Every Ranked Pairs CCR $f$ satisfies acyclicity and Pareto\footnote{Using $Ratio$ instead of $Margin$, strong Pareto is also satisfied.\label{RPRatioPareto}} and has no vetoers. Further properties of Ranked Pairs are described in \citealt{Tideman1987}.

We now observe that Ranked Pairs CCRs are AS representable. Intuitively, the advantage according to Ranked Pairs is the margin of victory; and the standard for social preference of $x$ over $y$ in $\mathbf{R}$ is the greatest margin of victory that $x$ can have over $y$ according to a margin graph $\mathcal{M}$ that matches $\mathcal{M}(\mathbf{R})$ except on $x,y$ and yet $\left\langle x,y\right\rangle \notin \mathbb{RP}(\mathcal{M})$.

\begin{restatable}{proposition}{RankedPairsAS}\label{RankedPairsAS} Any Ranked Pairs CCR  is AS representable with
\[\mathsf{Advantage}(x,y,\mathbf{R}|_{\{x,y\}}):=\mathrm{Margin}_{\mathbf{R}|_{\{x,y\}}}(x,y)\]
and
\[\mathsf{Standard}(x,y,\mathbf{R}^{-x,y})= \mathrm{min}\{k-1 \mid k\in \mathbb{Z}^+,\, \langle x,y\rangle\in \mathbb{RP}(\mathcal{M}(\mathbf{R}^{-x,y})+x\overset{k}{\to}y)\}\]
where $\mathcal{M}(\mathbf{R}^{-x,y})+x\overset{k}{\to}y$ is the margin graph obtained from $\mathcal{M}(\mathbf{R}^{-x,y})$ by adding an edge from $x$ to $y$ with weight $k$.
\end{restatable}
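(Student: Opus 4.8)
The plan is to verify the three defining conditions (\ref{minus}), (\ref{minimal}), and (\ref{Piff}) for the stated functions, working in the totally ordered group $(\mathbb{Z},+,\leq)$ with identity $e=0$. Conditions (\ref{minus}) and (\ref{minimal}) are quick. Since $Margin_{\mathbf{R}}(y,x)=-Margin_{\mathbf{R}}(x,y)$, the advantage of $x$ over $y$ is the additive inverse of the advantage of $y$ over $x$, giving (\ref{minus}); and since every $k$ in the defining set of the standard is a positive integer, each $k-1\geq 0$, so the standard is $\geq 0=e$, giving (\ref{minimal}). For this to make sense I first check that the defining set is nonempty: taking $k$ larger than every weight in $\mathcal{M}(\mathbf{R}^{-x,y})$ makes $\langle x,y\rangle$ the strictly heaviest edge, hence the first edge considered under every tie-breaking order $T$, hence locked in; so $\langle x,y\rangle\in\mathbb{RP}(\mathcal{M}(\mathbf{R}^{-x,y})+x\overset{k}{\to}y)$ and the minimum exists.

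The substance is condition (\ref{Piff}). I would first observe that deleting the pair $x,y$ from every ballot leaves all other pairwise margins unchanged, so $Margin_{\mathbf{R}^{-x,y}}(u,v)=Margin_{\mathbf{R}}(u,v)$ for $\{u,v\}\neq\{x,y\}$ while $Margin_{\mathbf{R}^{-x,y}}(x,y)=0$. Writing $m=Margin_{\mathbf{R}}(x,y)$ and $\mathcal{M}_0=\mathcal{M}(\mathbf{R}^{-x,y})$, this yields, when $m>0$, the identity $\mathcal{M}(\mathbf{R})=\mathcal{M}_0+x\overset{m}{\to}y$. When $m\leq 0$ there is no edge $\langle x,y\rangle$ in $\mathcal{M}(\mathbf{R})$, so $\langle x,y\rangle\notin\mathbb{RP}(\mathcal{M}(\mathbf{R}))$ and hence \emph{not} $xP(f(\mathbf{R}))y$; on the other side the advantage $m\leq 0$ cannot exceed the standard, which is $\geq 0$, so (\ref{Piff}) holds with both sides false. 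Thus I may assume $m>0$, in which case (\ref{Piff}) reduces, by the definition of a Ranked Pairs CCR together with the definition of the standard as $k^{*}-1$ where $k^{*}=\min\{k\in\mathbb{Z}^{+}\mid \langle x,y\rangle\in\mathbb{RP}(\mathcal{M}_0+x\overset{k}{\to}y)\}$, to the single statement $\langle x,y\rangle\in\mathbb{RP}(\mathcal{M}_0+x\overset{m}{\to}y)\iff m\geq k^{*}$.

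That equivalence follows immediately from the key lemma that $k\mapsto\bigl[\langle x,y\rangle\in\mathbb{RP}(\mathcal{M}_0+x\overset{k}{\to}y)\bigr]$ is non-decreasing. I would prove this first for a fixed tie-breaking order $T$ and then pass to the neutral relation by quantifying over $T$. Fix $T$ and $k'>k$. The graphs $\mathcal{M}_0+x\overset{k}{\to}y$ and $\mathcal{M}_0+x\overset{k'}{\to}y$ induce the \emph{same} linear priority order on the edges other than $\langle x,y\rangle$ (their weights are identical) and differ only in where $\langle x,y\rangle$ is inserted; raising its weight from $k$ to $k'$ moves it strictly earlier. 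Consequently the set $A_k$ of edges considered before $\langle x,y\rangle$ at weight $k$ and the set $A_{k'}$ at weight $k'$ are both \emph{initial segments} of this one fixed order, with $A_{k'}\subseteq A_k$. The locked-in set present when $\langle x,y\rangle$ is reached is $L_{k'}$ at weight $k'$ and $L_k$ at weight $k$; running the Ranked Pairs construction on $A_k$ first traverses the initial segment $A_{k'}$ (producing exactly $L_{k'}$) and then the remaining edges, and since each step of the construction only ever adds edges to the locked-in set, $L_{k'}\subseteq L_k$. Finally, $\langle x,y\rangle$ is kept (locked in) precisely when the relevant locked-in set contains no directed path from $y$ to $x$; as $L_{k'}\subseteq L_k$, absence of such a path in $L_k$ forces its absence in $L_{k'}$, so $\langle x,y\rangle\in\mathbb{RP}(\mathcal{M}_0+x\overset{k}{\to}y,T)$ implies $\langle x,y\rangle\in\mathbb{RP}(\mathcal{M}_0+x\overset{k'}{\to}y,T)$. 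Since this holds for every $T$, the same implication holds for the neutral relation $\mathbb{RP}$.

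The main obstacle — and the place where a naive argument breaks — is exactly this monotonicity lemma, because Ranked Pairs does \emph{not} preserve reachability under the addition of edges in general: a higher-priority edge can knock out a lower-priority one and thereby destroy a $y$-to-$x$ path, so one cannot simply compare the full locked-in relations at the two weights. The resolution is to restrict attention to the predecessor sets of the \emph{single} edge $\langle x,y\rangle$; once $T$ is fixed these are nested initial segments of one and the same order, which is precisely what makes the inclusion $L_{k'}\subseteq L_k$ go through and collapses the subtle global behavior of Ranked Pairs into a one-line containment. A secondary point to handle carefully is that the neutral relation must be treated via the universal quantifier over $T$ rather than any single canonical ordering, which is why I would establish the fixed-$T$ implication first and only afterwards intersect over all $T$.
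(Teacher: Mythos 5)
Your proof is correct and follows essentially the same route as the paper's: verify (\ref{minus}) and (\ref{minimal}) directly, and reduce (\ref{Piff}) to the monotonicity in $k$ of $\langle x,y\rangle\in\mathbb{RP}(\mathcal{M}(\mathbf{R}^{-x,y})+x\overset{k}{\to}y)$, proved for each fixed $T$ by observing that the edges considered before $\langle x,y\rangle$ form nested initial segments of a common priority order, so the locked-in sets at the moment $\langle x,y\rangle$ is reached are nested, and then intersecting over all $T$. Your explicit checks that the minimum defining $\mathsf{Standard}$ exists and that the case $Margin_{\mathbf{R}}(x,y)\leq 0$ is handled are minor refinements the paper leaves implicit.
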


\begin{remark}\label{RPratio}
Defining Ranked Pairs as above but with \textit{Ratio} instead of \textit{Margin}\footnote{Thus, the graphs are  weighted by $Ratio$ rather than $Margin$ in the algorithm for Ranked Pairs.} produces an AS representable CCR, where the advantage and standard functions---taking values in $\mathbb{Q}_{>0}$ with its usual multiplication operation and ordering---are as follows: $\mathsf{Advantage}$ is as in Proposition \ref{RankedPairsAS} but with \textit{Ratio} instead of \textit{Margin}; letting $\mathcal{R}_{|V|}$ be the finite set of possible values of \textit{Ratio} (for a fixed number $|V|$ of voters) and $m_{|V|}=\min\{|x-y|:x,y\in \mathcal{R}_{|V|}\}$, \[\mathsf{Standard}(x,y,\mathbf{R}^{-x,y})=\min\{k-m_{|V|}\mid k\in \mathcal{R}_{|V|},\langle x,y\rangle\in \mathbb{RP}(\mathcal{M}(\mathbf{R}^{-x,y})+x\overset{k}{\to}y)\}.\]
\end{remark}

\subsection{Split Cycle}

In this section, we show that the Split Cycle CCR studied in \citealt{HP2020a,HP2020b} is AS representable. The Split Cycle CCR outputs a strict relation of \textit{defeat} between candidates, corresponding to our strict social preference $P(f(\mathbf{R}))$, without addressing questions of indifference between candidates. Like Ranked Pairs, Split Cycle decides strict social preferences using the majority margins between candidates in a profile. In this paper, where we allow ties in ballots, we can consider other notions of strength of majority preference, such as the   $\mathrm{Ratio}_\mathbf{R}$ measure defined in Section \ref{ASmodelsection}.  For ease of exposition, however, we will continue to use $\mathrm{Margin}_{\mathbf{R}}$ (though see Footnote \ref{SCRatioPareto} and Remark \ref{SCratio}).

The Split Cycle defeat relation can be determined as follows:
\begin{enumerate}
    \item In each majority cycle, identify the majority preference with the smallest margin in that cycle (e.g., if $a$ beats $b$ by 5, $b$ beats $c$ by 3, and $c$ beats $a$ by $1$, then the majority preference for $c$ over $a$ has the smallest margin in the cycle).
    \item After completing 1 for all cycles, discard the identified majority preferences. All remaining majority preferences count as defeats.
\end{enumerate}
Thus, $a$ defeats $b$ just in case there is no majority cycle containing $a,b$ in which the majority preference for $a$ over $b$ is the weakest majority preference occurring in the cycle.

Figure \ref{SCexample} shows a margin graph and the associated strict social preference relation according to Split Cycle. The weakest majority preference in the  cycle $a\to c\to b\to a$ is that of $b\to a$ with a margin of 3, and the weakest majority preference in the cycles $a\to d\to b\to a$ and $a\to d\to c\to b\to a$ is that of $a\to d$ with a margin of $1$. Note that Ranked Pairs would lock in an additional strict preference $aPd$,\footnote{Ranked Pairs locks in $dPc$,  $aPc$,  $cPb$, and $dPa$, then ignores the majority preference for $b$ over $a$ as inconsistent with what has already been locked in, and finally locks in $aPd$} while the only strict preference produced by the Gillies Covering CCR is $dPc$.

In order to prove that Split Cycle is AS representable, we introduce some terminology and equivalent reformulations of Split Cycle from \citealt{HP2020a}.

\begin{definition} Given a preprofile $\mathbf{R}$ and $x,y\in X$, a \textit{majority path from $x$ to $y$} in $\mathbf{R}$ is a sequence $\rho = \langle z_1,\dots,z_n \rangle$ of candidates with $x=z_1$ and $y=z_n$ such that for each $i<n$, $z_i\succ_\mathbf{R}z_{i+1}$, and for $1\leq i < j <n$,  we have $z_i\neq z_j$ and $z_j\neq z_n$. The \textit{splitting number of $\rho$} is the smallest majority margin between consecutive candidates in $\rho$:
\[\mathrm{Split}\#_\mathbf{R}(\rho)= \mbox{min}\{\mathrm{Margin}_\mathbf{R}(z_i,z_{i+1})\mid i<n \}.\]
A \textit{majority cycle} is a majority path as above for which $z_1=z_n$.
\end{definition}

Since Split Cycle is defined in terms of a strict relation of defeat, without concern for a notion of indifference between candidates, we will follow the approach used for Ranked Pairs above and define a family of Split Cycle CCRs outputting a weak relation $f(\mathbf{R})$ rather than \textit{the} Split Cycle CCR outputting a weak relation $f(\mathbf{R})$; but all Split Cycle CCRs agree on the strict relation between candidates.

\begin{definition}\label{SplitCycleDef} A Split Cycle CCR is a CCR $f$ such that for any profile $\mathbf{R}$ and $x,y\in X$, we have $xP(f(\mathbf{R}))y$ if and only if $\mathrm{Margin}_\mathbf{R}(x,y)>\mathrm{Margin}_\mathbf{R}(y,x)$ and
\[\mathrm{Margin}_\mathbf{R}(x,y)>Split\#(\rho)\mbox{ for every majority cycle }\rho\mbox{ containing $x$ and $y$}.\]
\end{definition}

Every Split Cycle CCR satisfies acyclicity and Pareto\footnote{Using $Ratio$ in place of $Margin$, strong Pareto is also satisfied.\label{SCRatioPareto}} and has no vetoers; many other properties of Split Cycle are discussed in \citealt{HP2020b,HP2020a} and \citealt{Ding2022}. Note in particular that Split Cycle satisfies a much stronger axiom than weak IIA:  the Coherent~IIA axiom of \citealt{HP2020b}.

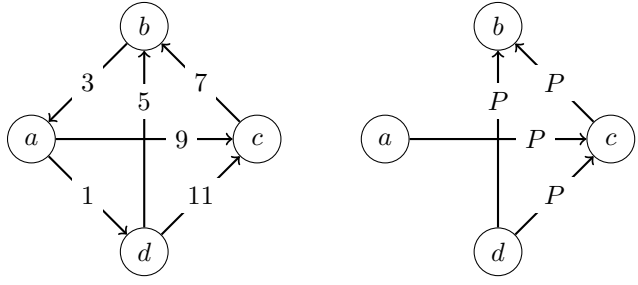
\begin{figure}
\begin{center}
\begin{minipage}{4in}\begin{tikzpicture}

\node[circle,draw, minimum width=0.25in] at (0,0) (b) {$a$}; 
\node[circle,draw,minimum width=0.25in] at (3,0) (a) {$c$}; 
\node[circle,draw,minimum width=0.25in] at (1.5,1.5) (c) {$b$}; 
\node[circle,draw,minimum width=0.25in] at (1.5,-1.5) (d) {$d$}; 

\path[->,draw,thick] (b) to (a);
\path[->,draw,thick] (d) to (c);
\path[->,draw,thick] (c) to node[fill=white] {$3$} (b);
\path[->,draw,thick] (a) to node[fill=white] {$7$} (c);
\path[->,draw,thick] (d) to node[fill=white] {$11$} (a);
\path[->,draw,thick] (b) to node[fill=white] {$1$} (d);

\node[fill=white] at (1.5,.5)  {$5$}; 
\node[fill=white] at (2,0)  {$9$}; 

  \end{tikzpicture}\qquad\quad\begin{tikzpicture}

\node[circle,draw, minimum width=0.25in] at (0,0) (b) {$a$}; 
\node[circle,draw,minimum width=0.25in] at (3,0) (a) {$c$}; 
\node[circle,draw,minimum width=0.25in] at (1.5,1.5) (c) {$b$}; 
\node[circle,draw,minimum width=0.25in] at (1.5,-1.5) (d) {$d$}; 

\path[->,draw,thick] (b) to (a);
\path[->,draw,thick] (d) to (c);

\path[->,draw,thick] (a) to node[fill=white] {$P$} (c);
\path[->,draw,thick] (d) to node[fill=white] {$P$} (a);

\node[fill=white] at (1.5,.5)  {$P$}; 
\node[fill=white] at (2,0)  {$P$}; 

  \end{tikzpicture}
\end{minipage}
\end{center}
    \caption{a margin graph and the associated strict social preference relation for Split Cycle.}
    \label{SCexample}
\end{figure}

To help see why Split Cycle CCRs are AS representable, we note the following lemma that Holliday and Pacuit \citeyearpar[Lemma 3.16]{HP2020a} use to relate Split Cycle to the Beat Path voting method (\citealt{Schulze2011}).

\begin{lemma}\label{PathLem} If $f$ is a Split Cycle CCR, then for any profile $\mathbf{R}$ and $x,y\in X$, we have $xP(f(\mathbf{R}))y$ if and only if $\mathrm{Margin}_\mathbf{R}(x,y)>\mathrm{Margin}_\mathbf{R}(y,x)$ and
\[\mathrm{Margin}_\mathbf{R}(x,y)>Split\#(\rho)\mbox{ for every majority path }\rho\mbox{ from $y$ to $x$}.\]
\end{lemma}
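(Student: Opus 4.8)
The plan is to exploit the fact that both the original Definition \ref{SplitCycleDef} and the reformulation in the lemma share the conjunct $Margin_\mathbf{R}(x,y)>Margin_\mathbf{R}(y,x)$, i.e.\ $x\succ_\mathbf{R}y$. So I would fix $x,y$ and $\mathbf{R}$ and reduce the lemma to the following claim: assuming $x\succ_\mathbf{R}y$, the condition ``$Margin_\mathbf{R}(x,y)>Split\#_\mathbf{R}(\rho)$ for every majority cycle $\rho$ containing $x$ and $y$'' holds if and only if ``$Margin_\mathbf{R}(x,y)>Split\#_\mathbf{R}(\rho)$ for every majority path $\rho$ from $y$ to $x$'' holds. (When $x\not\succ_\mathbf{R}y$ there is nothing to prove, since then both sides of the lemma's biconditional are false.) The whole argument rests on two structural observations: every majority cycle through $x$ and $y$ contains a contiguous majority path from $y$ to $x$, and every majority path from $y$ to $x$ can be closed into a majority cycle through $x$ and $y$ by appending the edge $x\to y$, which is available precisely because $x\succ_\mathbf{R}y$.

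First I would handle the implication from the path condition to the cycle condition. Given a majority cycle $\sigma$ containing $x$ and $y$, I would traverse it starting at $x$, writing $x=w_0\succ_\mathbf{R}w_1\succ_\mathbf{R}\cdots\succ_\mathbf{R}w_n=x$ with $y=w_k$ for some $0<k<n$, and take the segment $\rho=\langle w_k,\dots,w_n\rangle$, a majority path from $y$ to $x$. Since $\sigma$ is a simple cycle, the vertices of $\rho$ are distinct, so $\rho$ satisfies the distinctness clauses in the definition of a majority path. Because $Split\#_\mathbf{R}(\cdot)$ is a minimum of consecutive margins and $\rho$ uses a subset of the edges of $\sigma$, one has $Split\#_\mathbf{R}(\sigma)\le Split\#_\mathbf{R}(\rho)$; combining this with the path hypothesis $Margin_\mathbf{R}(x,y)>Split\#_\mathbf{R}(\rho)$ yields $Margin_\mathbf{R}(x,y)>Split\#_\mathbf{R}(\sigma)$, as needed.

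For the converse, given a majority path $\rho=\langle y=z_1,\dots,z_m=x\rangle$ from $y$ to $x$, I would prepend the edge $x\to y$ to form $\sigma=\langle x,z_1,\dots,z_m\rangle$. This is a genuine majority cycle: every consecutive pair is a majority preference (the new edge because $x\succ_\mathbf{R}y$, the rest because $\rho$ is a majority path), and the vertices $x,z_1,\dots,z_{m-1}$ are pairwise distinct by the distinctness conditions on $\rho$. It contains both $x$ and $y$, and $Split\#_\mathbf{R}(\sigma)=\min\{Margin_\mathbf{R}(x,y),\,Split\#_\mathbf{R}(\rho)\}$. The cycle hypothesis gives $Margin_\mathbf{R}(x,y)>Split\#_\mathbf{R}(\sigma)$; the minimum cannot equal $Margin_\mathbf{R}(x,y)$ (else $Margin_\mathbf{R}(x,y)>Margin_\mathbf{R}(x,y)$), so it equals $Split\#_\mathbf{R}(\rho)$ and we conclude $Margin_\mathbf{R}(x,y)>Split\#_\mathbf{R}(\rho)$.

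I do not expect a serious obstacle here, as the argument is largely bookkeeping. The one point that demands genuine care is verifying that the extracted segment in the first direction and the closed-up sequence in the second direction really satisfy the (slightly fussy) distinctness requirements in the definitions of majority path and majority cycle, so that they are legitimate simple paths and simple cycles; the monotonicity of $Split\#_\mathbf{R}$ under passing to a sub-path and its behaviour under appending a single edge both depend on this. A secondary, purely clerical point is to keep the direction of the minimum inequalities straight, namely that a minimum taken over a subset of edges is at least the minimum taken over the whole.
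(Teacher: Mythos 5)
Your proof is correct. Note that the paper does not actually prove this lemma itself---it is imported as Lemma 3.16 of \citealt{HP2020a}---so there is no in-paper argument to compare against; but your reduction (both sides trivially false when $x\not\succ_\mathbf{R}y$, and otherwise cycle-condition $\Leftrightarrow$ path-condition via extracting the $y$-to-$x$ segment of a cycle in one direction and closing a path with the edge $x\to y$ in the other) is exactly the standard argument, and you handle the two points that need care: the distinctness clauses in the definitions of majority path and cycle are verified for the extracted segment and the closed-up cycle, and the identity $Split\#_\mathbf{R}(\sigma)=\min\{Margin_\mathbf{R}(x,y),\,Split\#_\mathbf{R}(\rho)\}$ together with $Margin_\mathbf{R}(x,y)\not>Margin_\mathbf{R}(x,y)$ correctly forces the minimum onto $Split\#_\mathbf{R}(\rho)$.
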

Observe that the left-hand side of the displayed inequality depends only on the restricted profile $\mathbf{R}|_{\{x,y\}}$. Moreover, the right-hand side depends only on $\mathbf{R}^{-x,y}$: where $\rho=\langle z_1,\dots,z_n\rangle$ is a majority path in $\mathbf{R}$, we cannot have $\langle y,x\rangle=\langle z_i,z_{i+1}\rangle$ for any $i<n$, given $\mathrm{Margin}_\mathbf{R}(x,y)>\mathrm{Margin}_\mathbf{R}(y,x)$, and we cannot have $\langle x,y\rangle=\langle z_i,z_{i+1}\rangle$ because $\rho$ is a majority path from $y$ to $x$, which by definition cannot contain $x$ followed by $y$. Thus, we are only concerned with majority paths from $y$ to $x$ in $\mathbf{R}^{-x,y}$. 

The foregoing observation is the key to the following.

\begin{restatable}{proposition}{SplitAS}\label{SplitAS} Every Split Cycle CCR is AS representable with
\[\mathsf{Advantage}(x,y,\mathbf{R}|_{\{x,y\}}):=\mathrm{Margin}_{\mathbf{R}|_{\{x,y\}}}(x,y)\]
and
\[\mathsf{Standard}(x,y,\mathbf{R}^{-x,y})=\mbox{max}\{\mathrm{Split}\#_{\mathbf{R}^{-x,y}}(\rho)\mid \rho\mbox{ a majority path from $y$ to $x$ in $\mathbf{R}^{-x,y}$}\}.\]
\end{restatable}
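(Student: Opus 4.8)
The plan is to verify the three defining conditions (\ref{minus}), (\ref{minimal}), and (\ref{Piff}) of AS rationalizability directly, working in the totally ordered group $(\mathbb{Z},+,\leq)$, whose identity is $e=0$ and whose inverse operation is negation. Condition (\ref{minus}) is immediate: since $\mathbf{R}|_{\{x,y\}}$ records only how voters rank $x$ against $y$, we have $Margin_{\mathbf{R}|_{\{x,y\}}}(x,y)=-Margin_{\mathbf{R}|_{\{x,y\}}}(y,x)$, which is exactly the claim that $\mathsf{Advantage}(x,y,\mathbf{R}|_{\{x,y\}})=\mathsf{Advantage}(y,x,\mathbf{R}|_{\{x,y\}})^{-1}$.

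For (\ref{minimal}) I would first fix the convention that the maximum of the empty set of splitting numbers is $0$; this is the one point in the statement that needs pinning down, since there may be no majority path from $y$ to $x$ at all (for instance, when no candidate is majority preferred to $x$). With this convention, (\ref{minimal}) follows because every edge $z_i\succ_{\mathbf{R}^{-x,y}}z_{i+1}$ of a majority path satisfies $Margin_{\mathbf{R}^{-x,y}}(z_i,z_{i+1})>0$, so each splitting number is positive; hence the standard is either $0$ (empty case) or strictly positive, and in either case $\geq e=0$.

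The heart of the argument is (\ref{Piff}), and here I would lean on the lemma preceding the proposition together with the observation already recorded there: when $Margin_\mathbf{R}(x,y)>Margin_\mathbf{R}(y,x)$, no majority path from $y$ to $x$ can traverse the edge $\langle y,x\rangle$ (that would require $y\succ_\mathbf{R}x$) or the edge $\langle x,y\rangle$ (a path from $y$ to $x$ cannot contain $x$ followed by $y$). Since deleting the pairs $\langle x,y\rangle,\langle y,x\rangle$ from each ballot changes no margin except that between $x$ and $y$, the majority paths from $y$ to $x$ in $\mathbf{R}$ coincide with those in $\mathbf{R}^{-x,y}$ and carry the same splitting numbers. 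Writing $m=Margin_\mathbf{R}(x,y)=Margin_{\mathbf{R}|_{\{x,y\}}}(x,y)$ and $S=\mathsf{Standard}(x,y,\mathbf{R}^{-x,y})$, I would then argue both directions. If $m>S$, then $m>0$ (as $S\geq 0$), which gives $Margin_\mathbf{R}(x,y)>Margin_\mathbf{R}(y,x)$, and moreover $m>S\geq Split\#(\rho)$ for every majority path $\rho$ from $y$ to $x$, so the lemma yields $xP(f(\mathbf{R}))y$. Conversely, if $xP(f(\mathbf{R}))y$, the lemma gives $m>0$ and $m>Split\#(\rho)$ for all such $\rho$, whence $m>S$ (with $S=0$ when there are no such paths, in which case $m>0=S$).

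The only genuine obstacle here is bookkeeping rather than anything conceptual: I must ensure that the identification of majority paths in $\mathbf{R}$ with those in $\mathbf{R}^{-x,y}$ is exact (gaining no spurious path through the now-tied pair $\{x,y\}$ and losing none), and that the degenerate no-path case is handled uniformly by the empty-maximum convention, so that (\ref{Piff}) collapses correctly to the Pareto-like condition $m>0$ exactly when there are no relevant paths. Everything else is the routine translation between the Beat-Path reformulation of Split Cycle supplied by the lemma and the advantage--standard inequality.
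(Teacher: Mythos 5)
Your proof is correct and follows essentially the same route as the paper's: verify (\ref{minus}) and (\ref{minimal}) directly, then derive (\ref{Piff}) from the Beat-Path reformulation lemma together with the observation that majority paths from $y$ to $x$ and their splitting numbers are unaffected by passing from $\mathbf{R}$ to $\mathbf{R}^{-x,y}$ (an observation the paper records just before the proposition). Your explicit handling of the empty-maximum convention, setting the standard to $0$ when there is no majority path from $y$ to $x$, is a detail the paper leaves implicit but is indeed needed for (\ref{Piff}) to reduce correctly to $Margin_{\mathbf{R}}(x,y)>0$ in that degenerate case.
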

\noindent Thus, according to Split Cycle viewed in terms of the AS model, the advantage of $x$ over $y$ is the margin, and we raise the standard for $x$ to defeat $y$ to that level $n$ at which there are no majority paths from $y$ to $x$ in which each candidate is majority preferred by more than $n$ to the next candidate in the cycle.

\begin{remark}\label{SCratio}
Defining Split Cycle as in Definition \ref{SplitCycleDef} but with $Ratio$ instead of $Margin$\footnote{Thus, a majority path, a majority cycle, and the splitting number of a majority path are assumed to be defined analogously with $Ratio$ instead of $Margin$.}  produces an AS representable CCR, where the underlying advantage and standard functions---taking values in $\mathbb{Q}_{>0}$ with its usual multiplication operation and ordering---are as in Proposition \ref{SplitAS} but with $Ratio$ instead of $Margin$.
\end{remark}

This completes our tour of three AS representable CCRs. These CCRs are not mere examples designed to prove the consistency of certain axioms; they have been independently proposed and studied for actual applications to voting (again see \citealt{Miller1980}, \citealt{Tideman1987}, \citealt{HP2020a,HP2020b}). While IIA may permit some CCRs that are usable in very special circumstances, such as the unanimity CCR, it is a point in favor of AS representability that it permits realistic general-purpose CCRs.

\section{AS Representability and Other Axioms}\label{explanatory}

Not only does AS representability make room for appealing CCRs, but also thinking in terms of AS representability helps explain certain properties of such CCRs. In this section, we give an example of this phenomenon. The properties in our example are traditionally stated not for CCRs but  for functions that assign to each profile a set of winning candidates. But given a CCR $f$, profile $\mathbf{R}$, and candidate $x\in X$, we can say that \textit{$x$ wins in $\mathbf{R}$ according to $f$} if $x$ is maximal in $P(f(\mathbf{R}))$, i.e., there is no $y\in X$ with $yP(f(\mathbf{R}))x$. Then we are interested in two properties concerning the choice of winners.

First, recall from Example \ref{DodgsonExample} that a candidate $x$ is a \textit{Condorcet winner} in a profile $\mathbf{R}$ if $\mathrm{Margin}_\mathbf{R}(x,y)>0$ for all $y\in X\setminus \{x\}$. 

\begin{definition} A CCR $f$ is \textit{Condorcet consistent} if for all profiles $\mathbf{R}$, if there is a Condorcet winner $x$ in $\mathbf{R}$, then $x$ is the unique winner in $\mathbf{R}$ according to $f$.\end{definition}
\noindent Note that CCRs that are AS representable using the $Margin$ function as \textsf{Advantage} will automatically satisfy at least the weaker property that if there is a Condorcet winner $x$, then $x$ is \textit{among} the winners in $\mathbf{R}$ according to $f$, since no $y$ will defeat $x$.

Second, the axiom of \textit{positive involvement} (see, e.g., \citealt{Saari1995}, \citealt{Perez2001}, \citealt{HP2021PI}) states that if $x$ wins in a profile $\mathbf{R}$, and $\mathbf{R}'$ is obtained from $\mathbf{R}$  by the addition of a voter who ranks $x$ strictly above all other candidates in $\mathbf{R}'$, then $x$ should still win in $\mathbf{R}'$. As Perez \citeyearpar[p.~605]{Perez2001} remarks, this seems to be ``the minimum to require concerning the coherence in the winning set when new voters are added.''  In the setting of this paper with a fixed electorate $V$, we cannot actually add a voter from $\mathbf{R}$ to $\mathbf{R}'$, but we can model this with a voter who casts a fully indifferent ballot in $\mathbf{R}$ changing her ballot in $\mathbf{R}'$ so that $x$ is ranked strictly above all other candidates.

\begin{definition}\label{PosInv}A CCR $f$ satisfies \textit{positive involvement} if for all profiles $\mathbf{R}$ and $\mathbf{R}'$ and $x\in X$, if $x$ wins in $\mathbf{R}$ according to $f$, and there is some $i\in V$ such that $\mathbf{R}_i= X\times X$, $xP(\mathbf{R}_i')y$ for all $y\in X\setminus\{x\}$, and $\mathbf{R}'_j=\mathbf{R}_j$ for all $j\in V\setminus\{i\}$, then $x$ wins in $\mathbf{R}'$ according to $f$.
\end{definition}

When positive involvement is violated, a voter ranking a candidate in first place causes that candidate to go from winning to losing, a highly counterintuitive result dubbed the Strong No Show Paradox (\citealt{Perez2001}). It is therefore striking that in \citealt{Perez2001}, only one Condorcet consistent voting method was known to satisfy positive involvement: the Minimax method (\citealt{Simpson1969}, \citealt{Kramer1977}). Minimax selects as winners in a profile $\mathbf{R}$ the candidates whose \textit{worst loss} is smallest, i.e., those $x$ who minimize the value of $\mathrm{max}\{\mathrm{Margin}_\mathbf{R}(y,x)\mid y\in X\}$, called the \textit{Minimax score} of $x$ (below we will redescribe Minimax as a CCR). More recently, Split Cycle was identified as a Condorcet consistent method satisfying positive involvement (\citealt{HP2020a}). 

Do Minimax and Split Cycle have something in common that explains their both satisfying positive involvement? It is not just that both are margin-based voting methods, since the same is true of Beat Path (\citealt{Schulze2011}) and Ranked Pairs, both of which violate positive involvement. The answer is rather that both Minimax and Split Cycle may be regarded as CCRs that are AS representable by the margin $\mathsf{Advantage}$ function and a \textit{continuous} $\mathsf{Standard}$ function, which implies positive involvement.

\begin{definition}\label{ContinuousDef} Suppose a CCR $f$ is AS representable using a standard function $\mathsf{Standard}$ as in Definition \ref{advstddef} where the relevant totally ordered group is $(\mathbb{Z},+,\leq)$. Then we say that $\mathsf{Standard}$ is \textit{continuous} if for any profiles $\mathbf{R}$ and $\mathbf{R}'$ and $i\in V$, if $\mathbf{R}_j=\mathbf{R}'_j$ for all $j\in V\setminus\{i\}$, then for any $x,y\in X$,
\[|\mathsf{Standard}(x,y,\mathbf{R}^{-x,y}) - \mathsf{Standard}(x,y,\mathbf{R}'^{-x,y}) |\leq 1.\]
That is, changing only one voter's ballot can change the standard by at most one.\end{definition}

\noindent Clearly the $\mathsf{Standard}$ in the AS representation of Split Cycle in Theorem \ref{SplitAS} is continuous: changing one voter can only change the weakest margin in a majority path by one.

\begin{proposition}\label{PosInvProp} Let $f$ be an AS representable CCR with $Margin$ as \textsf{Advantage} and a continuous $\mathsf{Standard}$ function as in Definition \ref{ContinuousDef}. Then $f$ satisfies positive involvement.\footnote{In fact, our proof shows that $f$ satisfies a stronger property that Ding et al.~\citeyearpar{Ding2022} call \textit{positive involvement in defeat}: if $y$ does not defeat $x$ in a profile $\mathbf{R}$, and $\mathbf{R}'$ is obtained from $\mathbf{R}$ by adding one new voter who ranks $x$ above $y$, then $y$ still does not defeat $x$ in $\mathbf{R}'$.}
\end{proposition}

\begin{proof} Suppose $\mathbf{R}$, $\mathbf{R}'$, $x$, and $i$ are as in Definition \ref{PosInv}. Further suppose that $x$ is not a winner in $\mathbf{R}'$, so there is some $y\in X$ such that $yP(f(\mathbf{R}'))x$. Then by our assumption about the AS representation of $f$, we have
$\mathrm{Margin}_{\mathbf{R}'}(y,x)> \mathsf{Standard}(y,x,\mathbf{R}'^{-x,y})$. 
Since $i$ is indifferent between $x$ and $y$ in $\mathbf{R}$ and ranks $x$ strictly above $y$ in $\mathbf{R}'$, we have $\mathrm{Margin}_\mathbf{R}(y,x)=\mathrm{Margin}_{\mathbf{R}'}(y,x)+1$. Then since $\mathsf{Standard}$ is continuous, it follows that $\mathrm{Margin}_\mathbf{R}(y,x)> \mathsf{Standard}(y,x,\mathbf{R}^{-x,y})$, so $yP(f(\mathbf{R}))x$. Hence $x$ is not a winner in $\mathbf{R}$.\end{proof}

Thus, by the observation before Proposition \ref{PosInvProp}, Split Cycle satisfies positive involvement. As for Minimax, we need to redescribe Minimax as an appropriate CCR.  There are various CCRs that yield the same winners as Minimax. For example,  there is the CCR that ranks candidates by Minimax scores, where lower scores are better. But note that this CCR may rank $x$ above $y$ even if a majority of voters prefer $y$ to $x$. Another CCR $f_{\mathrm{M}}$ that yields the same winners as Minimax but  ranks $x$ above $y$ only if a majority of voters prefer $x$ to $y$ can be defined as follows: $xf_{\mathrm{M}}(\mathbf{R})y$ if $x=y$ or $x P(\mathbf{R}) y$ where  
\[xP(\mathbf{R})y\mbox{ if } \mathrm{Margin}_\mathbf{R}(x,y)> \mathrm{min}\big(\big\{ \mathrm{max}\big(\{ \mathrm{Margin}_{\mathbf{R}}(w,z)\mid w\in X \}\big) \mid z\in X \big\}\big).\]
In other words, if the margin of $x$ over $y$ is greater than the smallest Minimax score of any candidate, then $x$ defeats $y$.\footnote{Note that the smallest Minimax score of any candidate is non-negative because for any $z$, $\mathrm{Margin}_{\mathbf{R}}(z,z)=0$  so $\mathrm{max}\big(\{ \mathrm{Margin}_{\mathbf{R}}(w,z)\mid w\in X \})\geq 0$.} Note that we can equivalently replace $\mathbf{R}$ on the right of $>$ with $\mathbf{R}^{-x,y}$, so with the $\mathsf{Standard}$ determined in this way, $f_{\mathrm{M}}$ is AS representable with  $Margin$ as $\mathsf{Advantage}$.\footnote{Also note that while $f_{\mathrm{M}}$ is not acyclic, there is always a nonempty set of winners according to $f_{\mathrm{M}}$, namely the set of Minimax winners.} Moreover, the smallest Minimax score of a candidate can change by at most one as a result of changing only one voter's ballot. Thus, the $\mathsf{Standard}$ function for $f_{\mathrm{M}}$ is continuous. Hence  $f_{\mathrm{M}}$ satisfies positive involvement by Proposition \ref{PosInvProp}. Then since $f_{\mathrm{M}}$ selects the same winners as Minimax,  Minimax satisfies positive involvement.

By contrast, the $\mathsf{Standard}$ functions used in the AS representations of the Gillies Covering and Ranked Pairs CCRs in Propositions \ref{GilliesAS} and \ref{RankedPairsAS} are not continuous. Indeed, since these CCRs violate positive involvement (\citealt{Perez2001}, \citealt{HP2021PI}), by Proposition~\ref{PosInvProp} there are no AS representations of these CCRs with $Margin$ as $\mathsf{Advantage}$ and with a continuous $\mathsf{Standard}$ function. 

Proposition \ref{PosInvProp} not only explains what Split Cycle and Minimax have in common in virtue of which they satisfy positive involvement but also provides guidance in the search for new CCRs that satisfy the axiom. That being said, there are other reasons an AS representable CCRs may satisfy positive involvement. Ding et al.~\citeyearpar{Ding2022} observe that unlike Gillies Covering, the Weighted Covering CCR (\citealt{Dutta1999}, \citealt{Fernandez2018})  satisfies positive involvement. Moreover, Weighted Covering can be AS represented in the same way as Gillies Covering in Proposition \ref{GilliesAS} only using the weighted covering relation instead of the covering relation, where $x$ covers $y$ in the weighted sense if for all $z\in X$, $\mathrm{Margin}_\mathbf{R}(x,z)\geq \mathrm{Margin}_\mathbf{R}(y,z)$. This AS representation uses $Margin$ as $\mathsf{Advantage}$ and a non-continuous $\mathsf{Standard}$; moreover,  it is unclear whether there is any AS representation of Weighted Covering using $Margin$ as $\mathsf{Advantage}$ and a continuous $\mathsf{Standard}$.  

Adopting the AS representation of Weighted Covering analogous to Proposition \ref{GilliesAS}, the reason Weighted Covering satisfies positive involvement is that when a voter switches from a fully indifferent ballot in $\mathbf{R}$ to a ballot with $x$ in first place in $\mathbf{R}'$, $\mathsf{Standard}(y,x,\mathbf{R}'^{-y,x})$ cannot decrease at all from $\mathsf{Standard}(y,x,\mathbf{R}^{-y,x})$ (unlike when the $\mathsf{Standard}$ is set with the unweighted covering relation), as shown in \citealt[Prop.~3.16]{Ding2022}. Then since the margin of $y$ over $x$ decreases by $1$, if $y$ did not defeat $x$  in $\mathbf{R}$, then $y$  still does not defeat $x$ in $\mathbf{R}'$, so positive involvement holds. Thus, despite being non-continuous in general, the  $\mathsf{Standard}$ for Weighted Covering has the right behavior in the specific case of a voter switching from indifference to ranking $x$ on top. By contrast, for Split Cycle and Minimax, $\mathsf{Standard}(y,x,\mathbf{R}'^{-y,x})$ \textit{can} decrease from $\mathsf{Standard}(y,x,\mathbf{R}^{-y,x})$ when a voter switches from indifference in $\mathbf{R}$ to ranking $x$ on top in $\mathbf{R}'$, but only by $1$ thanks to continuity, and then since the margin of $y$ over $x$ decreases by $1$ as well,  $y$ still does not defeat $x$. Thus, we see that Split Cycle and Minimax satisfy positive involvement for a different reason than Weighted Covering does.

Stepping back from the specific case of positive involvement, this discussion suggests that we should not stop at studying how AS representability relates to other axioms, which CCRs are AS rationzaliable, etc. We can gain additional insight by studying how the properties of a CCR relate to the properties of  $\mathsf{Advantage}$ and $\mathsf{Standard}$ functions that represent it. Such correspondences also open up a new form of argument for axioms on CCRs: argue that the $\mathsf{Advantage}$ and $\mathsf{Standard}$ functions ought to satisfy certain properties and then show that being AS representable with such functions  entails satisfying the relevant axiom.

\section{Conclusion}\label{Conclusion}

In one sense, there is no escape from Arrow's Impossibility Theorem, just as there is no escape from any other mathematical result. But in another sense, we have argued that there is an escape: there is a well-motivated way to weaken Arrow's assumptions, without drastically violating the idea behind Arrow's IIA, that opens up the possibility of appealing CCRs without dictators, vetoers, etc. The insight and much of the intuition behind IIA can be captured with the weaker axiom of advantage-standard representability: while the intrinsic advantage of one candidate over another should depend only on the voters' preferences between those two candidates, the standard required for strict social preference might be context dependent. Furthermore, Arrow's argument for the completeness of social preference failed to consider maximal element choice, which allows for social choice from any choice set even without a complete underlying ranking. Having motivated our relaxation of Arrow's IIA and social rationality assumptions, we provided three examples of CCRs that satisfy advantage-standard representability, anonymity, neutrality, Pareto, and have no vetoers---two of which are acyclic and one of which is transitive. Thus, we conclude that there is reason to be optimistic in the face of Arrow's Impossibility Theorem. While the theorem shows that we cannot accept the letter of Arrow's axioms, we believe that the Advantage-Standard model shows that we can accept much of their spirit.

\subsection*{Acknowledgements}

For helpful feedback, we thank Yifeng Ding, Eric Pacuit, the two anonymous referees, and the audiences at the TEAM 2020 conference at Princeton University and the 2022 Meeting of the Society for Social Choice and Welfare, where this work was presented.

\appendix

\section{Appendix}

\subsection{Totally ordered groups}\label{GroupAppendix}

We recall that a \textit{group} is a pair $(G,\circ)$ where $G$ is a nonempty set and $\circ$ is an associative binary operation on $G$ for which there is a unique $e\in G$ such that $e\circ a=a\circ e=a$ for all $a\in G$, and for each $a\in G$, there is a unique $a^{-1}\in G$ such that $a\circ a^{-1}=a^{-1}\circ a = e$. The element $e$ is the \textit{identity element} of the group,  and $a^{-1}$ is the \textit{inverse of} $a$. A \textit{totally ordered group} is a triple $(G,\circ,\leq)$ where $(G,\circ)$ is a group, $\leq$ is a binary relation on $G$ that is transitive, complete, and antisymmetric (if $a\leq b$ and $b\leq a$, then $a=b$), and for all $a,b,c\in G$, if $a\leq b$, then $c\circ a\leq c\circ b$ and $a\circ c\leq b\circ c$. Let $a<b$ mean that $a\leq b$ and $b\not \leq a$. We use a basic fact about the relation between the identity, inverse, and order relation.

\begin{lemma}\label{GroupLem} For any totally ordered group $(G,\circ,\leq)$ with identity element $e$ and $a\in G$, if $e< a$ then $a^{-1}<e$.
\end{lemma}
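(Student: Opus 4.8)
The plan is to work directly from the axioms of a totally ordered group as stated in Appendix \ref{GroupAppendix}. Assume $e < a$; I want to show $a^{-1} < e$. The natural approach is to take the given inequality $e \leq a$ (which holds since $e < a$) and apply the order-compatibility axiom by multiplying both sides by the same element. Specifically, the axiom says that if $a_1 \leq a_2$, then $c \circ a_1 \leq c \circ a_2$ and $a_1 \circ c \leq a_2 \circ c$ for any $c$. The trick is to choose the multiplier to be $a^{-1}$.

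First I would record that $e < a$ gives $e \leq a$. Applying the left-multiplication part of the order-compatibility axiom with $c = a^{-1}$ to the inequality $e \leq a$, I obtain $a^{-1} \circ e \leq a^{-1} \circ a$. Now I simplify both sides using the group axioms: $a^{-1} \circ e = a^{-1}$ (since $e$ is the identity) and $a^{-1} \circ a = e$ (by definition of the inverse). Hence $a^{-1} \leq e$. To upgrade this weak inequality to the strict $a^{-1} < e$, I argue by contradiction or by antisymmetry: if $a^{-1} = e$, then multiplying on the left by $a$ gives $a \circ a^{-1} = a \circ e$, i.e., $e = a$, contradicting $e < a$ (which includes $a \neq e$). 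Therefore $a^{-1} \neq e$, and combined with $a^{-1} \leq e$ this yields $a^{-1} < e$.

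I do not expect any real obstacle here, since the result is an elementary consequence of the group and order axioms; the only thing to be careful about is correctly reducing the products via the identity and inverse laws and then handling the strictness separately (the order-compatibility axiom only directly delivers the non-strict inequality, so the $a^{-1} \neq e$ step must be argued explicitly using that $a \neq e$).
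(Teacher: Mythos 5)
Your proof is correct and uses essentially the same idea as the paper: translate the inequality by a group element using the order-compatibility axiom and simplify via the identity and inverse laws. The only cosmetic difference is that the paper argues contrapositively (from $e\leq a^{-1}$ it left-multiplies by $a$ to get $a\leq e$, letting completeness absorb the strictness issue), whereas you argue directly and handle strictness separately via antisymmetry --- both are fine.
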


\begin{proof} We show that if $a^{-1}\not <e$, then $e\not < a$. By completeness of $\leq$, this is equivalent to: if $e\leq a^{-1}$, then $a\leq e$. Indeed, from $e\leq a^{-1}$ we have $a\circ e\leq a\circ a^{-1}$ and hence $a\leq e$.\end{proof}

\subsection{Weak IIA and orderability}

\converse*
\begin{proof} Define a CCR $f$ as follows:
\begin{itemize}
      \item[(a)] if there is an $i\in V$ and some enumeration $z_1,\dots,z_n$ of $X$  such that $\mathbf{R}$ is of the form shown on the left below, then the social relation is as shown on the right below:
          \begin{center}
    \begin{tabular}{c|c}
    $V\setminus\{i\}$ & $\{i\}$ \\
    \hline
             $z_{n-1}$ & $z_n$\\
         $z_n$ & $z_{n-1}$\\
        $z_1$ & $z_1$  \\
         \vdots & \vdots \\
         $z_{n-2}$ & $z_{n-2}$ 
    \end{tabular}\qquad \begin{tabular}{c}
    $f(\mathbf{R})$\\
    \hline
             $z_{n-1}$\\
         $z_n$\\
        $z_1$  \\
         \vdots \\
         $z_{n-2}$ \\
    \end{tabular}
    \end{center}
      \item[(b)] if there is a partition of $V$ into $C_1$ and $C_2$ with $|C_1|>|C_2|>1$ and some enumeration $z_1,\dots,z_n$ of $X$ such that $\mathbf{R}$ is of the form shown on the left below, then the social relation is as shown on the right below:
          \begin{center}
    \begin{tabular}{c|c}
    $C_1$ & $C_2$ \\
    \hline
        $z_1$ & $z_1$  \\
         \vdots & \vdots \\
         $z_{n-2}$ & $z_{n-2}$ \\
         $z_{n-1}$ & $z_n$\\
         $z_n$ & $z_{n-1}$\\
    \end{tabular}\qquad \begin{tabular}{c}
    $f(\mathbf{R})$\\
    \hline
        $z_1$  \\
         \vdots \\
         $z_{n-2}$ \\
         $z_{n-1}$\\
         $z_n$\\
    \end{tabular}
    \end{center}

    \item[(c)] otherwise $x f(\mathbf{R}) y$ if and only if $x\mathbf{R}_iy$ for all $i\in V$.
\end{itemize}

By its definition, $f$ is anonymous and neutral. Since $f(\mathbf{R})$ is a linear order in cases (a) and (b), and since the Pareto CCR used in case (c) is transitive, $f$ is transitive. Obviously $f$ also satisfies Pareto. Next observe that $xP(f(\mathbf{R}))y$ only if a majority of voters rank $x$ over $y$. It follows that $f$ satisfies weak IIA.\footnote{Note that  $xI(f(\mathbf{R}))y$ only if all voters are indifferent between $x$ and $y$. Thus $f$ satisfies not only weak IIA but PN-weak IIA (recall Definition \ref{PNPIdef}).} However, we claim that $f$ is not orderable. Suppose toward a contradiction that $f$ is orderable, so that for each $x,y$, there is a transitive and complete relation $\leqslant_{x,y}$ on $\mathcal{P}^+(x,y)$ satisfying the condition in Definition \ref{orderable}. Partition $V$ into $S_1$ and $S_2$ such that  $|S_2|=2$. Let $\mathbf{R}^1$ and $\mathbf{R}^2$ be profiles of the following forms for some $j\in S_2$:
\begin{center}
 \begin{tabular}{c|c}
    $S_1$ & $S_2$ \\
    \hline
    $x$ & $y$ \\
    $y$ & $x$ \\
        $z_1$ & $z_1$  \\
         \vdots & \vdots \\
         $z_{n-2}$ & $z_{n-2}$ \\
    \end{tabular}\qquad    \begin{tabular}{c|c}
    $S_1\cup \{j\}$ & $S_2\setminus\{j\}$ \\
    \hline
    $x$ & $y$ \\
    $y$ & $x$ \\
        $z_1$ & $z_1$  \\
         \vdots & \vdots \\
         $z_{n-2}$ & $z_{n-2}$ \\
    \end{tabular}
    \end{center}
    Then $x N(f(\mathbf{R}^1))y$ by (c) but $x P(f(\mathbf{R}^2))y$ by (a).  Now let $\mathbf{R}^3$ and $\mathbf{R}^4$ be the following profiles:
    \begin{center}
         \begin{tabular}{c|c}
    $S_1$ & $S_2$ \\
    \hline
        $z_1$ & $z_1$  \\
         \vdots & \vdots \\
         $z_{n-2}$ & $z_{n-2}$ \\
         $x$ & $y$\\
         $y$ & $x$\\
    \end{tabular}\qquad         \begin{tabular}{c|c}
    $S_1\cup \{j\}$ & $S_2\setminus\{j\}$ \\
    \hline
        $z_1$ & $z_1$  \\
         \vdots & \vdots \\
         $z_{n-2}$ & $z_{n-2}$ \\
         $x$ & $y$\\
         $y$ & $x$\\
    \end{tabular}
    \end{center}
    Then $xP(f(\mathbf{R}^3))y$ by (b) but $xN(f(\mathbf{R}^4))y$ by (c).
    
    Since $\mathbf{R}^{1{-x,y}}=\mathbf{R}^{2{-x,y}}$ and $xP(f(\mathbf{R}^2)y$ while $xN(f(\mathbf{R}^1))y$, orderability implies that
    \[\textit{ not }\mathbf{R}^2|_{\{x,y\}}\leqslant_{x,y}\mathbf{R}^1|_{\{x,y\}}.\]
    Similarly, since $\mathbf{R}^{3{-x,y}}=\mathbf{R}^{4{-x,y}}$ and $xP(f(\mathbf{R}^3)y$ while $xN(f(\mathbf{R}^4))y$, orderability implies that 
    \[\textit{ not } \mathbf{R}^3|_{\{x,y\}}\leqslant_{x,y}\mathbf{R}^4|_{\{x,y\}}.\]
   Since $\mathbf{R}^{3}|_{\{x,y\}}=\mathbf{R}^1|_{\{x,y\}}$ and $\mathbf{R}^{4}|_{\{x,y\}}=\mathbf{R}^2|_{\{x,y\}}$, it follows that
        \[\textit{ not } \mathbf{R}^1|_{\{x,y\}}\leqslant_{x,y}\mathbf{R}^2|_{\{x,y\}}\]
   which contradicts the completeness of $\leqslant_{x,y}$. 
   \end{proof}

\subsection{Characterization of AS representability}

\begin{lemma}\label{PNweakIIAtoAdvStd}
If $f$ is orderable and satisfies weak IIA, then $f$ is AS representable. 
\end{lemma}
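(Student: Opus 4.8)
The plan is to construct explicit \emph{integer}-valued functions, working in the totally ordered group $(\mathbb{Z},+,\leq)$ with identity $e=0$; this simultaneously yields the stronger conclusion recorded in the Corollary. The guiding idea is that orderability lets me linearly rank the restricted profiles that can produce a strict social preference for $x$ over $y$, while weak IIA guarantees that no single restricted profile can produce such a preference in favor of \emph{both} $x$ over $y$ and $y$ over $x$. This is what lets me attach a consistent \emph{sign} to each restricted profile, as demanded by~(\ref{minus}).

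First I fix an ordered pair $(x,y)$ with $x\neq y$ (for $x=y$ I set both functions to $0$, so that $0>0$ fails as required). Since $X$ and $V$ are finite, $\mathcal{P}^+(x,y)$ is finite, so the total preorder $\leqslant_{x,y}$ furnished by orderability admits a rank function $r_{x,y}:\mathcal{P}^+(x,y)\to\mathbb{Z}^+$ with $\mathbf{Q}\leqslant_{x,y}\mathbf{Q}'$ iff $r_{x,y}(\mathbf{Q})\leq r_{x,y}(\mathbf{Q}')$ (collapse the ties of $\leqslant_{x,y}$ and enumerate the resulting finite chain by $1,2,\dots$). I then set
\[\mathsf{Advantage}(x,y,\mathbf{Q})=\begin{cases} r_{x,y}(\mathbf{Q}) & \mbox{if }\mathbf{Q}\in\mathcal{P}^+(x,y),\\ -\,r_{y,x}(\mathbf{Q}) & \mbox{if }\mathbf{Q}\in\mathcal{P}^+(y,x),\\ 0 & \mbox{otherwise.}\end{cases}\]
By weak IIA, $\mathcal{P}^+(x,y)\cap\mathcal{P}^+(y,x)=\varnothing$: if some profile with restriction $\mathbf{Q}$ gives $xP(f(\mathbf{R}))y$, then no profile with that same restriction gives $yP(f(\mathbf{R}'))x$. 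Hence the three cases are exhaustive and mutually exclusive, and interchanging $x$ and $y$ negates the value, so (\ref{minus}) holds.

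For the standard, fix $(x,y)$ and a context $T$ (so $T=\mathbf{R}^{-x,y}$ for some $\mathbf{R}$), and let $W_T$ be the set of $\{x,y\}$-profiles $\mathbf{Q}$ admitting a profile $\mathbf{R}'$ with $\mathbf{R}'|_{\{x,y\}}=\mathbf{Q}$, $\mathbf{R}'^{-x,y}=T$, and $xP(f(\mathbf{R}'))y$; note $W_T\subseteq\mathcal{P}^+(x,y)$. I define $\mathsf{Standard}(x,y,T)=\min\{\mathsf{Advantage}(x,y,\mathbf{Q})\mid\mathbf{Q}\in W_T\}-1$ when $W_T\neq\varnothing$, and $\mathsf{Standard}(x,y,T)=\max\{r_{x,y}(\mathbf{Q})\mid\mathbf{Q}\in\mathcal{P}^+(x,y)\}$ (or $0$ if $\mathcal{P}^+(x,y)=\varnothing$) when $W_T=\varnothing$. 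Every element of $W_T$ has advantage at least $1$, so the first clause is $\geq 0$, and the second clause is nonnegative; thus (\ref{minimal}) holds. The empty-$W_T$ clause is deliberately chosen to be at least as large as \emph{every} value of $\mathsf{Advantage}(x,y,\cdot)$.

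It remains to verify (\ref{Piff}) for each $\mathbf{R}$, writing $\mathbf{Q}=\mathbf{R}|_{\{x,y\}}$ and $T=\mathbf{R}^{-x,y}$. If $xP(f(\mathbf{R}))y$, then $\mathbf{Q}\in W_T$ (witnessed by $\mathbf{R}$ itself), so $\mathsf{Advantage}(x,y,\mathbf{Q})\geq\min\{\mathsf{Advantage}(x,y,\mathbf{Q}')\mid\mathbf{Q}'\in W_T\}=\mathsf{Standard}(x,y,T)+1>\mathsf{Standard}(x,y,T)$. Conversely, suppose $\mathsf{Advantage}(x,y,\mathbf{Q})>\mathsf{Standard}(x,y,T)$. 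The empty-$W_T$ clause exceeds all advantages, so this forces $W_T\neq\varnothing$; let $\mathbf{Q}^\ast\in W_T$ attain the minimum, realized by a \emph{genuine} profile $\mathbf{R}^\ast$ with $\mathbf{R}^\ast|_{\{x,y\}}=\mathbf{Q}^\ast$, $\mathbf{R}^{\ast-x,y}=T$, and $xP(f(\mathbf{R}^\ast))y$. The inequality gives $\mathsf{Advantage}(x,y,\mathbf{Q})\geq\mathsf{Advantage}(x,y,\mathbf{Q}^\ast)\geq 1$, so $\mathbf{Q}\in\mathcal{P}^+(x,y)$ and $r_{x,y}(\mathbf{Q}^\ast)\leq r_{x,y}(\mathbf{Q})$, i.e.\ $\mathbf{Q}^\ast\leqslant_{x,y}\mathbf{Q}$. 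Applying orderability to $\mathbf{R}^\ast$ and $\mathbf{R}$, which share the context $T$, the strict preference $xP(f(\mathbf{R}^\ast))y$ transfers to $xP(f(\mathbf{R}))y$, completing the verification. The main obstacle, and the reason the converse carries the real weight, is that orderability may be invoked only for \emph{genuine} profiles: one cannot freely glue an arbitrary restricted profile onto an arbitrary context and expect a transitive and complete ballot. This is exactly why I route the argument through the minimizer $\mathbf{Q}^\ast\in W_T$, which comes equipped with an honest witnessing profile $\mathbf{R}^\ast$, rather than asserting outright that $W_T$ is upward closed throughout $\mathcal{P}^+(x,y)$. The second thing to monitor is the interlock of weak IIA with (\ref{minus}): weak IIA is precisely what renders the sign assignment in $\mathsf{Advantage}$ well-defined, and without it the case split above would be incoherent.
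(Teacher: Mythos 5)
Your proof is correct and takes essentially the same route as the paper's: the same rank-based integer-valued advantage function (with the sign made coherent by weak IIA's guarantee that $\mathcal{P}^+(x,y)\cap\mathcal{P}^+(y,x)=\varnothing$) and the same standard function (one less than the minimal rank yielding a strict preference in the given context, with a maximal default when no such rank exists). The only cosmetic difference is that you verify the right-to-left direction of (\ref{Piff}) directly via the minimizer and the orderability transfer, whereas the paper argues the contrapositive by cases on whether $yP(f(\mathbf{R}))x$, $xI(f(\mathbf{R}))y$, or $xN(f(\mathbf{R}))y$.
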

\begin{proof}

We build integer-valued $\mathsf{Advantage}$ and $\mathsf{Standard}$ functions satisfying $(\ref{minus})$ and $(\ref{minimal})$ that represent $f$ in the sense of (\ref{Piff}). For each $\left\langle x,y\right\rangle\in X^2$, there is a transitive and complete order $\leqslant_{x,y}$ on $\mathcal{P}^+(x,y)$ by orderability. For $\mathbf{Q},\mathbf{Q}'\in \mathcal{P}^+(x,y)$, define $\mathbf{Q}\sim \mathbf{Q}'$ if and only if $\mathbf{Q}\leqslant_{x,y}\mathbf{Q}'$ and $\mathbf{Q}'\leqslant_{x,y}\mathbf{Q}$. Then $\sim$ is an equivalence relation, and its equivalence classes---denote them $\mathcal{Q}_1,\ldots, \mathcal{Q}_n$---can be ordered such that $\mathbf{Q}\in \mathcal{Q}_i$ and $\mathbf{Q}'\in \mathcal{Q}_j$ with $i<j$ if and only if  $\mathbf{Q}\leqslant_{x,y}\mathbf{Q}'$ but \textit{not} $\mathbf{Q}'\leqslant_{x,y}\mathbf{Q}$. For each $i\in \{1,\ldots,n\}$ and $\mathbf{Q}\in \mathcal{Q}_i$, define $\mathsf{Advantage}(x,y,\mathbf{Q})=i$ and  $\mathsf{Advantage}(y,x,\mathbf{Q})=-i$. For every $\left\langle x,y\right\rangle\in X^2$ and $\{x,y\}$-profile $\mathbf{Q}$ not in $\mathcal{P}^+(x,y)\cup \mathcal{P}^+(y,x)$, set $\mathsf{Advantage}(x,y,\mathbf{Q})=\mathsf{Advantage}(y,x,\mathbf{Q})=0$. Note that $\mathcal{P}^+(x,y)\cap \mathcal{P}^+(y,x)=\varnothing$ by weak IIA, so the advantages assigned for $\left\langle x,y\right\rangle$ and $\left\langle y,x\right\rangle$ do not conflict. 

Next, for each profile $\mathbf{R}$ and $\langle x,y\rangle\in X^2$, we define $\mathsf{Standard}(x,y,\mathbf{R}^{-x,y})$. Find the least $j$, if there is one, such that for some profile $\mathbf{R}'$ with $\mathbf{R}'^{-x,y}=\mathbf{R}^{-x,y}$, we have ${\mathbf{R}'| _{x,y}\in\mathcal{Q}_j}$ and $xP(f(\mathbf{R}'))y$, and set $\mathsf{Standard}(x,y,\mathbf{R}^{-x,y})=j-1$. If there is no such $j$, set $\mathsf{Standard}(x,y,\mathbf{R}^{-x,y})=n$.

Note that the $\mathsf{Standard}$ function is non-negative and the $\mathsf{Advantage}$ function satisfies \[\mathsf{Advantage}(x,y,\mathbf{R}|_{\{x,y\}})=-\mathsf{Advantage}(y,x,\mathbf{R}|_{\{x,y\}}),\]
so $(\ref{minus})$ and $(\ref{minimal})$ are satisfied. 

Fix $\left\langle x,y\right\rangle\in X^2$ and a profile $\mathbf{R}$. Let $\mathcal{Q}_1,\ldots,\mathcal{Q}_n$ (resp.~$\mathcal{S}_1,\ldots,\mathcal{S}_m$) be the ordering of $\sim_{x,y}$ (resp.~$\sim_{y,x})$ equivalence classes of $\mathcal{P}^+(x,y)$ (resp.~$\mathcal{P}^+(y,x)$)  as discussed above. If $xP(f(\mathbf{R}))y$, then there is some $i\in \{1,\ldots,n\}$ such that $\mathbf{R}|_{\{x,y\}}\in \mathcal{Q}_i$, so by construction, $\mathsf{Advantage}(x,y,\mathbf{R}|_{\{x,y\}})=i$. Let $k$ be the least $j$ such that there is some profile $\mathbf{R}'$ with $\mathbf{R}'^{-x,y}=\mathbf{R}^{-x,y}$, ${\mathbf{R}'| _{x,y}\in\mathcal{Q}_j}$, and $xP(f(\mathbf{R}'))y$. Then since $\mathbf{R}|_{\{x,y\}}\in \mathcal{Q}_i$ and $xP(f(\mathbf{R}))y$, $k$ is at most $i$. Hence, by construction, $\mathsf{Standard}(x,y,\mathbf{R}^{-x,y})\leq i-1$. Thus,
 \[xP(f(\mathbf{R}))y \implies \mathsf{Advantage}(x,y,\mathbf{R}|_{\{x,y\}})>\mathsf{Standard}(x,y,\mathbf{R}^{-x,y}).\]
  
  Now assume \textit{not} $xP(f(\mathbf{R}))y$. If $yP(f(\mathbf{R}))x$, then there is some $i\in \{1,\ldots, m\}$ such that $\mathbf{R}|_{\{x,y\}}\in \mathcal{S}_i$, so $\mathsf{Advantage}(y,x,\mathbf{R}|_{\{x,y\}})=i$ and $\mathsf{Advantage}(x,y,\mathbf{R}|_{\{x,y\}})=-i<0$. Thus, 
   \[\mathsf{Advantage}(x,y,\mathbf{R}|_{\{x,y\}})\not> \mathsf{Standard}(x,y,\mathbf{R}^{-x,y})\geq 0.\]
If $xI(f(\mathbf{R}))y$ or $xN(f(\mathbf{R}))y$, then one of the following holds: 
    \begin{enumerate}
 \item  $\mathbf{R}|_{\{x,y\}}\notin \mathcal{P}^+(x,y)\cup \mathcal{P}^+(y,x)$,
 \item $\mathbf{R}|_{\{x,y\}}\in  \mathcal{P}^+(x,y)$, or
 \item  $\mathbf{R}|_{\{x,y\}}\in  \mathcal{P}^+(y,x)$.
 \end{enumerate}
  
  In the first case, $\mathsf{Advantage}(x,y,\mathbf{R}|_{\{x,y\}})=0$ by construction, so
    \[\mathsf{Advantage}(x,y,\mathbf{R}|_{\{x,y\}})\not> \mathsf{Standard}(x,y,\mathbf{R}^{-x,y})\geq 0.\]
  
  In the second case, there is some $i\in \{1,\ldots,n\}$  such that $\mathbf{R}|_{\{x,y\}}\in \mathcal{Q}_i$, and so $\mathsf{Advantage}(x,y,\mathbf{R}|_{\{x,y\}})=i$. We claim that \[\mathsf{Standard}(x,y,\mathbf{R}^{-x,y})\geq i.\]
    Indeed, if $\mathsf{Standard}(x,y,\mathbf{R}^{-x,y})=j\leq i-1$, then by construction there is a profile $\mathbf{R}'$ such that $\mathbf{R}'|_{\{x,y\}}\in\mathcal{Q}_{j+1}$, $\mathbf{R}'^{-x,y}=\mathbf{R}^{-x,y}$, and $xP(f(\mathbf{R}'))y$. Since $j+1\leq i$, we have $\mathbf{R}'|_{\{x,y\}}\leqslant_{x,y}\mathbf{R}|_{\{x,y\}}$, and so 
    $xP(f(\mathbf{R}'))y$ implies $xP(f(\mathbf{R}))y$. This contradicts our initial assumption that $xI(f(\mathbf{R}))y$ or $xN(f(\mathbf{R}))y$. Thus,
      \[i=\mathsf{Advantage}(x,y,\mathbf{R}|_{\{x,y\}})\not> \mathsf{Standard}(x,y,\mathbf{R}^{-x,y})\geq i.\]

  In the third case,  $\mathbf{R}|_{\{x,y\}}\in \mathcal{S}_i$ for some $i\in \{1,\ldots,m\}$, so  $\mathsf{Advantage}(y,x,\mathbf{R}|_{\{x,y\}})=i$ and $\mathsf{Advantage}(x,y,\mathbf{R}|_{\{x,y\}})=-i$. Thus,
     \[-i=\mathsf{Advantage}(x,y,\mathbf{R}|_{\{x,y\}})\not> \mathsf{Standard}(x,y,\mathbf{R}^{-x,y})\geq 0.\]
    We conclude that for any $\left \langle x,y\right\rangle\in X^2$ and profile $\mathbf{R}$, 
    \[xP(f(\mathbf{R}))y \iff \mathsf{Advantage}(x,y,\mathbf{R}|_{\{x,y\}})> \mathsf{Standard}(x,y,\mathbf{R}^{-x,y}),\]
    and hence  $f$ is AS representable.
    \end{proof}

\begin{lemma}\label{AdvStdtoorderable} If $f$ is AS representable, then $f$ is orderable.
\end{lemma}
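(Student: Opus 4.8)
The plan is to read the required context-independent ordering directly off the $\mathsf{Advantage}$ function that AS rationalizability already supplies. So I would begin by fixing functions $\mathsf{Advantage}\colon D_A\to G$ and $\mathsf{Standard}\colon D_S\to G$, valued in a totally ordered group $(G,\circ,\leq)$, that witness AS rationalizability of $f$. Then, for each pair $x,y\in X$, I would define a relation $\leqslant$ on $\mathcal{P}^+(x,y)$ by pulling back the group order along the advantage map: declare $\mathbf{Q}\leqslant \mathbf{Q}'$ if and only if $\mathsf{Advantage}(x,y,\mathbf{Q})\leq \mathsf{Advantage}(x,y,\mathbf{Q}')$. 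This is well-defined because every $\mathbf{Q}\in\mathcal{P}^+(x,y)$ is a $\{x,y\}$-profile of the form $\mathbf{R}|_{\{x,y\}}$, hence a legitimate input to $\mathsf{Advantage}(x,y,\cdot)$.

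The first thing to check is that $\leqslant$ is a transitive and complete relation, as Definition \ref{orderable} demands. This is immediate: since $\leq$ is transitive and complete on $G$, so is any relation obtained as its pullback along a function. (The pullback need not be antisymmetric, but orderability does not require antisymmetry, so this causes no trouble.)

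The substantive part is verifying the implication clause of orderability. Here I would assume $\mathbf{R}^1|_{\{x,y\}}\leqslant \mathbf{R}^2|_{\{x,y\}}$, that $\mathbf{R}^{1-x,y}=\mathbf{R}^{2-x,y}$, and that $xP(f(\mathbf{R}^1))y$, and aim to deduce $xP(f(\mathbf{R}^2))y$. From the last assumption, (\ref{Piff}) gives $\mathsf{Advantage}(x,y,\mathbf{R}^1|_{\{x,y\}})>\mathsf{Standard}(x,y,\mathbf{R}^{1-x,y})$. The definition of $\leqslant$ yields $\mathsf{Advantage}(x,y,\mathbf{R}^1|_{\{x,y\}})\leq \mathsf{Advantage}(x,y,\mathbf{R}^2|_{\{x,y\}})$, while the hypothesis $\mathbf{R}^{1-x,y}=\mathbf{R}^{2-x,y}$ forces $\mathsf{Standard}(x,y,\mathbf{R}^{1-x,y})=\mathsf{Standard}(x,y,\mathbf{R}^{2-x,y})$, since $\mathsf{Standard}$ is a function of its arguments. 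Chaining these gives $\mathsf{Advantage}(x,y,\mathbf{R}^2|_{\{x,y\}})>\mathsf{Standard}(x,y,\mathbf{R}^{2-x,y})$, and a second application of (\ref{Piff}) delivers $xP(f(\mathbf{R}^2))y$, as desired.

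I do not expect a genuine obstacle here: the argument is a straightforward transfer of the total order on $G$ through the advantage map, and the crucial numerical chaining relies only on the fact that the standard is pinned down once the context $\mathbf{R}^{-x,y}$ is fixed. The one point deserving care is bookkeeping about the domain of $\leqslant$: its elements are $\{x,y\}$-profiles, so one must confirm that $\mathsf{Advantage}(x,y,\cdot)$ is being applied to inputs genuinely lying in $D_A$, which is automatic from the definition of $\mathcal{P}^+(x,y)$. Together with Lemma \ref{PNweakIIAtoAdvStd} and Proposition \ref{PIweakIIAlemma}, this completes the equivalence in Theorem \ref{characterization}.
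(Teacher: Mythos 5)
Your proposal is correct and follows essentially the same route as the paper's proof: both pull back the total order on $G$ along the map $\mathbf{Q}\mapsto\mathsf{Advantage}(x,y,\mathbf{Q})$ to order $\mathcal{P}^+(x,y)$, and both verify the implication clause by chaining the advantage inequality with the equality of standards forced by $\mathbf{R}^{1-x,y}=\mathbf{R}^{2-x,y}$. No gaps.
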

\begin{proof}
Let $f$ be AS representable with \textsf{Advantage} and \textsf{Standard} functions taking values in $(G,\circ,\leq)$. Fix $x,y\in X$. For $\mathbf{Q}\in \mathcal{P}^+(x,y)$, let $\varphi(\mathbf{Q})=\mathsf{Advantage}(x,y,\mathbf{Q})$. If $\varphi(\mathbf{Q})\leq \varphi(\mathbf{Q}')$, this implies that for any profiles $\mathbf{R}^1,\mathbf{R}^2$ with $\mathbf{R}^1|_{\{x,y\}}=\mathbf{Q}$, $\mathbf{R}^2|_{\{x,y\}}=\mathbf{Q}'$, and $\mathbf{R}^{1-x,y}=\mathbf{R}^{2-x,y}$, if $\mathsf{Advantage}(x,y,\mathbf{Q})>\mathsf{Standard}(x,y,\mathbf{R}^{1-x,y})$, then also $\mathsf{Advantage}(x,y,\mathbf{Q}')>\mathsf{Standard}(x,y,\mathbf{R}^{2-x,y})$, since \[\mathsf{Standard}(x,y,\mathbf{R}^{1-x,y})=\mathsf{Standard}(x,y,\mathbf{R}^{2-x,y}).\]
As $f$ is AS representable, it follows that 
\begin{align}
xP(f(\mathbf{R}^1))y&\implies \mathsf{Advantage}(x,y,\mathbf{Q})>\mathsf{Standard}(x,y,\mathbf{R}^{1-x,y})\nonumber\\
&\implies\mathsf{Advantage}(x,y,\mathbf{Q}')>\mathsf{Standard}(x,y,\mathbf{R}^{2-x,y})\nonumber \\
&\implies xP(f(\mathbf{R}^2))y\label{PtoP}.
\end{align}
Define a transitive and complete ordering $\leqslant_{x,y}$ on $\mathcal{P}^+(x,y)$ by $\mathbf{Q}\leqslant_{x,y}\mathbf{Q}'$ if and only if $\varphi(\mathbf{Q})\leq \varphi(\mathbf{Q}')$. By (\ref{PtoP}), we have given an ordering of the kind required for $f$ to be orderable. Since $x,y\in X$ were arbitrary, we are done.
\end{proof}
Putting together Proposition \ref{PIweakIIAlemma} and Lemmas \ref{PNweakIIAtoAdvStd} and \ref{AdvStdtoorderable}, we obtain our characterization of AS representability.

\characterization*

\subsection{AS representable CCRs}

\GilliesTrans*

\begin{proof} Transitivity of $f_{cov}(\mathbf{R})$ is equivalent to the conjunction of transitivity of $P_{cov}(\mathbf{R})$, transitivity of $I_{cov}(\mathbf{R})$, and the  following:
\begin{itemize}
    \item IP-transitivity: $xI_{cov}(\mathbf{R})y$ and $yP_{cov}(\mathbf{R})z$, then $xP_{cov}(\mathbf{R})z$;
    \item PI-transitivity: $xP_{cov}(\mathbf{R})y$ and $yI_{cov}(\mathbf{R})z$, then $xP_{cov}(\mathbf{R})z$.
\end{itemize}
Transitivity of $P_{cov}(\mathbf{R})$ is well known and easy to check. Transitivity of $I_{cov}(\mathbf{R})$ is also clear. For IP-transitivity, suppose $xI_{cov}(\mathbf{R})y$ and $yP_{cov}(\mathbf{R})z$. Since  $yP_{cov}(\mathbf{R})z$, we have ${y\succ_\mathbf{R}z}$, which with $xI_{cov}(\mathbf{R})y$ implies $x\succ_\mathbf{R}z$. Now suppose $v\succ_\mathbf{R}x$. Then since $xI_{cov}(\mathbf{R})y$, we have $v\succ_\mathbf{R}y$. Then since $yP_{cov}(\mathbf{R})z$, we have $v\succ_\mathbf{R}z$. Thus, $xP_{cov}(\mathbf{R})z$. For {PI-transitivity}, the argument is similar.\end{proof}

\GilliesAS*
\begin{proof}
$\mathsf{Advantage}$ and $\mathsf{Standard}$ take values in $\mathbb{Z}$ with the usual addition operation and ordering. It is clear that (\ref{minus}) and (\ref{minimal}) of Definition \ref{advstddef} hold. For $(\ref{Piff})$,  assume $xP(f_{cov}(\mathbf{R}))y$. Then $\mathsf{Advantage}(x,y,\mathbf{R}|_{\{x,y\}})>0$ and  $\mathsf{Standard}(x,y,\mathbf{R}^{-x,y})=0$. Assume \textit{not} $xP(f_{cov}(\mathbf{R}))y$. Then either $x\not\succ_{\mathbf{R}}y$ or $x\succ_{\mathbf{R}}y$ but there is some ${v\in X\setminus \{x,y\}}$ such that $v\succ_{\mathbf{R}}x$ and $v\not\succ_{\mathbf{R}}y$. In the first case, we have $\mathsf{Advantage}(x,y,\mathbf{R}|_{\{x,y\}})\leq 0$ and hence $\mathsf{Advantage}(x,y,\mathbf{R}|_{\{x,y\}})\leq \mathsf{Standard}(x,y,\mathbf{R}^{-x,y})$. In the second case, we have $\mathsf{Standard}(x,y,\mathbf{R}^{-x,y})= |V|$ and hence $\mathsf{Advantage}(x,y,\mathbf{R}|_{\{x,y\}})\leq \mathsf{Standard}(x,y,\mathbf{R}^{-x,y})$. Thus, $(\ref{Piff})$ holds.
\end{proof}

\GilliesPN*

\begin{proof}
Assume $\mathbf{R}|_{\{x,y\}}=\mathbf{R}'|_{\{x,y\}}$ and $xP(f_{cov}(\mathbf{R}))y$. Then $x\succ_{\mathbf{R}}y$ and for all $v\in X$, $v\succ_{\mathbf{R}}x$ implies $v\succ_{\mathbf{R}}y$. Note that \textit{not} $yP(f_{cov}(\mathbf{R}'))x$ since  $x\succ_{\mathbf{R}'}y$. Furthermore, \textit{not} $xI(f_{cov}(\mathbf{R}'))y$ since $x\succ_{\mathbf{R}'}y$ but it is not the case that $x\succ_{\mathbf{R}}x$. Thus, either $xP(f_{cov}(\mathbf{R}'))y$ or $xN(f_{cov}(\mathbf{R}'))y$. 
\end{proof}

\RankedPairsAS*

\begin{proof}
Let $f$ be a Ranked Pairs CCR. $\mathsf{Advantage}$ and $\mathsf{Standard}$ take values in $\mathbb{Z}$ with the usual addition operation and ordering. Clearly, condition $(\ref{minus})$ of Definition \ref{advstddef} holds. As for (\ref{minimal}), let $\mathcal{M}^{x,y}_k(\mathbf{R})$ denote $\mathcal{M}(\mathbf{R}^{-x,y})+x\overset{k}{\to}y$ and note that for any $T\in \mathcal{L}(X^2\setminus \Delta_X)$, 
$\mathbb{RP}(\mathcal{M}^{x,y}_k(\mathbf{R}),T)$ only contains edges with positive weights by definition. Thus,  $\mathsf{Standard}(x,y,\mathbf{R}^{-x,y})\geq 0$ for all $(x,y,\mathbf{R}^{-x,y})\in D_S$,  so (\ref{minimal}) holds. For (\ref{Piff}), assume $xP(f(\mathbf{R}))y$ so that ${\left\langle x,y\right\rangle \in \mathbb{RP}(\mathcal{M}(\mathbf{R}))}$. Then \[\mathsf{Advantage}(x,y,\mathbf{R}|_{\{x,y\}})-1\in \{k-1 \mid \langle x,y\rangle\in \mathbb{RP}(\mathcal{M}^{x,y}_k(\mathbf{R}))\},\]
since where $k=\mathsf{Advantage}(x,y,\mathbf{R}|_{\{x,y\}})$, we have $\mathcal{M}(\mathbf{R})=\mathcal{M}^{x,y}_k(\mathbf{R})$. Thus, we conclude that $\mathsf{Advantage}(x,y,\mathbf{R}|_{\{x,y\}})>\mathsf{Standard}(x,y,\mathbf{R}^{-x,y})$.

Assume \textit{not} $xP(f(\mathbf{R}))y$ and so  $\left\langle x,y\right\rangle\notin \mathbb{RP}(\mathcal{M}(\mathbf{R}))$. First note that 
\[\mathsf{Advantage}(x,y,\mathbf{R}|_{\{x,y\}})-1\notin \{k-1 \mid \langle x,y\rangle\in \mathbb{RP}(\mathcal{M}_k^{x,y}(\mathbf{R}))\},\]
since where $k=\mathsf{Advantage}(x,y,\mathbf{R}|_{\{x,y\}})$, we have $\mathcal{M}(\mathbf{R})=\mathcal{M}^{x,y}_k(\mathbf{R})$. We claim it follows that $j\notin \{k-1 \mid \langle x,y\rangle\in \mathbb{RP}(\mathcal{M}^{x,y}_k(\mathbf{R}))\}$ for all $j<\mathsf{Advantage}(x,y,\mathbf{R}|_{\{x,y\}})-1$, which would complete the proof. It suffices to show that for any $j<\mathsf{Advantage}(x,y,\mathbf{R}|_{\{x,y\}})-1$, if $\left\langle x,y\right\rangle \in \mathbb{RP}(\mathcal{M}^{x,y}_j(\mathbf{R}))$, then $\left\langle x,y\right\rangle \in\mathbb{RP}(\mathcal{M}^{x,y}_{j+1}(\mathbf{R}))$. To see this, fix $T\in \mathcal{L}(X^2\setminus \Delta_X)$ and note that since $\left\langle x,y\right\rangle \in \mathbb{RP}(\mathcal{M}^{x,y}_j(\mathbf{R}),T)$, there is some $n$ such $\left\langle x,y\right\rangle \in \mathbb{RP}(\mathcal{M}^{x,y}_j(\mathbf{R}),T)_{n+1}\setminus \mathbb{RP}(\mathcal{M}^{x,y}_j(\mathbf{R}),T)_n$, which means that  $\mathbb{RP}(\mathcal{M}^{x,y}_j(\mathbf{R}),T)_n\cup \{\left\langle x,y\right\rangle\}$ is acyclic. Considering now when the margin between $x$ and $y$ is $j+1$, $\left\langle x,y\right\rangle$ will be the maximum element of $E_{\mathcal{M}^{x,y}_{j+1}(\mathbf{R})}\setminus C_m$ according to $>_{\mathcal{M}^{x,y}_{j+1,T}}$ for some $m\leq n$. Since 
\[\mathbb{RP}(\mathcal{M}^{x,y}_{j+1}(\mathbf{R}),T)_m=\mathbb{RP}(\mathcal{M}^{x,y}_{j}(\mathbf{R}),T)_m\subseteq \mathbb{RP}(\mathcal{M}^{x,y}_{j}(\mathbf{R}),T)_n,\]
it follows that $\mathbb{RP}(\mathcal{M}^{x,y}_{j+1}(\mathbf{R}),T)_m\cup \{\left\langle x,y\right\rangle\}$ is acyclic as well. Note that the equality in the displayed line above follows from the fact that all the same edges are considered up to stage $m$ when the Ranked Pairs algorithm is applied to  $\mathcal{M}^{x,y}_{j+1}(\mathbf{R})$ and $\mathcal{M}^{x,y}_{j}(\mathbf{R})$ with $T$. Thus,
\[\left\langle x,y\right\rangle\in \mathbb{RP}(\mathcal{M}^{x,y}_{j+1}(\mathbf{R}),T)_{m+1}\subseteq \mathbb{RP}(\mathcal{M}^{x,y}_{j+1}(\mathbf{R}),T).\]
Since $T$ was arbitrary, we are done.
\end{proof}
\SplitAS*
\begin{proof}
Let $f$ be a Split Cycle CCR. $\mathsf{Advantage}$ and $\mathsf{Standard}$ take values in $\mathbb{Z}$ with the usual addition operation and ordering. Clearly, $(\ref{minus})$ of Definition \ref{advstddef} holds. As for (\ref{minimal}), note that $\mathrm{Margin}_{\mathbf{R}^{-x,y}}(z_i,z_{i+1})\geq 0$ for any $z_i,z_{i+1}$ in a majority path from $y$ to $x$ in $\mathbf{R}^{-x,y}$ since $z_i\succ_\mathbf{R}z_{i+1}$. It follows that $\mathrm{Split}\#_{\mathbf{R}^{-x,y}}(\rho)\geq 0$ for any majority path $\rho$ from $y$ to $x$ in $\mathbf{R}^{-x,y}$, and so $\mathsf{Standard}(x,y,\mathbf{R}^{-x,y})\geq 0$ for any $(x,y,\mathbf{R}^{-x,y})\in D_S$.

For (\ref{Piff}), if $xP(f(\mathbf{R}))y$, then $\mathrm{Margin}_\mathbf{R}(x,y)>\mathrm{Margin}_\mathbf{R}(y,x)$ and
\begin{equation}\mathrm{Margin}_\mathbf{R}(x,y)> \mbox{max}\{\mathrm{Split}\#_\mathbf{R}(\rho)\mid \rho\mbox{ a majority path from $y$ to $x$}\}\label{spliteq}
\end{equation}
by Lemma \ref{PathLem}, so $\mathsf{Advantage}(x,y,\mathbf{R}|_{\{x,y\}})>\mathsf{Standard}(x,y,\mathbf{R}^{-x,y})$ since the right-hand side of (\ref{spliteq}) depends only on $\mathbf{R}^{-x,y}$. If \textit{not} $xP(f(\mathbf{R}))y$ then either $\mathrm{Margin}_\mathbf{R}(y,x)\geq \mathrm{Margin}_\mathbf{R}(x,y)$ or $\mathrm{Margin}_\mathbf{R}(x,y)>\mathrm{Margin}_\mathbf{R}(y,x)$ but 
\begin{equation}\mathrm{Margin}_\mathbf{R}(x,y)\leq  \mbox{max}\{\mathrm{Split}\#_\mathbf{R}(\rho)\mid \rho\mbox{ a majority path from $y$ to $x$}\}.\label{spliteq2}\end{equation}
In the first case, it follows  that $\mathrm{Margin}_\mathbf{R}(x,y)\leq 0$ and so 
\[\mathsf{Advantage}(x,y,\mathbf{R}|_{\{x,y\}})\leq \mathsf{Standard}(x,y,\mathbf{R}^{-x,y}).\]
In the second case, since the right-hand side of (\ref{spliteq2}) depends only on $\mathbf{R}^{-x,y}$, it follows that $\mathsf{Advantage}(x,y,\mathbf{R}|_{\{x,y\}})\leq \mathsf{Standard}(x,y,\mathbf{R}^{-x,y})$.
\end{proof}

\subsection{Oligarchy under PI-weak IIA}

\OligarchyTheorem*
\begin{proof} Campbell and Kelly \citeyearpar{Campbell2000} show that if $f$ satisfies the hypothesis of the theorem, then (i) any coalition that is almost weakly decisive on some $x,y$ is weakly decisive\footnote{Recall that a coalition $C\subseteq V$ is  \textit{weakly decisive} (resp.~\textit{almost weakly decisive}) \textit{with respect to $f$} if for every profile $\mathbf{R}$ and $x,y\in X$, if $xP(\mathbf{R}_i)y$ for all $i\in C$ (resp.~and $yP(\mathbf{R}_j)x$ for all $j\in V\setminus C$), then $xf(\mathbf{R})y$.} (Lemmas 1 and 2 in \citealt{Campbell2000}, which do not use completeness of $f$), and (ii) the family of weakly decisive coalitions is closed under intersection (see the second paragraph of the proof of Lemma 3 in \citealt{Campbell2000}, which does not use completeness of $f$). The family of weakly decisive coalitions is also nonempty by Pareto. Let $C$ be the intersection of all weakly decisive coalitions, so $C$ is the smallest weakly decisive coalition. We show that each $i\in C$ is a vetoer. Let $x,y\in X$ and $\mathbf{R}$ be such that $xP(\mathbf{R}_i)y$. Assume toward a contradiction that $yP(f(\mathbf{R}))x$. Consider $\mathbf{R}'$ such that:
\[\mathbf{R}'|_{\{x,y\}}=\mathbf{R}|_{\{x,y\}};\]
\[xP(\mathbf{R}_i')aP(\mathbf{R}_i')y;\]
\[aP(\mathbf{R}_j')y\mbox{ and } aP(\mathbf{R}_j')x\mbox{ for all }j\in V\setminus \{i\}. \]
Then $aP(f(\mathbf{R}'))y$ by Pareto. If $aP(f(\mathbf{R}'))x$, then by PI-weak IIA we have $af(\mathbf{R}'')x$ for all profiles $\mathbf{R}''$ such that $aP(\mathbf{R}_j'')x$ for all $j\in V\setminus \{i\}$ and $xP(\mathbf{R}''_i)a$, i.e., $ V\setminus \{i\}$ is almost weakly decisive on $a,x$ and hence weakly decisive by (i), contradicting the choice of $C$. Thus, we have \textit{not} $aP(f(\mathbf{R}'))x$. So by transitivity, \textit{not} $yf(\mathbf{R}')x$. But this is a contradiction since from  $\mathbf{R}|_{\{x,y\}}=\mathbf{R}'|_{\{x,y\}}$ and $yP(f(\mathbf{R}))x$, we have $yf(\mathbf{R}')x$ by PI-weak IIA. So indeed \textit{not} $yP(f(\mathbf{R}))x$. Since $x,y,$ and $\mathbf{R}$ were arbitrary, $i$ is a vetoer. 
\end{proof}

\bibliographystyle{plainnat}
\bibliography{escape}

\end{document}